\theoremstyle{plain}
\newtheorem{thm}{\protect\theoremname}[section]
\theoremstyle{plain}
\newtheorem{lem}[thm]{\protect\lemmaname}
\theoremstyle{plain}
\newtheorem{prop}[thm]{\protect\propositionname}
\theoremstyle{remark}
\newtheorem{rem}[thm]{\protect\remarkname}
\theoremstyle{remark}
\newtheorem*{rem*}{\protect\remarkname}
\DeclareMathOperator{\Tr}{Tr}
\providecommand{\lemmaname}{Lemma}
\providecommand{\propositionname}{Proposition}
\providecommand{\remarkname}{Remark}
\providecommand{\theoremname}{Theorem}
\numberwithin{equation}{section} 
\numberwithin{thm}{section}
\begin{document}
	
	\title{Rate of Convergence towards Hartree Dynamics with Singular Interaction Potential}
	\author[1]{Li Chen\thanks{chen@math.uni-mannheim.de}}
	\author[2]{Ji Oon Lee\thanks{jioon.lee@kaist.edu}}
	\author[2]{Jinyeop Lee\thanks{jinyeoplee@kaist.ac.kr}}
	\affil[1]{Institut f\"ur Mathematik, Universit\"at Mannheim}
	\affil[2]{Department of Mathematical Sciences,  KAIST}
	\date{}
	
	\maketitle
	
	\begin{abstract}
		We consider a system of $N$-Bosons with a two-body interaction potential $V \in L^2(\mathbb{R}^3)+L^\infty(\mathbb{R}^3)$, possibly {more} singular than the Coulomb interaction. We show that, with $H^1(\mathbb{R}^3)$ initial data, the difference between the many-body Schr\"{o}dinger evolution in the mean-field regime and the corresponding Hartree dynamics is of order $1/N$, for any fixed time. The $N$-dependence of the bound is optimal.
	\end{abstract}
	
	\section{Introduction} \label{sec:intro}
	
	We consider the dynamics of $N$-Bosons in three dimensions interacting through a symmetric two-body potential. The system is described on the Hilbert space $L_{s}^{2}\left(\mathbb{R}^{3N}\right)$, the
	subspace of $L^{2}(\mathbb{R}^{3N})$, consisting of all symmetric functions with respect to the permutation of particles. Its dynamics is governed by a mean-field Hamiltonian of the form
	\begin{equation} \label{eq:N_body_Hamiltonian}
	H_{N}=\sum_{j=1}^{N}\left(-\Delta_{x_{j}}\right)+\frac{1}{N}\sum_{i<j}^{N}V\left(x_{i}-x_{j}\right)
	\end{equation}
	with the interaction potential $V$, which will be specified later. Note that the coupling constant $1/N$ in the interaction term ensures that the kinetic energy and the interaction potential energy are comparable in terms of $N$, hence the two energies compete with each other and generate the nontrivial effective equation for the macroscopic dynamics of the system.
	
	Suppose that the system is fully condensated, i.e., the wave function is given by
	\begin{equation}
	\psi_{N}\left(\mathbf{x}\right)=\prod_{j=1}^{N}\varphi\left(x_{j}\right) \qquad\text{for some }\varphi\in H^{1}\left(\mathbb{R}^{3}\right)\label{eq:Initial_Wave}
	\end{equation}
	with $\mathbf{x}=\left(x_{1},x_{2},{{\dots}},x_{N}\right)\in\mathbb{R}^{3N}$ and normalization $\left\Vert \varphi\right\Vert _{L^{2}\left(\mathbb{R}^{3}\right)}=1$. The time evolution $\psi_{N,t}$ of \eqref{eq:Initial_Wave} is described by time-dependent Schr\"odinger equation
	\begin{equation}
	\mathrm{i}\partial_{t}\psi_{N,t}=H_{N}\psi_{N,t}\quad\text{with }\psi_{N,0}=\psi_{N}.\label{eq:N_body_Sch_eq}
	\end{equation}
	The solution of \eqref{eq:N_body_Sch_eq} can also be written as $\psi_{N,t}=e^{-\mathrm{i} H_{N}t}\psi_{N}$.
	
	In the case of free evolution where $V=0$, it can be easily checked that $\psi_{N,t} = \prod_{j=1}^{N}\varphi_{t}\left(x_{j}\right)$ with $\varphi_{t} = e^{-i \Delta t} \varphi$. However, factorization cannot be preserved under the presence of the interaction, and we can only expect that the wave function is approximately factorized. Heuristically, if we assume the approximate factorization
	\begin{equation}
	\psi_{N,t}\simeq\prod_{j=1}^{N}\varphi_{t}\left(x_{j}\right)\quad\text{for large }N, \label{eq:Factorization_of_Wave}
	\end{equation}
	then the total potential experienced by the particle $x_1$ can be approximated by
	\begin{equation}
	\frac{1}{N} \sum_{j=2}^{N} \int_{\mathbb{R}^3} V(x_j - x_1) |\varphi_{t} (x_j)|^2 \mathrm{d} x_j \simeq (V*\left|\varphi_{t}\right|^{2})\left(x_1 \right).
	\end{equation}
	Thus, the evolution of the one-particle wave function $\varphi_{t}$ can be described approximately by the nonlinear Hartree equation
	\begin{equation}
	\mathrm{i}\partial_{t}\varphi_{t}=-\Delta\varphi_{t}+(V*\left|\varphi_{t}\right|^{2})\varphi_{t}\label{eq:Hartree_eq}
	\end{equation}
	{with initial data $\varphi_{t=0}=\varphi_0$.}
	
	To understand \eqref{eq:Factorization_of_Wave} mathematically, we first need to give a precise meaning to the approximation in it. While it seems to be natural to consider the convergence in $L^2$-norm, it turns out that the norm convergence does not hold directly and second order correction must be introduced to obtain a norm approximation \cite{Grillakis2010,Grillakis2011}. We instead compare the {two} sides of \eqref{eq:Factorization_of_Wave} by their marginal densities, or reduced density matrices. The density matrix $\gamma_{N,t}=\left|\psi_{N,t}\right\rangle \left\langle \psi_{N,t}\right|$ associated with $\psi_{N,t}$ is defined as the orthogonal projection onto $\psi_{N,t}$. The kernel of $\gamma_{N,t}$ is thus given by
	\[
	\gamma_{N,t}\left(\mathbf{x};\mathbf{x}'\right)=\psi_{N,t}\left(\mathbf{x}\right)\overline{\psi_{N,t}}\left(\mathbf{x}'\right).
	\]
	We define the $k$-particle marginal density through its kernel
	\begin{equation}
	\gamma_{N,t}^{\left(k\right)}\left(\mathbf{x}_{k};\mathbf{x}'_{k}\right)=\int \mathrm{d} \mathbf{x}_{N-k}\gamma_{N,t}\left(\mathbf{x}_{k},\mathbf{x}_{N-k};\mathbf{x}'_{k},\mathbf{x}_{N-k}\right).\label{eq:Kernel_of_Marginal_Density}
	\end{equation}
	Since $\left\Vert \psi_{N,t}\right\Vert _{L^{2}\left(\mathbb{R}^{3N}\right)}=1$, we can see {that} $\Tr\gamma_{N,t}^{\left(k\right)}=1$ for $1\leq k\leq N$ and for any $t\in\mathbb{R}$. In particular, {$\gamma^{(k)}_{N,t}$} is a trace class operator.
	
	For a large class of {interactions} $V$, in the large $N$ limit, the $k$-particle marginal density associated with $\psi_{N,t}$ converges to the $k$-particle marginal density associated with the factorized wave function $\varphi^{\otimes N}$, under the condition that $\varphi\in H^{1}(\mathbb{R}^{3})$. More precisely, for any fixed $t\in\mathbb{R}$,
	\begin{equation}
	\operatorname{Tr}\left|\gamma_{N,t}^{\left(k\right)}-\left|\varphi_{t}\right\rangle \left\langle \varphi_{t}\right|^{\otimes k}\right|\to 0\quad\text{as }N\to\infty.\label{eq:Tr_convergence}
	\end{equation}
	where $\varphi_{t}$ is a solution of the non-linear Hartree equation \eqref{eq:Hartree_eq}.
	The first rigorous proof of \eqref{eq:Tr_convergence} was obtained by Spohn \cite{Spohn1980} for a bounded interaction $V$ and it was later extended by Erd\H{o}s and Yau \cite{Erdos2001} to the Coulomb type interaction. The main technique in these proofs was the analysis of BBGKY hierarchy.
	
	A natural question arising from the study of \eqref{eq:Tr_convergence} is the rate of the convergence. The first explicit bound on the rate of convergence,
	\begin{equation}
	\operatorname{Tr}\left|\gamma_{N,t}^{\left(k\right)}-\left|\varphi_{t}\right\rangle \left\langle \varphi_{t}\right|^{\otimes k}\right| \leq \frac{C e^{Kt}}{\sqrt N}, \label{eq:rate_sqrt_N}
	\end{equation}
	was obtained by Rodnianski and Schlein \cite{Rodnianski2009}, where the constants $C$ and $K$ do not depend on $N$ and $t$. The proof is based on the coherent state approach introduced by Hepp \cite{Hepp1974} and extended by Ginibre and Velo \cite{Ginibre1979_1,Ginibre1979_2}. The $N$-dependence in \eqref{eq:rate_sqrt_N} is not optimal, and an optimal bound of $O(N^{-1})$ was obtained by Erd\H{o}s and Schlein \cite{erdHos2009quantum} for bounded interactions, using a method inspired by Lieb-Robinson bound. The optimal bound for Coulomb type interaction was proved in \cite{Chen2011a}, which extended the technique provided in \cite{Rodnianski2009}.
	
	The results on the convergence and its rate can further be extended to a system with a pair interaction {{more}} singular than the Coulomb potential. Pickl \cite{Pickl2011simple} introduced a new strategy using a functional that counts the relative number of particles not in the state $\varphi_t$. Based on \cite{Pickl2011simple}, Knowles and Pickl proved in \cite{knowles2010mean} that \eqref{eq:Tr_convergence} holds with the rate of convergence $O(N^{-1/2})$ for $V \in L^p (\mathbb{R}^3) + L^{\infty} (\mathbb{R}^3)$ with $p \geq 2$ (and the rate that deteriorates with $p$ for $p \leq 6/5$ with stronger assumption on the initial condition $\varphi$). The method in \cite{knowles2010mean}, which utilizes the projectors, is also applicable to the semi-relativistic case where the Laplacian in \eqref{eq:N_body_Hamiltonian} is replaced by $\sqrt{1-\Delta_{x_j}}$. For the convergence results and the optimal rate of convergence in the semi-relativistic case, we refer to \cite{Elgart2007,Michelangeli2012,Lee2013}.
	We remark that the rate of convergence towards cubic Nonlinear Schr\"odinger equation in the Gross-Pitaevskii regime was considered in \cite{Pickl2010,NielsOliveiraSchlein}, and recently, the rate of convergence $1/N$ is obtained in \cite{2017arXiv170205625B} in a different sense from the one used in this paper.
	
	In this paper, we prove the optimal rate of convergence in \eqref{eq:Tr_convergence} with the interaction more singular than the Coulomb case. The main result of this work is the following theorem:
	
	\begin{thm} \label{thm:Main_Theorem}
		Suppose that $V=V_2+V_\infty$ with $V_2\in L^2(\mathbb{R}^3)$ and $V_\infty\in L^\infty (\mathbb{R}^3)$. Let $\varphi_{t}$ be the solution of the Hartree equation \eqref{eq:Hartree_eq} with initial data $\varphi_{0}=\varphi\in H^{1}(\mathbb{R}^{3})$. Let $\psi_{N,t}=e^{-\mathrm{i}H_{N}t}\varphi^{\otimes N}$ and $\gamma_{N,t}^{\left(1\right)}$ be the one particle reduced density associated with $\psi_{N,t}$ as defined in \eqref{eq:Kernel_of_Marginal_Density}. Then there exist constants $C$ and $K$, depending only on $\left\Vert \varphi\right\Vert _{H^{1}\left(\mathbb{R}^{3}\right)}$, $\|V_2\|_{L^2(\mathbb{R}^3)}$, and $\|V_\infty\|_{L^\infty(\mathbb{R}^3)}$, such that
		\begin{equation}
		\operatorname{Tr}\left|\gamma_{N,t}^{(1)}-\left|\varphi_{t}\right\rangle \left\langle \varphi_{t}\right|\right|\leq\frac{C e^{K{t^{3/2}}}}{N}.
		\end{equation}
	\end{thm}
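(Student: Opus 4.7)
My plan is to adapt the coherent-state Fock-space strategy of Rodnianski and Schlein \cite{Rodnianski2009}, in the sharpened form developed for Coulomb interactions in \cite{Chen2011a}, to the more singular class $V\in L^2+L^\infty$.

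First, I would lift the problem to the bosonic Fock space $\mathcal{F}=\bigoplus_{n\ge 0}L^2_s(\mathbb{R}^{3n})$ and introduce the fluctuation dynamics
\[
\mathcal{U}_N(t;s)=W^{*}\!\bigl(\sqrt N\,\varphi_t\bigr)\,e^{-\mathrm{i}\mathcal{H}_N(t-s)}\,W\!\bigl(\sqrt N\,\varphi_s\bigr),
\]
where $W(f)=\exp(a^{*}(f)-a(f))$ is the Weyl operator and $\mathcal{H}_N$ is the second quantisation of $H_N$. Using the Hartree equation for $\varphi_t$, the generator of $\mathcal{U}_N$ decomposes as $\mathcal{L}_N(t)=\mathcal{L}_2(t)+N^{-1/2}\mathcal{L}_3(t)+N^{-1}\mathcal{L}_4$, a Bogoliubov-type quadratic, a cubic, and the original quartic interaction.

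Second, I would use the Fock-space identity of \cite{Rodnianski2009,Chen2011a} to express the trace-norm difference in terms of moments of the number operator $\mathcal{N}$: modulo the error from the fact that $\varphi^{\otimes N}$ is not exactly the coherent state $W(\sqrt N\varphi)\Omega$ (handled by the comparison argument of \cite{Chen2011a}),
\[
\operatorname{Tr}\bigl|\gamma^{(1)}_{N,t}-|\varphi_t\rangle\langle\varphi_t|\bigr|\;\le\;\frac{C}{N}\Bigl(\bigl\langle \Psi_t,(\mathcal{N}+1)^2\,\Psi_t\bigr\rangle+1\Bigr),
\]
with $\Psi_t=\mathcal{U}_N(t;0)\xi_0$ and $\xi_0$ a fixed vector of bounded $\mathcal{N}$- and kinetic-energy moments. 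Thus the theorem reduces to establishing
\[
\bigl\langle\xi_0,\mathcal{U}_N^{*}(t;0)(\mathcal{N}+1)^2\mathcal{U}_N(t;0)\xi_0\bigr\rangle\;\le\; Ce^{K t^{3/2}},
\]
uniformly in $N$.

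Third, I would prove this propagation bound by differentiating in $t$ and closing a Gr\"onwall estimate. The commutators of $(\mathcal{N}+1)^2$ with $\mathcal{L}_3$ and with the pairing part of $\mathcal{L}_2$, schematically $\int V(x-y)\varphi_t(x)\varphi_t(y)\,a^{*}_x a^{*}_y\,\mathrm{d} x\,\mathrm{d} y$, are the delicate terms, while $\mathcal{L}_4$ commutes with $\mathcal{N}$. In the Coulomb case the singularity of $V$ was absorbed into a kinetic factor via Hardy's inequality; for $V_2\in L^2$ Hardy is unavailable, so I would bound the relevant integral kernels in $L^2(\mathbb{R}^6)$ via
\[
\|V_2\,\varphi_t\otimes\varphi_t\|_{L^2(\mathbb{R}^6)}\;\le\;\|V_2\|_{L^2}\,\|\varphi_t\|_{L^4}^2,
\]
together with the Sobolev embedding $H^1(\mathbb{R}^3)\hookrightarrow L^4(\mathbb{R}^3)$ and the uniform control of $\|\varphi_t\|_{H^1}$ obtained from energy conservation and Gagliardo--Nirenberg along the Hartree flow.

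The main obstacle, then, is that these estimates force the kinetic-energy functional $\mathcal{K}=d\Gamma(1-\Delta)$ onto the right-hand side of the Gr\"onwall inequality, so one must propagate $(\mathcal{N}+1)^k$ and $\mathcal{K}$ jointly through $\mathcal{U}_N$ uniformly in $N$. A careful book-keeping of the resulting coupled inequalities --- the $\mathcal{K}$-bound grows at most polynomially in $t$, which when substituted back into the number-moment inequality and integrated produces a power of $t$ one higher than in the bounded case --- yields the $t^{3/2}$ in the final exponent. Combined with Step~2, this completes the proof.
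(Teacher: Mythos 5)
Your overall framework---Fock space, Weyl operators, the fluctuation dynamics $\mathcal{U}_N(t;s)$ with generator $\mathcal{L}_2+\mathcal{L}_3+\mathcal{L}_4$, and the kernel bound $\|V_2\,\varphi_t\otimes\varphi_t\|_{L^2(\mathbb{R}^6)}\leq\|V_2\|_{L^2}\|\varphi_t\|_{L^4}^2$ for the pairing part of $\mathcal{L}_2$---matches the paper. But there are two genuine gaps, and they sit exactly at the two places where the paper departs from \cite{Rodnianski2009,Chen2011a}.

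First, your Step 2 reduction is not correct as stated. Expanding $a_y^* a_x$ around $N\overline{\varphi_t(y)}\varphi_t(x)$ produces, besides the quadratic term $\frac{1}{N}\langle\cdot,\mathcal{U}^*d\Gamma(J)\mathcal{U}\,\Omega\rangle$ which number-moment propagation does control at order $1/N$, the linear terms $\frac{1}{\sqrt{N}}\langle\cdot,\mathcal{U}^*\phi(J\varphi_t)\mathcal{U}\,\Omega\rangle$. Bounding $\phi(J\varphi_t)$ by $(\mathcal{N}+1)^{1/2}$ gives only $O(N^{-1/2})$ for these, no matter how well you control $\langle(\mathcal{N}+1)^2\rangle$. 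The extra factor of $N^{-1/2}$ must come from somewhere else, and your appeal to ``the comparison argument of \cite{Chen2011a}'' is precisely the step the paper identifies as unavailable: that argument compares $\mathcal{U}$ with the quadratic evolution $\mathcal{U}_2$ and controls the difference through kinetic-energy bounds resting on Hardy's inequality $V^2\leq D(1-\Delta)$, which fails for general $V_2\in L^2$. The paper's substitute is to compare $\mathcal{U}$ with $\tilde{\mathcal{U}}$ generated by $\mathcal{L}_2+\mathcal{L}_4$ (keeping the quartic term, so no kinetic control is needed for the comparison), to observe that $\tilde{\mathcal{U}}$ preserves the parity of the particle number so that $\tilde{\mathcal{U}}^*\phi(J\varphi_t)\tilde{\mathcal{U}}\,\Omega$ lives in the odd sectors, and to exploit that the odd-sector projections of $W^*(\sqrt{N}\varphi)\varphi^{\otimes N}$ carry an extra factor $N^{-1/2}$ (Lemma \ref{lem:coherent_even_odd}). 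Without some version of this parity/odd-sector mechanism your argument only yields the non-optimal rate $N^{-1/2}$.

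Second, your proposed joint propagation of $(\mathcal{N}+1)^k$ and $\mathcal{K}=d\Gamma(1-\Delta)$ is the route the paper explicitly abandons, and you give no indication of how to close it: differentiating $\langle\mathcal{K}\rangle$ along the flow produces commutators that again require comparing $V$ with $1-\Delta$, which is exactly what is lost when $V$ is merely $L^2+L^\infty$. The claim that the kinetic bound ``grows at most polynomially in $t$'' is unsupported and is not where the exponent $t^{3/2}$ comes from. In the paper the delicate object is $\mathcal{L}_3$, whose natural bound is $\sup_x\|V(x-\cdot)\varphi_t\|_{L^2}\leq C(\|\varphi_t\|_{L^\infty}+1)$; since $H^1(\mathbb{R}^3)\not\hookrightarrow L^\infty(\mathbb{R}^3)$ this is not bounded pointwise in time, and the Gr\"onwall estimate is closed only through the Strichartz bound $\|\varphi\|_{L^2((0,T),L^\infty)}\leq C(1+T)$ of Lemma \ref{lem:decaying}, giving $\int_0^t\sup_x\|V(x-\cdot)\varphi_s\|\,\mathrm{d}s\leq C(1+t^{3/2})$. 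That space-time integrability, which your proposal never invokes, is both what replaces the kinetic-energy input and the actual source of $e^{Kt^{3/2}}$.
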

	We remark that the assumption $\varphi \in H^1(\mathbb{R}^3)$ in Theorem \ref{thm:Main_Theorem} is natural in the sense that both the mass and the kinetic energy of each particle are bounded. With the assumption $\varphi \in H^1(\mathbb{R}^3)$, Theorem \ref{thm:Main_Theorem} covers the case {{of}} singular initial wave function, i.e., $\|\varphi\|_{L^\infty} = \infty$.

	To prove Theorem \ref{thm:Main_Theorem}, as in \cite{Rodnianski2009, Chen2011a}, we first prove the optimal rate of convergence for coherent states and convert it to the fixed-number particle case. As we will see in Section \ref{sec:prelim}, the marginal density is closely related to the second quantization operators in the Fock space representation, and the fluctuation of the second quantization operators acting on the coherent states can be identified with a unitary operator $\mathcal{U}(t;s)$ (see \eqref{eq:def_mathcalU}). 
	{
		In \cite{Chen2011a}, to obtain an estimate on $\mathcal{U}(t;s)$, another unitary operator $\mathcal{U}_2 (t;s)$ with a simpler generator was introduced; the difference between $\mathcal{U}(t;s)$ and $\mathcal{U}_2 (t;s)$ was controlled by the kinetic energy term along the dynamics generated by $\mathcal{U}_2 (t;s)$ (see (4.10) and (4.11) of \cite{Chen2011a}).}(
	In \cite{Chen2011a}, to obtain an estimate on $\mathcal{U}(t;s)$, another unitary operator $\mathcal{U}_2 (t;s)$ with a simpler generator was introduced and the difference between two unitary operator was controlled by the kinetic energy term with respect to $\mathcal{U}_2 (t;s)$ (see (4.10) and (4.11) of \cite{Chen2011a}).)
	
	The first obstacle we face with a more singular interaction is the control of the kinetic energy term. The approach in \cite{Chen2011a} is not directly applicable to this case, since many inequalities used in the estimate of the kinetic energy, most notably the Hardy inequality $V^2\leq D(1-\Delta)$, {are} no longer applicable. Without the control of the kinetic energy, we do not have an upper bound for the generator {$\mathcal{L}_2$} in \eqref{eq:L_2}, hence the comparison between $\mathcal{U}(t;s)$ {and} $\mathcal{U}_2 (t;s)$ is not possible. To overcome the difficulty, we introduce another unitary operator $\tilde{\mathcal{U}}(t;s)$ that contains $\mathcal{L}_4$ in its generator (see \eqref{eq:def_mathcaltildeU} and \eqref{eq:L_4}). However, this brings yet another obstacle, since the  proof of the optimal bound in \cite{Chen2011a} uses {the} special property of $\mathcal{U}_2 (t;s)$ that a certain evolution of the vacuum state under it lives exclusively in the one-particle sector of the Fock space (see Lemma 8.1 of \cite{Chen2011a}). We thus adopt the idea in \cite{Lee2013} that, although the evolution of the vacuum state under $\tilde{\mathcal{U}}(t;s)$ can reside in any sector with an odd number of particles, we can gain an additional factor of $O(N^{-1/2})$ as long as the number of particles is odd (see Lemma \ref{lem:coherent_even_odd}).
	
	Even without the use of the kinetic energy term, we need better regularity of $\varphi_t$ to compensate the singularity of $V$ in various estimates. To gain the regularity, we follow \cite{Chen2011} and apply the Strichartz's estimate of Hartree evolution. The Strichartz's estimate appears in many different places in the paper, most notably in the proof of an upper bound on the fluctuation of the expected number of particles under the evolution $\mathcal{U}(t;s)$ (see Lemma \ref{lem:NjU}).
	
	The paper is organized as follows. In Section \ref{sec:prelim}, we introduce some notions and preliminary results that will be used in the proof of the main theorem, including the Stricartz's estimate of the Hartree evolution and the Fock space representation. In Section \ref{sec:Pf-of-Main-Thm}, we prove {our main result}, Theorem \ref{thm:Main_Theorem}. The proof of propositions and lemmas used in the proof of the theorem is provided in Section \ref{sec:lemmas}.

	\begin{rem}
		The assumption in Theorem \ref{thm:Main_Theorem} is satisfied by any interaction potential $V$ of the form $V(x) = \sum_{i=1}^{n}\lambda_i |x|^{-\gamma_i}+c$ with a positive integer $n$, an exponent $0<\gamma_i<3/2$, interaction strength $\lambda_i\in\mathbb{R}$, and offset $c\in\mathbb{R}$.
	\end{rem}

	\begin{rem}[Notational Remark] \mbox{ }
		\begin{enumerate}
			\item The subindex $t$ in $\varphi_t$ and other functions emphasizes the time-dependence of the functions; it does not mean the partial derivative of the function with respect to time $t$.
			\item For a function $f:(\mathbb{R},\mathbb{R}^3)\to \mathbb{C}$, we denote $f_t(x):=f(t,x)$ and let $\|f_t\|$ be the $L^2(\mathbb{R}^3,\mathrm{d}x)$-norm with respect to spatial variable for notational simplicity. Other $L^p(\mathbb{R}^3)$-norms are denoted by $\|f_t\|_p$ or $\|f_t\|_{L^p}$. Similarly, $\|f_t\|_{H^s}$ is the $H^s(\mathbb{R}^3)$-norm.
			Furthermore, for any measurable set $S\subset \mathbb {R}$ we use the {following} notation:
			\[
			\|f\|_{L^p(S,L^q(\mathbb{R}^3))}:=\left( \int_S \mathrm{d}s\,\left( \int_{\mathbb{R}^3} \mathrm{d}x\, |f(s,x)|^q\right)^{\frac{p}{q}}\right)^{\frac{1}{p}}
			\]
			\item Constants $C$ and $K$ may differ line by line and we do not track them precisely as long as they are independent of the number of particles $N$ and time $t$.
		\end{enumerate}
		
	\end{rem}

	\section{Preliminary} \label{sec:prelim}
	
	In this section, we introduce basic notions and results that will be used throughout the rest of the paper.
	
	\subsection{Strichartz's estimate of Hartree evolution} \label{sec:decaying_Hartree}
	
	This subsection is devoted to explain Strichartz's estimate of Hartree evolution by giving the global existence theorem for Cauchy problem \eqref{eq:Hartree_eq} with initial data $\varphi_{0}$, i.e., if $\varphi_0\in H^1(\mathbb{R}^3)$, then $\varphi_t\in C((0,t),H^1(\mathbb{R}^3)) \cap L^2((0,t),L^\infty(\mathbb{R}^3))$. The following lemma shows that $\|\varphi_t\|_{H^1}$ is bounded uniformly on $t$.
	
	\begin{lem} \label{lem:H^1 bound}
		For the solution $\varphi_t$ of Hartree equation \eqref{eq:Hartree_eq} with $V\in L^2(\mathbb{R}^3)+L^\infty(\mathbb{R}^3)$ and $\varphi_0\in H^1(\mathbb{R}^{3})$, { there exists a constant $C$, depending only on $\left\Vert \varphi\right\Vert _{H^{1}}$, $\|V_2\|_{L^2}$, and $\|V_\infty\|_{L^\infty}$, such that}
		\[
		\|\varphi_{t}\|_{H^{1}}\leq C,
		\]
	\end{lem}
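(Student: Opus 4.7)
The plan is to derive the $H^1$ bound from the two conserved quantities of the Hartree flow: mass $\mathcal{M}(\varphi) := \|\varphi\|_{L^2}^2$ and energy
\[
\mathcal{E}(\varphi) := \|\nabla\varphi\|_{L^2}^2 + \tfrac12 \iint V(x-y)|\varphi(x)|^2|\varphi(y)|^2\,\mathrm{d}x\,\mathrm{d}y.
\]
Mass conservation is immediate from multiplying \eqref{eq:Hartree_eq} by $\overline{\varphi_t}$ and taking imaginary parts, so $\|\varphi_t\|_{L^2}=\|\varphi_0\|_{L^2}$ for all $t$. Assuming for the moment that a global $H^1$ solution exists, energy conservation follows from pairing \eqref{eq:Hartree_eq} with $\partial_t\overline{\varphi_t}$ and using symmetry of $V$. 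The only real work is then to bound $\|\nabla \varphi_t\|_{L^2}^2$ in terms of the conserved $\mathcal{E}(\varphi_0)$ and $\mathcal{M}(\varphi_0)$.

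To do this I would estimate the potential part of $\mathcal{E}$ by splitting $V=V_2+V_\infty$. The $V_\infty$ piece is easy: it is bounded by $\|V_\infty\|_{L^\infty}\|\varphi_t\|_{L^2}^4$, which by mass conservation equals $\|V_\infty\|_{L^\infty}\|\varphi_0\|_{L^2}^4$. For the $V_2$ piece I would rewrite it as $\langle |\varphi_t|^2,V_2\ast|\varphi_t|^2\rangle$, apply H\"older and then Young's convolution inequality to obtain
\[
\Bigl|\iint V_2(x-y)|\varphi_t(x)|^2|\varphi_t(y)|^2\,\mathrm{d}x\,\mathrm{d}y\Bigr|
\le \|\varphi_t\|_{L^2}^2\,\|V_2\ast|\varphi_t|^2\|_{L^\infty}
\le \|V_2\|_{L^2}\|\varphi_t\|_{L^2}^2\|\varphi_t\|_{L^4}^2.
\]
Sobolev embedding $H^{3/4}(\mathbb{R}^3)\hookrightarrow L^4(\mathbb{R}^3)$ together with the interpolation $\|\varphi_t\|_{H^{3/4}}\lesssim \|\varphi_t\|_{L^2}^{1/4}\|\varphi_t\|_{H^1}^{3/4}$ then yields, after using mass conservation,
\[
\Bigl|\iint V_2(x-y)|\varphi_t(x)|^2|\varphi_t(y)|^2\,\mathrm{d}x\,\mathrm{d}y\Bigr|
\le C\,\|V_2\|_{L^2}\|\varphi_0\|_{L^2}^{5/2}\,\|\varphi_t\|_{H^1}^{3/2}.
\]

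The key trick is that the exponent $3/2$ is strictly less than $2$, so the quantity $\|\varphi_t\|_{H^1}^{3/2}$ can be absorbed into $\|\nabla\varphi_t\|_{L^2}^2$. Concretely, setting $u(t):=\|\varphi_t\|_{H^1}^2=\|\varphi_0\|_{L^2}^2+\|\nabla\varphi_t\|_{L^2}^2$, energy conservation gives
\[
u(t) \le \|\varphi_0\|_{L^2}^2 + \mathcal{E}(\varphi_0) + \|V_\infty\|_{L^\infty}\|\varphi_0\|_{L^2}^4 + C\|V_2\|_{L^2}\|\varphi_0\|_{L^2}^{5/2}\,u(t)^{3/4},
\]
and Young's inequality $ab\le \tfrac34 a^{4/3}+\tfrac14 b^4$ lets me absorb the last term into $u(t)$, producing $u(t)\le C'$ with $C'$ depending only on $\|\varphi_0\|_{H^1}$, $\|V_2\|_{L^2}$, $\|V_\infty\|_{L^\infty}$ (the initial energy $\mathcal{E}(\varphi_0)$ is bounded by the same quantities via the same estimate applied at $t=0$).

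The remaining loose end is global existence of an $H^1$ solution, which is needed to make the conservation laws rigorous. I would invoke the standard Strichartz/contraction scheme for Hartree-type equations announced in this subsection: local-in-time well-posedness in $H^1$ follows from the Strichartz estimates for $e^{\mathrm{i}t\Delta}$ and the fact that $V\ast|\varphi|^2$ is a bounded multiplier on $H^1$ whenever $\varphi\in H^1$, and the a priori bound just derived extends the solution globally. The main obstacle in the whole argument is genuinely the $V_2$ estimate above; once the exponent $3/2<2$ appears, the Gronwall-free absorption step is mechanical.
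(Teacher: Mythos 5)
Your proposal is correct and follows essentially the same route as the paper: conservation of mass and energy, the splitting $V=V_2+V_\infty$, Young's convolution inequality plus Sobolev/interpolation to bound the $V_2$ potential term by $C\|V_2\|_{L^2}\|\varphi_t\|_{L^2}^{5/2}\|\varphi_t\|_{H^1}^{3/2}$, and absorption of the sub-quadratic power into the kinetic energy (your Young-inequality absorption is the same as the paper's $\varepsilon\|\varphi_t\|_{H^1}^2+\varepsilon^{-3}$ trick). The paper likewise leaves global well-posedness to the standard Strichartz theory cited in that subsection, so your treatment of that loose end matches as well.
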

	
	\begin{proof}
		Let $V=V_2 + V_\infty$ so that $V_2\in L^2(\mathbb{R}^3)$ and $V_\infty \in L^\infty(\mathbb{R}^3)$. Define the energy $\mathcal{E}(\varphi_t)$ by	
		\begin{align}
		\mathcal{E}(\varphi_t)&:=\frac{1}{2}\int \mathrm{d}x\,|\nabla\varphi_t(x)|^2 + \frac{1}{4} \int\mathrm{d}x\mathrm{d}y\,V(x-y)|\varphi_t(x)|^2|\varphi_t(y)|^2\\
		&= \frac{1}{2}\int \mathrm{d}x\, |\nabla\varphi_t(x)|^2 \notag \\
		&\qquad + \frac{1}{4} \int \mathrm{d}x \mathrm{d}y \, V_2(x-y)|\varphi_t(x)|^2|\varphi_t(y)|^2 + \frac{1}{4} \int \mathrm{d}x \mathrm{d}y \, V_{\infty} (x-y)|\varphi_t(x)|^2|\varphi_t(y)|^2. \notag
		\end{align}
		To estimate the first term {of last line}, we use Young's inequality, Riesz-Thorin interpolation theorem, and Sobolev inequality to find that
		\begin{equation}
		\begin{split}
		&\int\mathrm{d}x\mathrm{d}y\,V_{2}(x-y)|\varphi_{t}(x)|^{2}|\varphi_{t}(y)|^{2} \leq C \|V_2\|_{L^2}\||\varphi_t|^2\|_{L^2}\|\varphi_t\|_{L^2}^2\leq C\|V_2\|_{L^2}\|\varphi_t\|_{L^4}^2\|\varphi_t\|_{L^2}^2\\
		&\quad\leq C\|V_2\|_{L^2}\|\varphi_t\|_{L^2}^{5/2}\|\varphi_t\|_{L^6}^{3/2}\leq C\|V_2\|_{L^2}\|\varphi_t\|_{L^2}^{5/2}\|\varphi_t\|_{H^1}^{3/2} \leq C\|V_2\|_{L^2}\|\varphi_t\|_{L^2}^{5/2}(\varepsilon \|\varphi_t\|_{H^1}^2 + \varepsilon^{-3}).
		\end{split}
		\end{equation}
		Thus, from the mass conservation $\| \varphi_t \|_{L^2} = 1$ we find that
		\begin{align}
		\mathcal{E}(\varphi_t)
		&\geq \frac{1}{2} \int \mathrm{d}x\, |\nabla\varphi_t(x)|^2 - C\|V_2\|_{L^2}\|\varphi_t\|_{L^2}^{5/2}(\varepsilon \|\varphi_t\|_{H^1}^2 + \varepsilon^{-3}) - \frac{1}{4} \| V_{\infty} \|_{L^{\infty}} \| \varphi_t \|_{L^2}^4 \notag\\
		&\geq (1/2 - C\varepsilon\| V_2 \|_{L_2})\|\varphi_t \|_{H^1}^2 - C(\varepsilon^{-3}\| V_2 \|_{L_2} + \| V_{\infty} \|_{L^{\infty}} + 1).\notag
		\end{align}
		Choosing $\varepsilon>0$ small enough, from the energy conservation we get $\|\varphi_t\|_{H^1}<C(\varepsilon, \| V_2 \|_{L_2}, \| V_{\infty} \|_{L^{\infty}})$ for some constant $C$ independent of $t$. This proves the desired lemma.
	\end{proof}
	
	In the conventional Strichartz's estimate, if $u$ is a soulution of
	\[
	\mathrm{i}\partial_t u = -\Delta u + F
	\]
	with initial data $u_0$, then
	\[
	\|u\|_{L^q((0,T),W^{s,r})}\leq C_1\|u_0\|_{H^s}+C_2\|F\|_{L^{\tilde{q}'}((0,T),W^{s,\tilde{r}'})},
	\]
	for admissible pairs $(q,r)$ and $(\tilde{q},\tilde{r})$. (Here, $p'$ denotes the H\"older conjugate of $p$.) For the detail of the Strichartz's estimate, we refer to \cite[pp.73-77]{tao2006nonlinear}.
	
	In this paper, we consider the case $F=(V*|u|^2) u$ and use the following version of Strichartz's Estimate.
	
	\begin{lem}\label{lem:decaying}
		Suppose that $V\in L^2(\mathbb{R}^3)+L^\infty(\mathbb{R}^3)$. Let $\varphi_{t}$ be the solution of the Hartree equation \eqref{eq:Hartree_eq} with initial data $\varphi_{0}=\varphi\in H^{1}(\mathbb{R}^3)$, then { there exists a constant $C$, depending only on $\left\Vert \varphi\right\Vert _{H^{1}}$, $\|V_2\|_{L^2}$, and $\|V_\infty\|_{L^\infty}$, such that}
		\[\|\varphi_t\|_{L^2((0,T),L^\infty)} \leq { C (1+ {{T}} )}.\]
	\end{lem}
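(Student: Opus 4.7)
\emph{Proof proposal.} The plan is to run the standard Strichartz estimate at the $H^{1}$ level and then recover the $L^{\infty}_x$-norm through a Sobolev embedding. Writing the Hartree equation as $\mathrm{i}\partial_{t}\varphi_{t}=-\Delta\varphi_{t}+F_{t}$ with $F_{t}:=(V*|\varphi_{t}|^{2})\varphi_{t}$, I would choose the three-dimensional admissible output pair $(q,r)=(2,6)$ together with the inhomogeneous pair $(\tilde{q},\tilde{r})=(\infty,2)$, so that $\tilde{q}'=1$ and $\tilde{r}'=2$, and Strichartz then yields
\[
\|\varphi\|_{L^{2}((0,T),W^{1,6})}\le C_{1}\|\varphi_{0}\|_{H^{1}}+C_{2}\|F\|_{L^{1}((0,T),H^{1})}.
\]
Combined with the Sobolev embedding $W^{1,6}(\mathbb{R}^{3})\hookrightarrow L^{\infty}(\mathbb{R}^{3})$, the desired estimate reduces to a uniform-in-$t$ bound $\|F_{t}\|_{H^{1}}\le C$, since $\|F\|_{L^{1}((0,T),H^{1})}\le T\sup_{t}\|F_{t}\|_{H^{1}}$ then supplies the linear-in-$T$ growth.

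For $\|F_{t}\|_{L^{2}}$, the splitting $V=V_{2}+V_{\infty}$ and Young's inequality give $\|V*|\varphi_{t}|^{2}\|_{L^{\infty}}\le \|V_{2}\|_{L^{2}}\|\varphi_{t}\|_{L^{4}}^{2}+\|V_{\infty}\|_{L^{\infty}}\|\varphi_{t}\|_{L^{2}}^{2}$, which is uniformly bounded thanks to Lemma \ref{lem:H^1 bound}, mass conservation, and $H^{1}\hookrightarrow L^{4}$. For $\nabla F_{t}=(V*|\varphi_{t}|^{2})\nabla\varphi_{t}+(V*\nabla|\varphi_{t}|^{2})\varphi_{t}$, the first summand is controlled in the same way. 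The $V_{\infty}$ part of the second summand is bounded by $\|V_{\infty}\|_{L^{\infty}}\|\nabla|\varphi_{t}|^{2}\|_{L^{1}}\|\varphi_{t}\|_{L^{2}}\le C\|\varphi_{t}\|_{H^{1}}^{2}$ via Cauchy--Schwarz, while the delicate $V_{2}$ contribution I would treat through the chain
\[
\|(V_{2}*\nabla|\varphi_{t}|^{2})\varphi_{t}\|_{L^{2}}\le \|V_{2}*\nabla|\varphi_{t}|^{2}\|_{L^{3}}\|\varphi_{t}\|_{L^{6}}\le \|V_{2}\|_{L^{2}}\|\nabla|\varphi_{t}|^{2}\|_{L^{6/5}}\|\varphi_{t}\|_{L^{6}}\le 2\|V_{2}\|_{L^{2}}\|\varphi_{t}\|_{L^{3}}\|\nabla\varphi_{t}\|_{L^{2}}\|\varphi_{t}\|_{L^{6}},
\]
using in turn H\"older, Young's inequality (the exponents satisfy $\tfrac{1}{2}+\tfrac{5}{6}=1+\tfrac{1}{3}$), and H\"older again (with $\tfrac{1}{3}+\tfrac{1}{2}=\tfrac{5}{6}$); Sobolev embedding and Lemma \ref{lem:H^1 bound} then close the bound.

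The only genuinely nontrivial balancing in this argument is the last chain of exponents, and this is what I expect to be the main obstacle. Because $V_{2}$ has only $L^{2}$ regularity, Young's inequality forces $\nabla|\varphi_{t}|^{2}$ into $L^{6/5}$; and because Lemma \ref{lem:H^1 bound} supplies only $H^{1}$ control of $\varphi_{t}$, the derivative must land on $\nabla\varphi_{t}\in L^{2}$ with the remaining factor of $\varphi_{t}$ absorbed into $L^{3}$ via $H^{1}\hookrightarrow L^{3}$. Any other allocation of exponents would require either the $L^{\infty}_{x}$ bound that is the very conclusion of the lemma, or $L^{p}$-control of $\nabla\varphi_{t}$ for some $p>2$, neither of which is available at $H^{1}$ regularity. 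Once these exponents are fixed, the rest of the argument is routine Strichartz and Sobolev bookkeeping.
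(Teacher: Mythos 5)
Your proof is correct, and it follows the same overall strategy as the paper (Strichartz estimate with output norm $L^{2}((0,T),W^{1,6})$, Sobolev embedding $W^{1,6}(\mathbb{R}^{3})\hookrightarrow L^{\infty}(\mathbb{R}^{3})$, and the uniform $H^{1}$ bound of Lemma \ref{lem:H^1 bound} to control the nonlinearity). The one genuine difference is the choice of dual admissible pair for the singular part of the potential: the paper splits the Duhamel term according to $V=V_{2}+V_{\infty}$ and measures the $V_{2}$ contribution in $L^{2}((0,T),W^{1,6/5})$ (dual pair $(2,6)$) while measuring the $V_{\infty}$ contribution in $L^{1}((0,T),W^{1,2})$, whereas you put the entire nonlinearity into the single dual norm $L^{1}((0,T),H^{1})$. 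This changes the exponent bookkeeping for the term $(V_{2}*\nabla|\varphi_{t}|^{2})\varphi_{t}$: the paper runs $L^{6/5}\leftarrow L^{3}\times L^{2}$ with the outer factor $\varphi_{t}$ in $L^{2}$, while you run $L^{2}\leftarrow L^{3}\times L^{6}$ with the outer factor in $L^{6}$; your chain of exponents ($\tfrac12=\tfrac13+\tfrac16$, Young with $1+\tfrac13=\tfrac12+\tfrac56$, then $\tfrac56=\tfrac13+\tfrac12$) checks out, and all the resulting norms are controlled at $H^{1}$ regularity. Your version is slightly more economical in that it needs only one inhomogeneous pair; the paper's choice of $L^{2}_{t}$ for the $V_{2}$ piece yields $\sqrt{T}$ rather than $T$ growth for that piece, but since the $V_{\infty}$ piece already contributes $O(T)$, neither choice improves the final bound $C(1+T)$.
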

	
	\begin{proof}
		We closely follow \cite[Theorem 2.3.3]{cazenave2003semilinear} for the proof of the lemma. Let $V=V_2 + V_\infty$ as in the proof of Lemma \ref{lem:H^1 bound}. From the Sobolev inequality and the Strichartz's estimate,
		we have
		\begin{align}
		\|\varphi_t\|_{L^2((0,T),L^\infty)} &\leq C \|\varphi_t\|_{L^2((0,T),W^{1,6})} \\
		&\leq C\|\varphi_0\|_{H^1} + C\|(V_2*|\varphi_t|^2)\varphi_t\|_{L^2((0,T),W^{1,6/5})} +C\|(V_\infty*|\varphi_t|^2)\varphi_t\|_{L^{{1}}((0,T),W^{1,2})}.
		\label{eq:Strischartz_div}
		\end{align}
		From the definition of the Sobolev norm,
		\begin{align}	
		&\|(V_2*|\varphi_t|^2)\varphi_t\|_{L^2((0,T),W^{1,6/5})} \label{eq:(V_2*phi^2)phi} \\
		&\leq C \|(V_2*|\varphi_t|^2)\varphi_t\|_{L^2((0,T),{L^{6/5}})} + C \|\nabla ((V_2*|\varphi_t|^2)\varphi_t)\|_{L^2((0,T),{L^{6/5}})} \notag
		\end{align}
		and
		\begin{equation}
		\|(V_\infty*|\varphi_t|^2)\varphi_t\|_{L^{{1}}((0,T),W^{1,2})} \leq \|(V_\infty*|\varphi_t|^2)\varphi_t\|_{L^{{1}}((0,T),L^{2})} + \|\nabla ((V_\infty*|\varphi_t|^2)\varphi_t)\|_{L^{{1}}((0,T),L^{2})}\label{eq:(V_infty*phi^2)phi}
		\end{equation}
		We first focus on the spacial integral; integration with respect to the time variable $t$ will be considered later. In the first term in the right-hand side of \eqref{eq:(V_2*phi^2)phi}, the integrand of the spatial integral is bounded by
		\begin{equation}
		\begin{aligned}
		&\|(V_2*|\varphi_t|^2)\varphi_t\|_{L^{6/5}}
		\leq \|V_2*|\varphi_t|^2\|_{L^3}\|\varphi_t\|_{L^2} \leq \|V_2\|_{L^2}\||\varphi_t|^2\|_{L^{6/5}}\|\varphi_t\|_{L^2}  \\
		&\qquad\leq  \|V_2\|_{L^2}\|\varphi_t\|_{L^{12/5}}^2\|\varphi_t\|_{L^2}
		\leq  \|V_2\|_{L^2}\|\varphi_t\|_{L^{2}}^{5/2}\|\varphi_t\|_{L^6}^{1/2}
		\leq \|V_2\|_{L^2}\|\varphi_t\|_{L^2}^{5/2}\|\varphi_t\|_{H^1}^{1/2},
		\end{aligned}
		\end{equation}
		where we used H\"older's inequality, Young's inequality, and Riesz-Thorin Theorem. Similarly, we decompose the integrand of the second term in the right-hand side \eqref{eq:(V_2*phi^2)phi} into two parts and find that
		\[
		\|\nabla ((V_2*|\varphi_t|^2)\varphi_t)\|_{L^2((0,T),{L^{6/5}})}\leq \| (V_2*(\nabla|\varphi_t|^2))\varphi_t\|_{L^2((0,T),{L^{6/5}})} +\| (V_2*|\varphi_t|^2)(\nabla\varphi_t)\|_{L^2((0,T),{L^{6/5}})}.
		\]
		We again apply H\"older's inequality, Young's inequality, and Riesz-Thorin Theorem to get
		\begin{align*}
		\| V_2*(\nabla |\varphi_t|^2 ) \varphi_t\|_{L^{6/5}} &\leq \| V_2*(\nabla|\varphi_t|^2)\|_{L^{3}}\|\varphi_t\|_{L^2}
		\leq  C \|V_2\|_{L^2}\|\overline{\varphi_t}\nabla\varphi_t\|_{L^{6/5}}\|\varphi_t\|_{L^2}\\
		&\leq  C \|V_2\|_{L^2}\|\varphi_t\|_{L^{3}}\|\nabla\varphi_t\|_{L^{2}}\|\varphi_t\|_{L^2}
		\leq C \|V_2\|_{L^2}\|\varphi_t\|_{L^{2}}^{3/2}\|\varphi_t\|_{H^1}^{3/2}, \\
		\| (V_2*|\varphi_t|^2)(\nabla\varphi)\|_{L^{6/5}}
		&\leq \|V_2*|\varphi_t|^2\|_{L^{3}}\|\nabla\varphi_t\|_{L^2}
		\leq \|V_2\|_{L^2}\|\varphi_t\|_{L^2}^{3/2}\|\varphi_t\|_{H^1}^{3/2
		}.
		\end{align*}
		Let us now investigate \eqref{eq:(V_infty*phi^2)phi} using the similar strategy used for \eqref{eq:(V_2*phi^2)phi}. The first term in the right-hand side of \eqref{eq:(V_infty*phi^2)phi} is easily bounded by 
		{
			\begin{align*}
			\| V_\infty*(|\varphi_t|^2 ) \varphi_t\|_{L^{2}} &\leq  \|V_\infty\|_{L^\infty}\||\varphi_t|^2\|_{L^1}\|\varphi_t\|_{L^2} \leq \|V_\infty\|_{L^\infty}\|\varphi_t\|_{L^2}^3.
			\end{align*}
		}
		For the second term, we again split \eqref{eq:(V_infty*phi^2)phi} such that
		\begin{align*}
		\|\nabla ((V_\infty*|\varphi_t|^2)\varphi_t)\|_{L^{{1}}((0,T),L^{2})}&\leq \| (V_\infty*(\nabla|\varphi_t|^2))\varphi_t\|_{L^{{1}}((0,T),L^{2})} \\
		&\qquad +\| (V_\infty*|\varphi_t|^2)(\nabla\varphi_t)\|_{L^{{1}}((0,T),L^
			{2})},
		\end{align*}
		where the right-hand side can be bounded by
		\begin{align*}
		\| V_\infty*(\nabla |\varphi_t|^2 ) \varphi_t\|_{L^{2}} &\leq  \|V_\infty\|_{L^\infty}\|\nabla|\varphi_t|^2\|_{L^1}\|\varphi_t\|_{L^2}\\
		&\leq C \|V_\infty\|_{L^\infty}\|\nabla\varphi_t\|_{L^2}\|\varphi_t\|_{L^2}^2
		\leq C \|V_\infty\|_{L^\infty}\|\varphi_t\|_{H^1}\|\varphi_t\|_{L^2}^2
		\end{align*}
		and
		{
			\[
			\| (V_\infty*|\varphi_t|^2)(\nabla\varphi_t)\|_{L^{2}}
			\leq \|V_\infty\|_{L^\infty}\||\varphi_t|^2\|_{L^1}\|\nabla\varphi_t\|_{L^2}
			\leq \|V_\infty\|_{L^\infty}\|\varphi_t\|_{L^2}^2\|\varphi_t\|_{H^1}.
			\]
		}
		Thus, after taking $L^2$-norm {{and $L^1$-norm appropriately according to \eqref{eq:Strischartz_div}} with respect to the time variable $t$}, with the mass conservation $\|\varphi_t\|_{L^2}=1$ and Lemma \ref{lem:H^1 bound}, we conclude that
		\[
		\|\varphi_t\|_{L^2((0,T),L^\infty)} \leq  C (1 + {{T}}).
		\]
	\end{proof}

	\subsection{Fock space representation}
	
	To analyze the dynamics of the system of $N$-Bosons, we translate the original problem into the language of Fock space as in \cite{Rodnianski2009,Chen2011a}. The Bosonic Fock space we consider is the Hilbert space
	\[
	\mathcal{F}=\bigoplus_{n\geq0}L^{2}\left(\mathbb{R}^{3},\mathrm{d}x\right)^{\otimes_{s}n}=\mathbb{C}\oplus\bigoplus_{n\geq1}L_{s}^{2}\left(\mathbb{R}^{3n},\mathrm{d}x_{1},{{\dots}},\mathrm{d}x_{n}\right),
	\]
	where $L_{s}^{2} = L_{s}^{2}\left(\mathbb{R}^{3n},\mathrm{d}x_{1},{{\dots}},\mathrm{d}x_{n}\right)$ is a subspace of $L^2(\mathbb{R}^{3n},\mathrm{d}x_1,\dots,\mathrm{d}x_n)$ that consists of all functions symmetric under any permutation of $x_1, x_2, \dots, x_n$. By definition, we let $L_{s}^{2}\left(\mathbb{R}^{3}\right)^{\otimes_{s}0}=\mathbb{C}$. An element $\psi \in \mathcal{F}$ can be written as a sequence $\psi=\{\psi^{\left(n\right)}\}_{n\geq0}$ with $n$-particle wave functions $\psi^{\left(n\right)}\in L_{s}^{2}\left(\mathbb{R}^{3n}\right)$. The inner product on $\mathcal{F}$ is defined by
	\[
	\begin{aligned}
	\left\langle \psi_{1},\psi_{2}\right\rangle &
	=\sum_{n\geq0}\langle \psi_{1}^{\left(n\right)},\psi_{2}^{\left(n\right)}\rangle _{L^{2}\left(\mathbb{R}^{3n}\right)}\\
	& =\overline{ \psi_{1}^{(0)} } \psi_{2}^{(0)}+\sum_{n\geq0}\int \mathrm{d}x_{1}{{\dots}} \mathrm{d}x_{n}\overline{\psi_{1}^{\left(n\right)}\left(x_{1},{{\dots}},x_{n}\right)}\psi_{2}^{\left(n\right)}\left(x_{1},{{\dots}},x_{n}\right).
	\end{aligned}
	\]
	The vector $\left\{ 1,0,0,{{\dots}}\right\} \in\mathcal{F}$ is called the vacuum and denoted by $\Omega$. For $f\in L^{2}\left(\mathbb{R}^{3}\right)$, we define the creation operator $a^{*}\left(f\right)$ and the annihilation operator $a\left(f\right)$ on $\mathcal{F}$ by
	\begin{equation} \label{eq:creation}
	\left(a^{*}\left(f\right)\psi\right)^{\left(n\right)}\left(x_{1},{{\dots}},x_{n}\right)=\frac{1}{\sqrt{n}}\sum_{j=1}^{n}f\left(x_{j}\right)\psi^{\left(n-1\right)}\left(x_{1},{{\dots}},x_{j-1},x_{j+1},{{\dots}},x_{n}\right)
	\end{equation}
	and
	\begin{equation} \label{eq:annihilation}
	\left(a\left(f\right)\psi\right)^{\left(n\right)}\left(x_{1},{{\dots}},x_{n}\right)=\sqrt{n+1}\int \mathrm{d}x\overline{f\left(x\right)}\psi^{\left(n+1\right)}\left(x,x_{1},{{\dots}},x_{n}\right).
	\end{equation}
	By definition, the creation operator $a^{*}\left(f\right)$ is the adjoint of the annihilation operator of $a\left(f\right)$, and in particular, $a^{*}\left(f\right)$ and $a\left(f\right)$ are not self-adjoint. We will use the self-adjoint operator $\phi\left(f\right)$ defined as
	\[
	\phi\left(f\right)=a^{*}\left(f\right)+a\left(f\right).
	\]
	We also use operator-valued distributions $a_{x}^{*}$ and $a_{x}$ satisfying
	\[
	a^{*}\left(f\right)=\int \mathrm{d}x\,f\left(x\right)a_{x}^{*}, \qquad a\left(f\right)=\int \mathrm{d}x\,\overline{f\left(x\right)}a_{x}
	\]
	for any $f\in L^{2}\left(\mathbb{R}^{3}\right)$. The canonical commutation relation between these operators is
	\[
	\left[a\left(f\right),a^{*}\left(g\right)\right]=\left\langle f,g\right\rangle _{L^{2}\left(\mathbb{R}^{3}\right)},\quad\left[a\left(f\right),a\left(g\right)\right]=\left[a^{*}\left(f\right),a^{*}\left(g\right)\right]=0,
	\]
	which also assumes the form
	\[
	\left[a_{x},a_{y}^{*}\right]=\delta\left(x-y\right),\quad\left[a_{x},a_{y}\right]=\left[a_{x}^{*},a_{y}^{*}\right]=0.
	\]
	From the creation operator and the annihilation operator, we can also define other useful operators on $\mathcal{F}$.
	For each nonnegative integer $n$, we introduce the projection operator onto the $n$-particle sector of the Fock space,
	\begin{equation}
	P_n(\psi) := (0, 0, \dots, 0, \psi^{(n)}, 0, \dots)
	\end{equation}
	for $\psi = (\psi^{(0)}, \psi^{(1)}, \dots) \in\mathcal{F}$. For simplicity, with slight abuse of notation, we will use $\psi^{(n)}$ to denote $P_n\psi$.
	The number operator $\mathcal{N}$ is given by
	\begin{equation} \label{eq:number operator}
	\mathcal{N}=\int  dx\,a_{x}^{*}a_{x}
	\end{equation}
	and it satisfies that $\left(\mathcal{N}\psi\right)^{\left(n\right)}=n\psi^{\left(n\right)}$. In general, for an operator $J$ defined on the one-particle sector $L^{2}\left(\mathbb{R}^{3},\mathrm{d}x\right)$,
	its second quantization $d\Gamma\left(J\right)$ is the operator on $\mathcal{F}$ whose action on the $n$-particle sector is given by
	\[
	\left(d\Gamma\left(J\right)\psi\right)^{\left(n\right)}=\sum_{j=1}^{n}J_{j}\psi^{\left(n\right)}
	\]
	where $J_{j}=1\otimes{{\dots}}\otimes J\otimes{{\dots}}\otimes1$ is the operator $J$ acting on the $j$-th variable only. The number operator defined above can also be understood as the second quantization
	of the identity, i.e., $\mathcal{N}=d\Gamma\left(1\right)$. With a kernel $J\left(x;y\right)$ of the operator $J$, the second quantization $d\Gamma\left(J\right)$ can be also be written as
	\[
	d\Gamma\left(J\right)=\int \mathrm{d}x \mathrm{d}y\,J\left(x;y\right)a_{x}^{*}a_{y},
	\]
	which is consistent with \eqref{eq:number operator}.
	
	The following lemma shows how to control the annihilation operator and the creation operator (and also the second quantization operators) in terms of the number operator $\mathcal{N}$.
	\begin{lem}[Lemma 3.1 in \cite{Chen2011a}]
		For $\alpha >0$, let $D(\mathcal{N}^{\alpha}) = \{ \psi \in \mathcal{F} : \sum_{n \geq 1} n^{2\alpha} \| \psi^{(n)} \|^2 < \infty \}$ denote the domain of the operator $\mathcal{N}^{\alpha}$. For any $f \in L^2 (\mathbb{R}^3, dx)$ and any $\psi \in D (\mathcal{N}^{1/2})$, we have
		\begin{equation}\label{eq:bd-a}
		\begin{split}  \| a(f) \psi \| & \leq \| f \| \, \| \mathcal{N}^{1/2} \psi \|, \\
		\| a^* (f) \psi \| &\leq \| f \| \, \| (\mathcal{N}+1)^{1/2} \psi \|, \\
		\| \phi (f) \psi \| &\leq 2 \| f \| \| \left( \mathcal{N} + 1 \right)^{1/2}
		\psi \| \, . \end{split} \end{equation}
		Moreover, for any bounded one-particle operator $J$ on $L^2 (\mathbb{R}^3, dx)$ and for every $\psi \in D (\mathcal{N})$, we find
		\begin{equation}
		\label{eq:J-bd} \| d\Gamma (J) \psi \| \leq \| J \| \| \mathcal{N} \psi \|  \, .
		\end{equation}
	\end{lem}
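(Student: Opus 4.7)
The plan is to prove each bound directly on each $n$-particle sector and then sum over $n$, using only the explicit kernel formulas \eqref{eq:creation}--\eqref{eq:annihilation} and the canonical commutation relation. These are standard estimates, so the real content is bookkeeping rather than any serious inequality.

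First I would handle the annihilation operator. Using the formula for $(a(f)\psi)^{(n)}$, Cauchy--Schwarz in the $x$-variable inside the integral yields, on the $n$-particle sector,
\[
\|(a(f)\psi)^{(n)}\|_{L^2(\mathbb{R}^{3n})}^2 \leq (n+1)\|f\|^2 \|\psi^{(n+1)}\|^2.
\]
Summing over $n\geq 0$ and reindexing $m=n+1$ gives $\|a(f)\psi\|^2\leq \|f\|^2 \sum_{m\geq 1} m\|\psi^{(m)}\|^2 = \|f\|^2\|\mathcal{N}^{1/2}\psi\|^2$, which is the first bound in \eqref{eq:bd-a}. For the creation operator, rather than estimate sector by sector directly (which introduces the symmetrization sum and is slightly less clean), I would use the commutation relation $[a(f),a^*(f)]=\|f\|^2$ to write
\[
\|a^*(f)\psi\|^2 = \langle \psi, a(f)a^*(f)\psi\rangle = \|a(f)\psi\|^2 + \|f\|^2\|\psi\|^2,
\]
and then invoke the bound on $a(f)$ just obtained to conclude $\|a^*(f)\psi\|^2\leq \|f\|^2(\|\mathcal{N}^{1/2}\psi\|^2+\|\psi\|^2)=\|f\|^2\|(\mathcal{N}+1)^{1/2}\psi\|^2$. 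The bound on $\phi(f)=a(f)+a^*(f)$ then follows from the triangle inequality, combined with the monotonicity $\|\mathcal{N}^{1/2}\psi\|\leq \|(\mathcal{N}+1)^{1/2}\psi\|$, giving the factor of $2$.

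For the second quantization bound, on the $n$-particle sector $d\Gamma(J)$ acts as $\sum_{j=1}^n J_j$, where each $J_j$ is a bounded operator of norm $\|J\|$ on $L^2(\mathbb{R}^{3n})$. The triangle inequality gives $\|(d\Gamma(J)\psi)^{(n)}\|\leq n\|J\|\|\psi^{(n)}\|$. Squaring and summing over $n$,
\[
\|d\Gamma(J)\psi\|^2 \leq \|J\|^2 \sum_{n\geq 0} n^2 \|\psi^{(n)}\|^2 = \|J\|^2 \|\mathcal{N}\psi\|^2,
\]
which is \eqref{eq:J-bd}. There is no real obstacle here; if any step deserves care, it is verifying that the domain conditions $\psi\in D(\mathcal{N}^{1/2})$ and $\psi\in D(\mathcal{N})$ are exactly what make the respective right-hand sides finite, so that the sectorwise inequalities can be summed without further truncation. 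Everything else is a one-line application of Cauchy--Schwarz, the CCR, or the triangle inequality.
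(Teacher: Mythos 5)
The paper gives no proof of this lemma; it is quoted verbatim from Lemma 3.1 of \cite{Chen2011a} (ultimately Rodnianski--Schlein), and your argument --- sectorwise Cauchy--Schwarz for $a(f)$, the CCR identity $\|a^*(f)\psi\|^2=\|a(f)\psi\|^2+\|f\|^2\|\psi\|^2$ for $a^*(f)$, triangle inequality for $\phi(f)$ and $d\Gamma(J)$ --- is exactly the standard proof from that source and is correct. The only point one could fuss over is that the identity $\|a^*(f)\psi\|^2=\langle\psi,a(f)a^*(f)\psi\rangle$ should first be verified on vectors with finitely many nonzero components and then extended by density, but this is routine and does not affect the validity of the argument.
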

	
	To consider the problem with the Fock space formalism, we extend the Hamiltonian in \eqref{eq:N_body_Hamiltonian} to the Fock space by
	\begin{equation} \label{eq:Fock_space_Hamiltonian}
	\mathcal{H}_{N}:=\int \mathrm{d}x\,a_{x}^{*}\left(-\Delta_{x}\right)a_{x}+\frac{1}{2N}\int \mathrm{d}x \mathrm{d}y\,V\left(x-y\right)a_{x}^{*}a_{y}^{*}a_{y}a_{x}
	\end{equation}
	With the definition, we have $(\mathcal{H}_{N}\psi)^{(N)}=H_{N}\psi^{(N)}$ for $\psi\in\mathcal{F}$.
	The one-particle marginal density $\gamma_{\psi}^{\left(1\right)}$ associated with $\psi$ is
	\begin{equation} \label{eq:Kernel_gamma}
	\gamma_{\psi}^{\left(1\right)}\left(x;y\right)=\frac{1}{\left\langle \psi,\mathcal{N}\psi\right\rangle }\left\langle \psi,a_{y}^{*}a_{x}\psi\right\rangle.
	\end{equation}
	Note that $\gamma_{\psi}^{\left(1\right)}$ is a trace class operator on $L^{2}\left(\mathbb{R}^{3}\right)$ and $\text{Tr }\gamma_{\psi}^{\left(1\right)}=1$. It can be easily checked that \eqref{eq:Kernel_gamma} is equivalent to \eqref{eq:Kernel_of_Marginal_Density}.
	
	Heuristically, if $\psi = \psi^{(N)}\in\mathcal{F}$ were an eigenvector of $a_x$ with the eigenvalue $\sqrt N \varphi(x)$, then from \eqref{eq:Kernel_gamma} we get $\gamma_{\psi}^{\left(1\right)}(x;y) = \varphi(x) \overline{\varphi(y)}$, hence it would coincide with the one-particle marginal density associated with the factorized wave function $\varphi^{\otimes N}$. Even though the eigenvectors of the annihilation operator do not reside in a single sector of the Fock space, they still play the most important role in the midway. These eigenvectors are known as the coherent states, defined by
	\begin{equation}
	\psi\left(f\right)= e^{-\left\| f\right\| ^{2}/2}\sum_{n\geq0}\frac{\left(a^{*}\left(f\right)\right)^{n}}{n!}\Omega=e^{-\left\| f\right\| ^{2}/2}\sum_{n\geq0}\frac{1}{\sqrt{n!}}f^{\otimes n}.
	\end{equation}
	Here, for the ease of notation, when we say a function $\psi^{(N)}\in L^2(\mathbb{R}^{3N})$ is a function {in} the Fock space $\mathcal F$, we mean that $\psi^{(N)}=(0,0,\dots,0,\psi^{(N)},0,\dots)\in\mathcal{F}$. For example, we used $f^{\otimes n}$ to denote
	\begin{equation}
	(0,0,\dots,0,f^{\otimes n},0,\dots)\in\mathcal{F}
	\end{equation}
	whose only nonzero component, $f^{\otimes n}$, is in the $n$-particle sector of the Fock space.
	Closely related to the coherent states is the Weyl operator. For $f\in L^{2}\left(\mathbb{R}^{3}\right)$, the Weyl operator $W\left(f\right)$ is defined by
	\[
	W\left(f\right):=\exp\left(a^{*}\left(f\right)-a\left(f\right)\right)
	\]
	and it also satisfies
	\[
	W\left(f\right)=e^{-\left\Vert f\right\Vert ^{2}/2}\exp\left(a^{*}\left(f\right)\right)\exp\left(-a\left(f\right)\right),
	\]
	which is known as the Hadamard lemma in Lie algebra. The coherent state can also be written in terms of the Weyl operator as
	\begin{equation} \label{Weyl_f}
	\psi\left(f\right)=W\left(f\right)\Omega =e^{-\left\Vert f\right\Vert ^{2}/2}\exp\left(a^{*}\left(f\right)\right)\Omega =e^{-\left\Vert f\right\Vert ^{2}/2}\sum_{n\geq0}\frac{1}{\sqrt{n!}}f^{\otimes n}.
	\end{equation}
	
	We collect the useful properties of the Weyl operator and the coherent states in the following lemma.
	\begin{lem} \label{lem:Basic_Weyl}
		Let $f,g\in L^{2}\left(\mathbb{R}^{3},\mathrm{d}x\right)$.
		\begin{enumerate}
			\item The commutation relation between the Weyl operators is given by
			\[
			W\left(f\right)W\left(g\right)=W\left(g\right)W\left(f\right)e^{-2\mathrm{i} \cdot \mathrm{Im}\left\langle f,g\right\rangle }=W\left(f+g\right)e^{-\mathrm{i} \cdot \mathrm{Im}\left\langle f,g\right\rangle }.
			\]
			
			\item The Weyl operator is unitary and satisfies
			\[
			W\left(f\right)^{*}=W\left(f\right)^{-1}=W\left(-f\right).
			\]
			
			\item The coherent states are eigenvectors of annihilation operators, i.e.,
			\[
			a_{x}\psi\left(f\right)=f\left(x\right)\psi\left(f\right)\quad\Rightarrow\quad a\left(g\right)\psi\left(f\right)=\left\langle g,f\right\rangle _{L^{2}}\psi\left(f\right).
			\]
			The commutation relation between the Weyl operator and the annihilation operator (or the creation operator) is thus
			\[
			W^{*}\left(f\right)a_{x}W\left(f\right)=a_{x}+f\left(x\right)\quad\text{and}\quad W^{*}\left(f\right)a_{x}^{*}W\left(f\right)=a_{x}^{*}+\overline{f\left(x\right)}.
			\]
			
			\item The distribution of $\mathcal{N}$ with respect to the coherent state $\psi\left(f\right)$ is Poisson. In particular,
			\[
			\left\langle \psi\left(f\right),\mathcal{N}\psi\left(f\right)\right\rangle =\left\Vert f\right\Vert ^{2}, \qquad \left\langle \psi\left(f\right),\mathcal{N}^{2}\psi\left(f\right)\right\rangle -\left\langle \psi\left(f\right),\mathcal{N}\psi\left(f\right)\right\rangle ^{2}=\left\Vert f\right\Vert ^{2}.
			\]
		\end{enumerate}
	\end{lem}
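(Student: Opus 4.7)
The four assertions are standard facts about Weyl operators, so the plan is to prove them in the order listed, using the two complementary expressions
$W(f) = \exp(a^*(f) - a(f))$ and $W(f) = e^{-\|f\|^2/2} e^{a^*(f)} e^{-a(f)}$
together with the canonical commutation relations. For part (1), the key observation is that $[a^*(f) - a(f), a^*(g) - a(g)] = \langle g, f\rangle - \langle f, g\rangle = -2\mathrm{i}\cdot\mathrm{Im}\langle f, g\rangle$ is a scalar (central). Therefore the Baker--Campbell--Hausdorff identity reduces to $e^A e^B = e^{A+B+\frac{1}{2}[A,B]}$, yielding $W(f)W(g) = W(f+g)e^{-\mathrm{i}\cdot\mathrm{Im}\langle f, g\rangle}$; swapping $f$ and $g$ (so that $\mathrm{Im}\langle g, f\rangle = -\mathrm{Im}\langle f, g\rangle$) gives the other identity. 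The only nontrivial point is justifying BCH on the dense domain of analytic vectors where $a^*(f)$ and $a(f)$ are well-defined; this is routinely handled by noting that both sides, applied to a finite-particle vector, yield the same power series.

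Part (2) is immediate once one observes that $(a^*(f) - a(f))^* = a(f) - a^*(f) = -(a^*(f) - a(f))$, so $a^*(f) - a(f)$ is skew-adjoint, making $W(f) = \exp(a^*(f) - a(f))$ unitary; the identity $W(-f) = W(f)^{-1}$ then follows because $a^*(-f) - a(-f) = -(a^*(f) - a(f))$. For part (3), I will prove the conjugation identity first and then derive the eigenvalue equation. Set $F(t) := W(tf)^* a_x W(tf)$ and differentiate: $F'(t) = W(tf)^*[a_x, a^*(f) - a(f)] W(tf) = f(x)$, since $[a_x, a^*(f)] = f(x)$ and $[a_x, a(f)] = 0$. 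Integrating from $0$ to $1$ gives $W^*(f) a_x W(f) = a_x + f(x)$, and the adjoint statement follows at once. Then
\[
a_x \psi(f) = a_x W(f)\Omega = W(f)\bigl(W(f)^* a_x W(f)\bigr)\Omega = W(f)(a_x + f(x))\Omega = f(x)\psi(f),
\]
using $a_x\Omega = 0$. The statement for $a(g)$ is then obtained by integrating against $\overline{g(x)}$.

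For part (4), I would bypass operator manipulations entirely by using the series expansion $\psi(f) = e^{-\|f\|^2/2}\sum_{n\geq 0}(n!)^{-1/2} f^{\otimes n}$. Since $\|f^{\otimes n}\|^2 = \|f\|^{2n}$ and $\mathcal{N}$ acts as multiplication by $n$ on the $n$-particle sector, the probability of finding exactly $n$ particles is $p_n = e^{-\|f\|^2}\|f\|^{2n}/n!$, which is Poisson with parameter $\lambda = \|f\|^2$. The claimed identities $\langle \mathcal{N}\rangle = \lambda$ and $\langle \mathcal{N}^2\rangle - \langle \mathcal{N}\rangle^2 = \lambda$ are then the standard mean and variance of the Poisson law. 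There is no serious obstacle in this lemma; the mildest subtlety is the domain issue for BCH in part (1), which I would dispatch by restricting to the dense subspace of vectors with finitely many nonzero particle sectors, where all manipulations are literal power-series identities.
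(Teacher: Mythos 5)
Your proof is correct, and it is exactly the "derivation from the definition of the Weyl operator and elementary calculation" that the paper invokes when it omits the proof of this lemma: BCH with a central commutator for (1), skew-adjointness of $a^*(f)-a(f)$ for (2), differentiation of $W(tf)^*a_xW(tf)$ for (3), and the explicit Poisson weights $e^{-\|f\|^2}\|f\|^{2n}/n!$ for (4). No gaps; the domain remark about finite-particle/analytic vectors is the right way to dispatch the only subtlety.
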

	
	We omit the proof of the lemma, since it can be derived from the definition of the Weyl operator and elementary calculation.
	
	Finally, we collect lemmas on the Weyl operator acting on a state with fixed number of particles, which will be used in Section \ref{sec:lemmas}. Define
	\begin{equation} \label{eq:d_N}
	d_{N}:=\frac{\sqrt{N!}}{N^{N/2}e^{-N/2}}.
	\end{equation}
	We note that $C^{-1} N^{1/4} \leq d_N \leq C N^{1/4}$ for some constant $C>0$ independent of $N$, which can be easily checked by using Stirling's formula.
	
	\begin{lem}\label{lem:coherent_all}
		There exists a constant $C>0$ independent of $N$ such that, for any $\varphi\in L^{2}(\mathbb{R}^{3})$ with $\|\varphi\|=1$, we have
		\[
		\left\Vert (\mathcal{N}+1)^{-1/2}W^{*}(\sqrt{N}\varphi)\frac{(a^{*}(\varphi))^{N}}{\sqrt{N!}}\Omega\right\Vert \leq\frac{C}{d_{N}}.
		\]
	\end{lem}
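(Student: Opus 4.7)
The plan is to compute $\xi := W^{*}(\sqrt N\varphi)\varphi^{\otimes N}$ explicitly in the orthonormal family $\{\varphi^{\otimes n}\}_{n\ge 0}$ and then reduce the estimate to controlling a concrete scalar sum. Using the Hadamard factorization $W^{*}(\sqrt N\varphi)=e^{-N/2}\exp(-\sqrt N\,a^{*}(\varphi))\exp(\sqrt N\,a(\varphi))$ together with $a(\varphi)\varphi^{\otimes k}=\sqrt k\,\varphi^{\otimes(k-1)}$ and $(a^{*}(\varphi))^{j}\varphi^{\otimes m}=\sqrt{(m+j)!/m!}\,\varphi^{\otimes(m+j)}$, a direct bookkeeping gives $\xi^{(n)}=c_n\varphi^{\otimes n}$ with a finite combinatorial expression for the scalar $c_n$; in particular $c_0=e^{-N/2}N^{N/2}/\sqrt{N!}=1/d_N$, consistent with the desired bound. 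Since the family is orthonormal, the goal reduces to showing $\sum_{n\ge 0}|c_n|^{2}/(n+1)\le C/d_N^{2}$.

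A more flexible reformulation comes from the identity $\varphi^{\otimes N}=d_N P_N\psi(\sqrt N\varphi)$ together with the Fourier decomposition $P_N=\int_0^{2\pi}(d\theta/2\pi)\,e^{i\theta(\mathcal N-N)}$ and the commutation rule $W^{*}(f)e^{i\theta\mathcal N}W(f)\Omega=e^{i\|f\|^{2}\sin\theta}\psi((e^{i\theta}-1)f)$, which follows from Lemma~\ref{lem:Basic_Weyl}. These give the integral representation
\[
\xi=d_N\int_0^{2\pi}\frac{d\theta}{2\pi}\,e^{iN(\sin\theta-\theta)}\,\psi\bigl(\sqrt N(e^{i\theta}-1)\varphi\bigr).
\]
For any coherent state $\psi(g)$ with $\|g\|^{2}=\lambda$, a one-line calculation yields $\|(\mathcal N+1)^{-1/2}\psi(g)\|^{2}=(1-e^{-\lambda})/\lambda$. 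Substituting $\lambda(\theta)=4N\sin^{2}(\theta/2)$ and applying Minkowski's integral inequality with a split of $\theta$ at $|\theta|\sim 1/\sqrt N$ produces a bound of the form $Cd_N(\log N)/\sqrt N$.

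The hard part is removing the stray $\log N$. I plan to address it by exploiting the oscillatory phase $e^{iN(\sin\theta-\theta)}$: since $\sin\theta-\theta\simeq -\theta^{3}/6$ near $\theta=0$, the phase has a cubically degenerate stationary point there, and an Airy-type / van der Corput estimate for the origin contribution gains the missing factor. An alternative route I will consider is to work directly with the explicit $c_n$ from the first step (which, after re-indexing, equals $d_N^{-1}\sqrt{n!/N^{n}}\,L_n^{N-n}(N)$ for generalized Laguerre polynomials) and to combine the normalization identity $\sum_n|c_n|^{2}=1$ with the moment identity $\sum_n n|c_n|^{2}=2N$ (computed from $W(\sqrt N\varphi)\mathcal N W^{*}(\sqrt N\varphi)=\mathcal N-\sqrt N\phi(\varphi)+N$) to control the low-$n$ tail of $|c_n|^{2}$ via a concentration argument showing that the bulk of $|c_n|^{2}$ lies in $n\sim N$. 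Either approach should yield $\sum_n|c_n|^{2}/(n+1)\le C/d_N^{2}$ and hence the claimed bound $\|(\mathcal N+1)^{-1/2}\xi\|\le C/d_N$.
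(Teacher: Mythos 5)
The paper does not actually prove this lemma; it is quoted from \cite{Chen2011} (Lemma 6.3 there), so you are attempting a proof the authors outsource. Your setup is the right one and your computations are correct as far as they go: $\xi=W^*(\sqrt N\varphi)\varphi^{\otimes N}$ does live on the span of $\{\varphi^{\otimes n}\}$, $c_0=1/d_N$, the integral representation via $P_N=\int_0^{2\pi}\frac{d\theta}{2\pi}e^{i\theta(\mathcal N-N)}$ and Lemma \ref{lem:Basic_Weyl} is valid, $\|(\mathcal N+1)^{-1/2}\psi(g)\|^2=(1-e^{-\|g\|^2})/\|g\|^2$ is right, and Minkowski with the split at $|\theta|\sim N^{-1/2}$ does give $C(\log N)/d_N$. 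But the entire content of the lemma is the removal of that logarithm, and at exactly that point your argument stops being a proof and becomes two unexecuted plans, one of which cannot work as stated.

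Concretely, your second route is insufficient in principle: the identities $\sum_n|c_n|^2=1$ and $\sum_n n|c_n|^2=2N$ are consistent with a distribution having $|c_n|^2\sim N^{-1/2}$ for all $0\le n\le\sqrt N$ (with the rest of the mass placed near $n\sim 2N$ to fix the mean), and for such a distribution $\sum_n|c_n|^2/(n+1)\sim N^{-1/2}\log N$. So no concentration argument based only on the zeroth and first moments can beat the logarithm; what is actually needed is pointwise decay in $n$, roughly $|c_n|^2\lesssim N^{-1/2}(n+1)^{-1/2}$ for $n\lesssim N$, i.e.\ genuine asymptotics of the Laguerre coefficients $L_n^{N-n}(N)$ near the turning point. (Note that even the refined sector bounds of Lemma \ref{lem:coherent_even_odd} do not suffice here: the uniform bound $\|P_{2k}\xi\|\le 2/d_N$ still yields $\sum_k d_N^{-2}(2k+1)^{-1}\sim d_N^{-2}\log N$.) Your first route is not wrong in spirit, but once you have applied Minkowski the phase $e^{iN(\sin\theta-\theta)}$ is gone; to exploit it you must instead estimate the double integral for $\|(\mathcal N+1)^{-1/2}\xi\|^2$ with amplitude $\langle\psi(g_{\theta'}),(\mathcal N+1)^{-1}\psi(g_\theta)\rangle=e^{-\frac12\|g_\theta\|^2-\frac12\|g_{\theta'}\|^2}(e^{\langle g_{\theta'},g_\theta\rangle}-1)/\langle g_{\theta'},g_\theta\rangle$, whose $\theta$-derivatives are of size $N$, so integration by parts or van der Corput does not automatically produce the needed gain. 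Until one of these routes is carried out quantitatively, the proposal establishes only the bound $C(\log N)/d_N$, not the stated $C/d_N$.
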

	
	\begin{proof}
		See  \cite[Lemma 6.3]{Chen2011}.
	\end{proof}
	
	\begin{lem} \label{lem:coherent_even_odd}
		Let $P_m$ be the projection onto the $m$-particle sector of the Fock space $\mathcal{F}$ for a non-negative integer $m$. Then, for any non-negative integer $k\leq(1/2)N^{1/3}$,
		\[
		\left\Vert P_{2k}W^{*}(\sqrt{N}\varphi)\frac{(a^{*}(\varphi))^{N}}{\sqrt{N!}}\Omega\right\Vert \leq\frac{2}{d_{N}}
		\]
		and
		\[
		\left\Vert P_{2k+1}W^{*}(\sqrt{N}\varphi)\frac{(a^{*}(\varphi))^{N}}{\sqrt{N!}}\Omega\right\Vert \leq\frac{2(k+1)^{3/2}}{d_{N}\sqrt{N}}.
		\]
	\end{lem}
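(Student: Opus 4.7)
The plan is to compute the amplitudes $\alpha_{m}:=\langle\varphi^{\otimes m},\,W^{*}(\sqrt{N}\varphi)\varphi^{\otimes N}\rangle$ in closed form and then bound them via a complex-analytic argument that exploits the parity of the Gaussian limit of an auxiliary generating function. Using Lemma~\ref{lem:Basic_Weyl}(3) to shift the creation operator ($W^{*}(\sqrt{N}\varphi)a^{*}(\varphi)W(\sqrt{N}\varphi)=a^{*}(\varphi)+\sqrt{N}$) together with $W^{*}(\sqrt{N}\varphi)\Omega=\psi(-\sqrt{N}\varphi)$, I would rewrite the target as $(a^{*}(\varphi)+\sqrt{N})^{N}\psi(-\sqrt{N}\varphi)/\sqrt{N!}$, expand, and collect the $m$-particle sector to obtain
\[
\alpha_{m}=\frac{(-1)^{m}N^{m/2}}{d_{N}\sqrt{m!}}\,T_{m}(N),\qquad T_{m}(N):=\sum_{j=0}^{m}(-1)^{j}\binom{m}{j}\frac{(N)_{j}}{N^{j}},
\]
where $(N)_{j}=N(N-1)\cdots(N-j+1)$. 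Since every operator above acts inside the Fock subspace generated by $\varphi$, we have $\|P_{m}W^{*}(\sqrt{N}\varphi)\varphi^{\otimes N}\|=|\alpha_{m}|$, and the task reduces to bounding $|T_{m}(N)|$.

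The generating-function identity $T_{m}(N)=(-1)^{m}m!\,N^{-m}[t^{m}]\bigl(e^{-Nt}(1+t)^{N}\bigr)$ combined with the substitution $t=s/\sqrt{N}$ gives
\[
|\alpha_{m}|=\frac{\sqrt{m!}}{d_{N}}\,\bigl|[s^{m}]\tilde{f}(s)\bigr|,\qquad \tilde{f}(s):=\exp\!\Bigl(N\bigl[\log(1+s/\sqrt{N})-s/\sqrt{N}\bigr]\Bigr).
\]
The crucial structural fact is the factorization $\tilde{f}(s)=e^{-s^{2}/2}\phi(s)$ with $\phi(s):=\exp\!\bigl(\sum_{j\geq 3}\tfrac{(-1)^{j-1}s^{j}}{jN^{(j-2)/2}}\bigr)$. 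Direct summation of the tail in the exponent yields $|\phi(s)-1|\leq C|s|^{3}/\sqrt{N}$ on $|s|\leq N^{1/6}$, while $\log(1+u)\leq u$ applied to $u=2\operatorname{Re}(s)/\sqrt{N}+|s|^{2}/N$ gives $|\tilde{f}(s)|\leq e^{|s|^{2}/2}$ on $|s|\leq\sqrt{N}/2$.

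The heart of the argument is the sharp estimate
\[
\sqrt{m!}\,\bigl|[s^{m}](\tilde{f}(s)-e^{-s^{2}/2})\bigr|\leq \frac{C m^{5/4}}{\sqrt{N}}\qquad (m\leq N^{1/3}),
\]
obtained by Cauchy's integral formula for $e^{-s^{2}/2}(\phi-1)$ on $|s|=R=\sqrt{\max(1,m)}$. Combining $|\phi-1|\leq CR^{3}/\sqrt{N}$ on this contour with the averaging identity
\[
\int_{0}^{2\pi}\bigl|e^{-R^{2}e^{2i\theta}/2}\bigr|\,d\theta = 2\pi I_{0}(R^{2}/2)\leq \frac{C e^{R^{2}/2}}{\sqrt{1+R^{2}}}
\]
lets Stirling's formula absorb $\sqrt{m!}$ against $e^{R^{2}/2}/R^{m-2}$, leaving $m^{5/4}/\sqrt{N}$. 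The Bessel average $I_{0}$ is the technical pivot here: a naive supremum bound on $|e^{-s^{2}/2}|$ along $|s|=R$ would lose a factor of order $\sqrt{m}$ and produce only $m^{7/4}/\sqrt{N}$, which is too weak to keep the even-sector bound uniform in $k$.

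Splitting by parity then finishes the proof. For even $m=2k$, the exact value $[s^{2k}]e^{-s^{2}/2}=(-1)^{k}/(2^{k}k!)$ together with $\sqrt{(2k)!}/(2^{k}k!)=\sqrt{\binom{2k}{k}}/2^{k}\leq 1$ and $(2k)^{5/4}/\sqrt{N}\leq 1$ on $k\leq N^{1/3}/2$ yields $\sqrt{(2k)!}\,|[s^{2k}]\tilde{f}|\leq 2$ and hence $\|P_{2k}W^{*}(\sqrt{N}\varphi)\varphi^{\otimes N}\|\leq 2/d_{N}$. For odd $m=2k+1$, the evenness of $e^{-s^{2}/2}$ forces $[s^{2k+1}]e^{-s^{2}/2}=0$, so the whole coefficient comes from the remainder; the core estimate then gives $\sqrt{(2k+1)!}\,|[s^{2k+1}]\tilde{f}|\leq C(2k+1)^{5/4}/\sqrt{N}\leq 2(k+1)^{3/2}/\sqrt{N}$, using $5/4<3/2$ so that $(2k+1)^{5/4}/(k+1)^{3/2}$ is bounded by an absolute constant (maximized near $k=2$). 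This delivers $\|P_{2k+1}W^{*}(\sqrt{N}\varphi)\varphi^{\otimes N}\|\leq 2(k+1)^{3/2}/(d_{N}\sqrt{N})$ and completes the proof.
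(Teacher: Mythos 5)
Your argument is correct in substance, and it is genuinely different from what the paper does: the paper offers no proof of this lemma at all, deferring to \cite[Lemma 7.2]{Lee2013}, whereas you give a self-contained derivation. Your reduction to the scalar amplitudes $\alpha_m$ is legitimate (the state lives in the Fock space over $\mathbb{C}\varphi$, so $P_m$ of it is a multiple of $\varphi^{\otimes m}$), your closed form $\alpha_m=(-1)^mN^{m/2}T_m(N)/(d_N\sqrt{m!})$ and the identity $|\alpha_m|=\sqrt{m!}\,|[s^m]\tilde f(s)|/d_N$ check out against small cases ($\alpha_1=0$, $|\alpha_2|=1/(\sqrt2 d_N)$, $|\alpha_3|=2/(\sqrt6 d_N\sqrt N)$), and the parity mechanism is exactly right: the Gaussian $e^{-s^2/2}$ carries the full $O(1)$ even coefficients while the odd coefficients live entirely in the $O(N^{-1/2})$ perturbation $\phi-1$. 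The Cauchy estimate on $|s|=\sqrt{\max(1,m)}$ with the Bessel average $2\pi I_0(R^2/2)\leq Ce^{R^2/2}/\sqrt{1+R^2}$ is a nice touch, and you are right that it is needed — the crude sup bound gives $m^{7/4}/\sqrt N$, which is not $O(1)$ for $m\sim N^{1/3}$. Two small points you should tidy up. First, for $m=2k+1$ with $k$ at the top of the allowed range, $R=\sqrt{2k+1}$ can slightly exceed $N^{1/6}$; the bound $|\phi(s)-1|\leq C|s|^3/\sqrt N$ still holds on, say, $|s|\leq\sqrt N/2$ by the same geometric-series estimate, so just state it on that larger disc. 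Second, your argument produces unspecified absolute constants, so it yields $C/d_N$ and $C(k+1)^{3/2}/(d_N\sqrt N)$ rather than the literal constants $2$ of the statement; closing that gap for all $N$ (not just $N$ large) would require explicit tracking or a separate check of small $N$. Since only the $k$- and $N$-dependence is used downstream (in the proof of Proposition \ref{prop:Et2}), this is harmless, but it is worth acknowledging.
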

	
	\begin{proof}
		See \cite[Lemma 7.2]{Lee2013}.
	\end{proof}

	\section{Proof of the Main Result}\label{sec:Pf-of-Main-Thm}
	
	In this section, we prove the main result of the paper, Theorem \ref{thm:Main_Theorem}.
	
	\subsection{Unitary operators and their generators}
	
	Let $\gamma_{N,t}^{(1)}$ be the kernel of the one-particle marginal density associated with the time evolution of the factorized state $\varphi^{\otimes N}$ with respect to the Hamiltonian $\mathcal{H}_N$. By definition,
	\begin{align}
	\gamma_{N,t}^{(1)} & =\frac{\left\langle e^{-\mathrm{i}\mathcal{H}_{N}t}\varphi^{\otimes N},a_{y}^{*}a_{x}e^{-\mathrm{i}\mathcal{H}_{N}t}\varphi^{\otimes N}\right\rangle }{\left\langle e^{-\mathrm{i}\mathcal{H}_{N}t}\varphi^{\otimes N},\mathcal{N}e^{-\mathrm{i}\mathcal{H}_{N}t}\varphi^{\otimes N}\right\rangle } =\frac{1}{N}\left\langle \varphi^{\otimes N}, e^{\mathrm{i}\mathcal{H}_{N}t} a_{y}^{*}a_{x}e^{-\mathrm{i}\mathcal{H}_{N}t}\varphi^{\otimes N}\right\rangle \notag \\
	& =\frac{1}{N}\left\langle \frac{\left(a^{*}(\varphi)\right)^{N}}{\sqrt{N!}}\Omega,e^{\mathrm{i}\mathcal{H}_{N}t}a_{y}^{*}a_{x}e^{-\mathrm{i}\mathcal{H}_{N}t}\frac{\left(a^{*}(\varphi)\right)^{N}}{\sqrt{N!}}\Omega\right\rangle. \label{eq:marginal_factorized}
	\end{align}
	If we use the coherent states instead of the factorized state in \eqref{eq:marginal_factorized} and expand $a_{y}^{*}a_{x}$ around $N \overline{\varphi_t(y)} \varphi_t(x)$, then we are lead to consider the operator
	\begin{align} \label{eq:introducing U}
	&W^{*}(\sqrt{N}\varphi_{s}) e^{\mathrm{i}\mathcal{H}_{N}\left(t-s\right)} (a_x - \sqrt{N} \varphi_t(x) ) e^{-\mathrm{i}\mathcal{H}_{N}\left(t-s\right)} W(\sqrt{N}\varphi_{s}) \\
	&= W^{*}(\sqrt{N}\varphi_{s}) e^{\mathrm{i}\mathcal{H}_{N}\left(t-s\right)} W(\sqrt{N}\varphi_{t}) a_x W^{*}(\sqrt{N}\varphi_{t}) e^{-\mathrm{i}\mathcal{H}_{N}\left(t-s\right)} W(\sqrt{N}\varphi_{s}). \notag
	\end{align}
	To understand further the operator $W^{*}(\sqrt{N}\varphi_{t}) e^{-\mathrm{i}\mathcal{H}_{N}\left(t-s\right)} W(\sqrt{N}\varphi_{s})$, we differentiate it with respect to $t$ as in \cite{Rodnianski2009,Chen2011a,Lee2013}. Direct computation shows that
	\begin{equation} \label{eq:derivative decomposition}
	\mathrm{i} \partial_t W^{*}(\sqrt{N}\varphi_{t}) e^{-\mathrm{i}\mathcal{H}_{N}\left(t-s\right)} W(\sqrt{N}\varphi_{s}) =: \left( \sum_{k=0}^{4} \mathcal{L}_k (t) \right) W^{*}(\sqrt{N}\varphi_{t}) e^{-\mathrm{i}\mathcal{H}_{N}\left(t-s\right)} W(\sqrt{N}\varphi_{s}),
	\end{equation}
	where $\mathcal{L}_k$ is with $k$ creation and/or annihilation operators. The exact formulas for $\mathcal{L}_k$ are as follows:
	\begin{align}
	{{\mathcal{L}_0 (t;s)}} &:= \frac{N}{2} \int_s^t \mathrm{d}\tau \int \mathrm{d}x (V *| \varphi_{\tau}|^2 ) (x) |\varphi_{\tau} (x)|^2, \\
	\mathcal{L}_1 (t) &= 0, \\
	\mathcal{L}_{2}(t) &:= \int \mathrm{d} x\, a_{x}^{*}\left(-\Delta\right)a_{x}+\int \mathrm{d} x\,\left(V*\left|\varphi_{t}\right|^{2}\right)\left(x\right)a_{x}^{*}a_{x} +\int \mathrm{d} x \mathrm{d} y\, V\left(x-y\right)\overline{\varphi_{t}\left(x\right)}\varphi_{t}\left(y\right)a_{y}^{*}a_{x} \notag \\
	& \qquad +\frac{1}{2}\int \mathrm{d} x \mathrm{d} y\, V\left(x-y\right)\left(\varphi_{t}\left(x\right)\varphi_{t}\left(y\right)a_{x}^{*}a_{y}^{*}+\overline{\varphi_{t}\left(x\right)}\,\overline{\varphi_{t}\left(y\right)}a_{x}a_{y}\right), \label{eq:L_2} \\
	\mathcal{L}_{3}(t) &:= \frac{1}{\sqrt{N}}\int \mathrm{d} x \mathrm{d} y\, V\left(x-y\right)\left(\varphi_{t}\left(y\right)a_{x}^{*}a_{y}^{*}+\overline{\varphi_{t}\left(y\right)}a_{x}^{*}a_{y}\right)a_{x} \label{eq:L_3} \\
	\mathcal{L}_{4} &:= \frac{1}{2N}\int \mathrm{d} x \mathrm{d} y\, V\left(x-y\right)a_{x}^{*}a_{y}^{*}a_{x}a_{y}. \label{eq:L_4}
	\end{align}
	Note that {{$\mathcal{L}_0 (t;s)$}} is not an operator but a complex-valued function, which we call the phase factor. Although this term contains the factor $N$, we can cancel this term by multiplying the right-hand side of \eqref{eq:derivative decomposition} by a function {{$e^{-\mathrm{i} \mathcal{L}_0 (t;s)/2}$}} (see Section 3 of \cite{Lee2013}). Thus, if we define the unitary operator $\mathcal{U}(t;s)$ by
	\begin{equation}
	\mathcal{U} (t;s) := e^{-\mathrm{i}{\mathcal{L}_0(t;s)/2}} W^{*}(\sqrt{N}\varphi_{t}) e^{-\mathrm{i}\mathcal{H}_{N}\left(t-s\right)} W(\sqrt{N}\varphi_{s}){,}
	\end{equation}
	then
	\begin{equation}\label{eq:def_mathcalU}
	\mathrm{i}\partial_{t}\mathcal{U}\left(t;s\right)=\left(\mathcal{L}_{2}+\mathcal{L}_{3}+\mathcal{L}_{4}\right)\mathcal{U}\left(t;s\right)\quad\text{and}\quad\mathcal{U}\left(s;s\right)=I
	\end{equation}
	and
	\begin{equation}
	W^{*}(\sqrt{N}\varphi_{s})e^{\mathrm{i}\mathcal{H}_{N}\left(t-s\right)}\left(a_{x}-\sqrt{N}\varphi_{t}\left(x\right)\right)e^{-\mathrm{i}\mathcal{H}_{N}\left(t-s\right)}W(\sqrt{N}\varphi_{s})=\mathcal{U}^{*}\left(t;s\right)a_{x}\mathcal{U}\left(t;s\right).
	\end{equation}
	
	Let $\tilde{\mathcal{L}}=\mathcal{L}_{2}+\mathcal{L}_{4}$ and define the unitary operator $\tilde{\mathcal{U}}\left(t;s\right)$ by
	\begin{equation}\label{eq:def_mathcaltildeU}
	\mathrm{i}\partial_{t}\tilde{\mathcal{U}}\left(t;s\right)=\tilde{\mathcal{L}}\left(t\right)\tilde{\mathcal{U}}\left(t;s\right)\quad\text{and } \quad\tilde{\mathcal{U}}\left(s;s\right)=1.
	\end{equation}
	Since $\tilde{\mathcal{L}}$ does not change the parity of the number of particles,
	\begin{equation}
	\left\langle \Omega,\tilde{\mathcal{U}}^{*}\left(t;0\right)a_{y}\tilde{\mathcal{U}}\left(t;0\right)\Omega\right\rangle =\left\langle \Omega,\tilde{\mathcal{U}}^{*}\left(t;0\right)a_{x}^{*}\tilde{\mathcal{U}}\left(t;0\right)\Omega\right\rangle =0\label{eq:Parity_Consevation}
	\end{equation}
	We refer to Lemma 8.2 in \cite{Lee2013} for a rigorous proof of \eqref{eq:Parity_Consevation}.

	\subsection{Proof of Theorem \ref{thm:Main_Theorem}}
	
	As explained in Section \ref{sec:intro}, we use the technique developed in \cite{Lee2013} to prove Theorem \ref{thm:Main_Theorem}. The proof of Theorem \ref{thm:Main_Theorem} consists of the following two propositions. Recall the definition of $d_N$ in \eqref{eq:d_N}. {In this section, we use $\mathcal{U}(t)$ instead of $\mathcal{U}(t;s)$ for notational simplicity.}
	
	\begin{prop} \label{prop:Et1}
		Suppose that the assumptions in Theorem \ref{thm:Main_Theorem} hold. For a Hermitian operator $J$ on $L^{2}(\mathbb{R}^{3})$, let
		\[
		E_{t}^{1}(J):=\frac{d_{N}}{N}\left\langle W^{*}(\sqrt{N}\varphi)\frac{(a^{*}(\varphi))^{N}}{\sqrt{N!}}\Omega,\mathcal{U}^{*}(t)d\Gamma(J)\mathcal{U}(t)\Omega\right\rangle
		\]
		Then, there exist constants $C$ and $K$, depending only on $\|V_2\|_2$, $\|V_\infty\|_\infty$, and $\sup_{|s|\leq t}\|\varphi_{s}\|_{H^{1}}$, such that
		\[
		\left|E_{t}^{1}(J)\right|\leq\frac{C\|J\|e^{K{t^{3/2}}}}{N}.
		\]
	\end{prop}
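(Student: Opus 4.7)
My plan is to estimate
\[
E_t^1(J) = \frac{d_N}{N}\langle \mathcal{U}(t)\psi_0, d\Gamma(J)\mathcal{U}(t)\Omega\rangle, \qquad \psi_0 := W^*(\sqrt{N}\varphi)\frac{(a^*(\varphi))^N}{\sqrt{N!}}\Omega,
\]
by approximating $\mathcal{U}(t)$ with the parity-preserving unitary $\tilde{\mathcal{U}}(t)$ and exploiting the weighted bound $\|(\mathcal{N}+1)^{-1/2}\psi_0\|\leq C/d_N$ from Lemma~\ref{lem:coherent_all}. The Duhamel identity
\[
\mathcal{U}(t)-\tilde{\mathcal{U}}(t) = -\mathrm{i}\int_0^t \tilde{\mathcal{U}}(t;s)\mathcal{L}_3(s)\mathcal{U}(s)\,\mathrm{d}s,
\]
applied to each copy of $\mathcal{U}(t)$ in $\langle\mathcal{U}\psi_0, d\Gamma(J)\mathcal{U}\Omega\rangle$, decomposes the inner product into a main term $\langle\psi_0,\tilde{\mathcal{U}}^*(t)d\Gamma(J)\tilde{\mathcal{U}}(t)\Omega\rangle$ plus three error terms, each carrying at least one explicit factor of $N^{-1/2}$ from \eqref{eq:L_3}.

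For the main term, since $\tilde{\mathcal{L}}=\mathcal{L}_2+\mathcal{L}_4$ preserves the parity of the particle number, $\Omega$ lies in the even sector, and $d\Gamma(J)$ respects particle number, the vector $\tilde{\mathcal{U}}^*(t)d\Gamma(J)\tilde{\mathcal{U}}(t)\Omega$ belongs to the even subspace of $\mathcal{F}$. Inserting $1=(\mathcal{N}+1)^{-1/2}(\mathcal{N}+1)^{1/2}$ and applying Cauchy--Schwarz,
\[
|\langle\psi_0,\tilde{\mathcal{U}}^*d\Gamma(J)\tilde{\mathcal{U}}\Omega\rangle| \leq \|(\mathcal{N}+1)^{-1/2}\psi_0\|\cdot\|(\mathcal{N}+1)^{1/2}\tilde{\mathcal{U}}^*d\Gamma(J)\tilde{\mathcal{U}}\Omega\|.
\]
Lemma~\ref{lem:coherent_all} bounds the first factor by $C/d_N$. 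For the second, I invoke a moment-propagation estimate of the form $\tilde{\mathcal{U}}(t)(\mathcal{N}+1)^k\tilde{\mathcal{U}}^*(t)\leq Ce^{Kt^{3/2}}(\mathcal{N}+1)^k$ (to be established separately) and the elementary inequality $\|d\Gamma(J)\psi\|\leq\|J\|\|\mathcal{N}\psi\|$, reducing the second factor to $C\|J\|\|(\mathcal{N}+1)^{3/2}\tilde{\mathcal{U}}(t)\Omega\|\leq C\|J\|e^{Kt^{3/2}}$, since $\tilde{\mathcal{U}}(t)\Omega$ evolves from the vacuum. Multiplied by the prefactor $d_N/N$, this yields exactly $C\|J\|e^{Kt^{3/2}}/N$.

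Each of the three error terms carries at least one $N^{-1/2}$ from $\mathcal{L}_3$. For the term $\langle\psi_0,\mathcal{U}^*d\Gamma(J)(\mathcal{U}-\tilde{\mathcal{U}})\Omega\rangle$, direct Cauchy--Schwarz using $\|\psi_0\|=1$, together with the bounded $\mathcal{N}$-moments of $\mathcal{U}(s)\Omega$, produces a contribution of order $d_N/N\cdot N^{-1/2}\|J\|e^{Kt^{3/2}}\leq C\|J\|e^{Kt^{3/2}}/N^{5/4}$. For the remaining two error terms, in which the Duhamel correction acts on the $\psi_0$-side, a naive Cauchy--Schwarz fails because $\|(\mathcal{N}+1)^k\psi_0\|$ is of order $N^k$; I instead move all unitaries to the $\psi_0$-side, reducing the problem to pairings of the form $\langle\psi_0,\mathcal{U}^*(s)\mathcal{L}_3(s)\tilde{\mathcal{U}}^*(t;s)d\Gamma(J)\tilde{\mathcal{U}}(t)\Omega\rangle$, and again apply Cauchy--Schwarz with the weighted bound $\|(\mathcal{N}+1)^{-1/2}\psi_0\|\leq C/d_N$ to absorb the large moments. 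After propagation of $(\mathcal{N}+1)^{1/2}$ through the remaining unitaries and use of the $N^{-1/2}$ from $\mathcal{L}_3$, these contributions are of order $C\|J\|e^{Kt^{3/2}}/N^{3/2}$. All three error contributions are therefore strictly smaller than $1/N$.

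The main obstacle is the moment-propagation estimate $\tilde{\mathcal{U}}(t)(\mathcal{N}+1)^k\tilde{\mathcal{U}}^*(t)\leq Ce^{Kt^{3/2}}(\mathcal{N}+1)^k$ (and its analog for $\mathcal{U}(t)$) for small $k$. The Hardy-type bound $V^2\leq D(1-\Delta)$ used in \cite{Chen2011a} for Coulomb potentials is unavailable for $V\in L^2$, which is precisely the reason for retaining $\mathcal{L}_4$ in the generator of $\tilde{\mathcal{U}}$: differencing against $\tilde{\mathcal{U}}$ rather than the simpler $\mathcal{U}_2$ of \cite{Chen2011a} removes the need to estimate $\mathcal{L}_4$ separately by kinetic-energy control. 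To close the Gronwall argument without a pointwise-in-$t$ bound on $\|\varphi_t\|_{L^\infty}$, I use the Strichartz-type regularity from Lemma~\ref{lem:decaying}: commutators of $\tilde{\mathcal{L}}(t)$ and $\mathcal{L}_3(t)$ with $(\mathcal{N}+1)^k$ produce coefficients involving $\|\varphi_t\|_{L^\infty}$ that are only square-integrable in $t$, with $\|\varphi\|_{L^2((0,t),L^\infty)}\leq C(1+t)$. Integrating a Gronwall inequality driven by such an $L^2_t$-coefficient produces an exponent of order $\sqrt{t}\cdot\|\varphi\|_{L^2((0,t),L^\infty)}\lesssim t^{1/2}(1+t)\lesssim 1+t^{3/2}$, which explains the $e^{Kt^{3/2}}$ growth appearing in the statement.
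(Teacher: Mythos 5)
Your argument is correct, but it takes a genuinely longer route than the paper for this particular proposition. The paper's proof of Proposition \ref{prop:Et1} is exactly your ``main term'' estimate applied directly to $\mathcal{U}(t)$ rather than to $\tilde{\mathcal{U}}(t)$: insert $(\mathcal{N}+1)^{-1/2}(\mathcal{N}+1)^{1/2}$, bound $\bigl\Vert(\mathcal{N}+1)^{-1/2}W^*(\sqrt N\varphi)\tfrac{(a^*(\varphi))^N}{\sqrt{N!}}\Omega\bigr\Vert\leq C/d_N$ by Lemma \ref{lem:coherent_all}, and control $\bigl\Vert(\mathcal{N}+1)^{1/2}\mathcal{U}^*(t)d\Gamma(J)\mathcal{U}(t)\Omega\bigr\Vert\leq C\Vert J\Vert e^{Kt^{3/2}}$ by the moment-propagation Lemma \ref{lem:NjU} for the full dynamics together with \eqref{eq:J-bd} (the loss of powers of $\mathcal{N}$ in Lemma \ref{lem:NjU} is harmless on vectors built from the vacuum). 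The prefactor $d_N/N$ times $C/d_N$ already yields $1/N$, so no Duhamel comparison with $\tilde{\mathcal{U}}$ and no parity information is needed here; indeed your even-sector observation for the main term is never actually used in your own estimate. The decomposition $\mathcal{U}=\tilde{\mathcal{U}}+(\mathcal{U}-\tilde{\mathcal{U}})$ and the odd-sector gain are the machinery the paper reserves for Proposition \ref{prop:Et2}, where the prefactor is only $d_N/\sqrt N$ and the missing $N^{-1/2}$ must come from Lemma \ref{lem:coherent_even_odd} applied to the parity-preserving evolution. Your error-term estimates are sound --- they rest on the same ingredients as Lemmas \ref{lem:N_1_L3} and \ref{lem:NjUphiUtildeUphitildeU}, with $\mathcal{L}_3$ supplying the $N^{-1/2}$ and the time integral of $\Vert\varphi_s\Vert_{L^\infty}$ controlled via \eqref{eq:VphiLt2Lxinfty} --- and your identification of the $e^{Kt^{3/2}}$ growth as Gronwall driven by an $L^2_t$ coefficient from the Strichartz estimate matches the paper. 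In short: correct, but over-engineered for $E_t^1$.
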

	
	\begin{prop} \label{prop:Et2}
		Suppose that the assumptions in Theorem \ref{thm:Main_Theorem} hold. For a Hermitian operator $J$ on $L^{2}(\mathbb{R}^{3})$, let
		\[
		E_{t}^{2}(J):=\frac{d_{N}}{\sqrt{N}}\left\langle W^{*}(\sqrt{N}\varphi)\frac{(a^{*}(\varphi))^{N}}{\sqrt{N!}}\Omega,\mathcal{U}^{*}(t)\phi(J\varphi_{t})\mathcal{U}(t)\Omega\right\rangle
		\]
		Then, there exist constants $C$ and $K$, depending only on $\|V_2\|_2$, $\|V_\infty\|_\infty$, and $\sup_{|s|\leq t}\|\varphi_{s}\|_{H^{1}}$, such that
		\[
		\left|E_{t}^{2}(J)\right|\leq\frac{C\|J\|e^{K{t^{3/2}}}}{N}.
		\]
	\end{prop}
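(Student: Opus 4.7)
Moving $\mathcal{U}^{*}(t)$ to the left I write
\[
E_t^2(J) = \frac{d_N}{\sqrt{N}}\bigl\langle \mathcal{U}(t)\xi_N,\,\phi(J\varphi_t)\,\mathcal{U}(t)\Omega\bigr\rangle,\qquad \xi_N := W^{*}(\sqrt{N}\varphi)\frac{(a^{*}(\varphi))^{N}}{\sqrt{N!}}\Omega,
\]
and expand \emph{both} copies of $\mathcal{U}(t)$ by the Duhamel identity
\[
\mathcal{U}(t) = \tilde{\mathcal{U}}(t) - \mathrm{i}\int_0^t \tilde{\mathcal{U}}(t)\tilde{\mathcal{U}}^{*}(s)\mathcal{L}_3(s)\mathcal{U}(s)\,\mathrm{d}s.
\]
This splits $E_t^2(J)$ into a main term $\mathrm{I}:=\frac{d_N}{\sqrt{N}}\langle \tilde{\mathcal{U}}(t)\xi_N,\phi(J\varphi_t)\tilde{\mathcal{U}}(t)\Omega\rangle$ and three remainder terms $\mathrm{II},\mathrm{III},\mathrm{IV}$ which contain one, one, and two explicit factors of $\mathcal{L}_3$, respectively, each of which is already endowed with a prefactor $N^{-1/2}$ by \eqref{eq:L_3}.

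\textbf{Main term via parity.}
Because $\tilde{\mathcal{L}}=\mathcal{L}_2+\mathcal{L}_4$ preserves the parity of the particle number, $\tilde{\mathcal{U}}(t)\Omega$ lives in the even-particle sectors; applying the parity-flipping operator $\phi(J\varphi_t)$ and then conjugating by the parity-preserving $\tilde{\mathcal{U}}^{*}(t)$ produces a vector $Y := \tilde{\mathcal{U}}^{*}(t)\phi(J\varphi_t)\tilde{\mathcal{U}}(t)\Omega$ supported in the odd sectors, so only the odd components of $\xi_N$ contribute to $\mathrm{I}=\frac{d_N}{\sqrt{N}}\langle \xi_N,Y\rangle$. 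Lemma~\ref{lem:coherent_even_odd} gives $\|P_{2k+1}\xi_N\|\leq 2(k+1)^{3/2}/(d_N\sqrt{N})$ for $k\leq N^{1/3}/2$, and a weighted Cauchy--Schwarz against the bound $\sum_k (k+1)^5\|P_{2k+1}Y\|^2\leq \|(\mathcal{N}+1)^{5/2}Y\|^2$ yields
\[
|\mathrm{I}| \leq \frac{d_N}{\sqrt{N}}\cdot\frac{C}{d_N\sqrt{N}}\|(\mathcal{N}+1)^{5/2}Y\|\ +\ (\text{tail}) \leq \frac{C}{N}\|(\mathcal{N}+1)^{5/2}Y\| + (\text{tail}).
\]
The tail contribution from $k>N^{1/3}/2$ is handled by a Markov cut-off $\|\mathbf{1}_{\mathcal{N}>N^{1/3}}Y\|\leq N^{-m/3}\|(\mathcal{N}+1)^mY\|$ with $m$ chosen sufficiently large; it is of much better order than $1/N$.

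\textbf{Remainder terms.}
Each remainder is treated by iterating the Duhamel identity once more on the $\mathcal{U}$'s still present and re-applying the parity bookkeeping. For the most delicate one, namely
\[
\mathrm{II}\sim \frac{d_N}{\sqrt{N}}\int_0^t\bigl\langle \tilde{\mathcal{U}}(t)\xi_N,\,\phi(J\varphi_t)\,\tilde{\mathcal{U}}(t)\tilde{\mathcal{U}}^{*}(s)\mathcal{L}_3(s)\mathcal{U}(s)\Omega\bigr\rangle\,\mathrm{d}s,
\]
a second Duhamel iteration replaces $\mathcal{U}(s)\Omega$ by $\tilde{\mathcal{U}}(s)\Omega$ plus a sub-error carrying a second $\mathcal{L}_3$. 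The leading piece now has $\mathcal{L}_3(s)\tilde{\mathcal{U}}(s)\Omega$ odd, so after $\phi(J\varphi_t)$ the full vector becomes \emph{even}, and the inner product picks out only the even components of $\xi_N$, satisfying $\|P_{2k}\xi_N\|\leq 2/d_N$. Combining the three small factors I obtain the size
\[
\frac{d_N}{\sqrt{N}}\cdot\frac{1}{\sqrt{N}}\cdot\frac{1}{d_N}\cdot e^{Kt^{3/2}}=\frac{e^{Kt^{3/2}}}{N}.
\]
Term $\mathrm{III}$ is treated symmetrically; term $\mathrm{IV}$ and every sub-error already contain two factors of $\mathcal{L}_3$, hence are of size $\lesssim d_N/N^{3/2}\sim N^{-5/4}$, which is more than sufficient.

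\textbf{Main obstacle.}
The heavy lifting is hidden in the polynomial $\mathcal{N}$-moment bounds $\|(\mathcal{N}+1)^m\mathcal{U}(t)\Omega\|,\ \|(\mathcal{N}+1)^m\tilde{\mathcal{U}}(t)\Omega\|\leq Ce^{Kt^{3/2}}$ for each fixed $m$, together with companion estimates for $\mathcal{L}_3(s)\tilde{\mathcal{U}}(s)\Omega$. In the singular-potential regime the Hardy-type bound $V^2\lesssim 1-\Delta$ used in \cite{Chen2011a} is no longer at our disposal, so the Gronwall argument applied to $\partial_t\langle \tilde{\mathcal{U}}(t)\Omega,(\mathcal{N}+1)^{2m}\tilde{\mathcal{U}}(t)\Omega\rangle$ must be closed using the Strichartz output of Lemma~\ref{lem:decaying}: the driving term depends on $\|\varphi_t\|_{L^\infty}$, which is only locally square-integrable in time, and the elementary estimate $\int_0^t \|\varphi_s\|_{L^\infty}\,\mathrm{d}s\leq t^{1/2}\|\varphi\|_{L^2((0,t),L^\infty)}\lesssim t^{1/2}(1+t)$ is precisely what upgrades linear-in-$t$ Gronwall exponents to the $t^{3/2}$ growth that appears in the conclusion.
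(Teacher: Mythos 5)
Your proposal is correct and its core mechanism --- split off the $\tilde{\mathcal{U}}$-dynamics, kill the naively $O(N^{-1/2})$ main term by observing that $\tilde{\mathcal{U}}^{*}(t)\phi(J\varphi_t)\tilde{\mathcal{U}}(t)\Omega$ is purely odd and pairing it against the odd sectors of $W^{*}(\sqrt{N}\varphi)\frac{(a^{*}(\varphi))^{N}}{\sqrt{N!}}\Omega$ via Lemma \ref{lem:coherent_even_odd} with a weighted Cauchy--Schwarz and a Markov tail cut at $k\sim N^{1/3}$ --- is exactly the paper's argument, down to the exponent $5/2$ on the $\mathcal{N}$-weight. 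Where you diverge is the remainder: the paper does not iterate Duhamel a second time or track parity there. It writes the remainder as $\mathcal{R}(J\varphi_t)=\mathcal{U}^{*}(t)\phi(J\varphi_t)\mathcal{U}(t)-\tilde{\mathcal{U}}^{*}(t)\phi(J\varphi_t)\tilde{\mathcal{U}}(t)$, proves once and for all that $\|(\mathcal{N}+1)^{1/2}\mathcal{R}(J\varphi_t)\Omega\|\leq C\|J\|e^{Kt^{3/2}}/\sqrt{N}$ (Lemma \ref{lem:NjUphiUtildeUphitildeU}, a single Duhamel insertion of $\mathcal{L}_3$ plus the moment bounds of Lemmas \ref{lem:NjU} and \ref{lem:tildeNj}), and then extracts the second small factor $1/d_N$ not from parity but from the weighted coherent-state bound $\|(\mathcal{N}+1)^{-1/2}W^{*}(\sqrt{N}\varphi)\frac{(a^{*}(\varphi))^{N}}{\sqrt{N!}}\Omega\|\leq C/d_N$ of Lemma \ref{lem:coherent_all}; the product $\frac{d_N}{\sqrt{N}}\cdot\frac{1}{d_N}\cdot\frac{1}{\sqrt{N}}$ closes the $1/N$ count with no further expansion. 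Your double-Duhamel-plus-parity bookkeeping for the remainder does reach the same bound (the parity signs you trace are right, and your sub-errors with two $\mathcal{L}_3$'s are indeed $O(d_N N^{-3/2})$), but it is more work than needed: the $1/d_N$ gain you extract from $\|P_{2k}\xi_N\|\leq 2/d_N$ on the even sectors is available unconditionally from Lemma \ref{lem:coherent_all}, so parity is only genuinely needed for the leading term. Your diagnosis of where the analytic difficulty sits --- the $\mathcal{N}$-moment propagation for $\mathcal{U}$ without Hardy's inequality, closed by $\int_0^t\|\varphi_s\|_{L^\infty}\,\mathrm{d}s\lesssim t^{1/2}(1+t)$ from the Strichartz estimate, which is the source of the $e^{Kt^{3/2}}$ --- matches the paper precisely.
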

	
	Proof of Propositions \ref{prop:Et1} and \ref{prop:Et2} will be given later in section \ref{sec:Pf-of-Props}. With Propositions \ref{prop:Et1} and \ref{prop:Et2}, we now prove Theorem \ref{thm:Main_Theorem}.

	\begin{proof}[Proof of Theorem \ref{thm:Main_Theorem}]
		Recall that
		\begin{equation}
		\gamma_{N,t}^{(1)} =\frac{1}{N}\left\langle \frac{\left(a^{*}(\varphi)\right)^{N}}{\sqrt{N!}}\Omega,e^{i\mathcal{H}_{N}t}a_{y}^{*}a_{x}e^{-i\mathcal{H}_{N}t}\frac{\left(a^{*}(\varphi)\right)^{N}}{\sqrt{N!}}\Omega\right\rangle.
		\end{equation}
		From the definition of the creation operator in \eqref{eq:creation}, we can easily find that
		\begin{equation} \label{eq:coherent_vec}
		\{0,0,{{\dots}},0,\varphi^{\otimes N},0,{{\dots}}\}=\frac{\left(a^{*}(\varphi)\right)^{N}}{\sqrt{N!}}\Omega,
		\end{equation}
		where the $\varphi^{\otimes N}$ on the left-hand side is in the $N$-th sector of the Fock space. Recall that $P_N$ is the projection onto the $N$-particle sector of the Fock space. From \eqref{Weyl_f}, we find that
		\[
		\frac{\left(a^{*}(\varphi)\right)^{N}}{\sqrt{N!}}\Omega=\frac{\sqrt{N!}}{N^{N/2}e^{-N/2}}P_{N}W(\sqrt{N}\varphi)\Omega=d_{N}P_{N}W(\sqrt{N}\varphi)\Omega.
		\]
		Since $\mathcal{H}_N$ does not change the number of paritcles, we also have that
		\begin{align*}
		\gamma_{N,t}^{(1)}(x;y) & =\frac{1}{N}\left\langle \frac{\left(a^{*}(\varphi)\right)^{N}}{\sqrt{N!}}\Omega,e^{\mathrm{i}\mathcal{H}_{N}t}a_{y}^{*}a_{x}e^{-\mathrm{i}\mathcal{H}_{N}t}\frac{\left(a^{*}(\varphi)\right)^{N}}{\sqrt{N!}}\Omega\right\rangle \\
		& =\frac{d_{N}}{N}\left\langle \frac{\left(a^{*}(\varphi)\right)^{N}}{\sqrt{N!}}\Omega,e^{\mathrm{i}\mathcal{H}_{N}t}a_{y}^{*}a_{x}e^{-\mathrm{i}\mathcal{H}_{N}t}P_{N}W(\sqrt{N}\varphi)\Omega\right\rangle \\
		& =\frac{d_{N}}{N}\left\langle \frac{\left(a^{*}(\varphi)\right)^{N}}{\sqrt{N!}}\Omega,P_{N}e^{\mathrm{i}\mathcal{H}_{N}t}a_{y}^{*}a_{x}e^{-\mathrm{i}\mathcal{H}_{N}t}W(\sqrt{N}\varphi)\Omega\right\rangle \\
		& =\frac{d_{N}}{N}\left\langle \frac{\left(a^{*}(\varphi)\right)^{N}}{\sqrt{N!}}\Omega,e^{\mathrm{i}\mathcal{H}_{N}t}a_{y}^{*}a_{x}e^{-\mathrm{i}\mathcal{H}_{N}t}W(\sqrt{N}\varphi)\Omega\right\rangle. \\
		\end{align*}
		To simplify it further, we use the relation
		\[
		e^{\mathrm{i}\mathcal{H}_{N}t}a_{x}e^{-\mathrm{i}\mathcal{H}_{N}t}=W(\sqrt{N}\varphi)\mathcal{U}^{*}(t)(a_{x}+\sqrt{N}\varphi_{t}(x))\mathcal{U}(t)W^{*}(\sqrt{N}\varphi)
		\]
		(and an analogous result for the creation operator) to obtain that
		\begin{align*}
		\gamma_{N,t}^{(1)}(x;y) & =\frac{d_{N}}{N}\left\langle \frac{\left(a^{*}(\varphi)\right)^{N}}{\sqrt{N!}}\Omega,e^{\mathrm{i}\mathcal{H}_{N}t}a_{y}^{*}a_{x}e^{-\mathrm{i}\mathcal{H}_{N}t}W(\sqrt{N}\varphi)\Omega\right\rangle \\
		& =\frac{d_{N}}{N}\left\langle \frac{\left(a^{*}(\varphi)\right)^{N}}{\sqrt{N!}}\Omega,W(\sqrt{N}\varphi)\mathcal{U}^{*}(t)(a_{y}^{*}+\sqrt{N}\,\overline{\varphi_{t}\left(y\right)})(a_{x}+\sqrt{N}\varphi_{t}(x))\mathcal{U}(t)\Omega\right\rangle.
		\end{align*}
		Thus,
		\begin{align*}
		\gamma_{N,t}^{(1)}(x;y)-\overline{\varphi_{t}\left(y\right)}\varphi(x) & =\frac{d_{N}}{N}\left\langle \frac{\left(a^{*}(\varphi)\right)^{N}}{\sqrt{N!}}\Omega,W(\sqrt{N}\varphi)\mathcal{U}^{*}(t)a_{y}^{*}a_{x}\mathcal{U}(t)\Omega\right\rangle \\
		& \quad+\overline{\varphi_{t}\left(y\right)}\frac{d_{N}}{\sqrt{N}}\left\langle \frac{\left(a^{*}(\varphi)\right)^{N}}{\sqrt{N!}}\Omega,W(\sqrt{N}\varphi)\mathcal{U}^{*}(t)a_{x}\mathcal{U}(t)\Omega\right\rangle \\
		& \quad+\varphi_{t}(x)\frac{d_{N}}{\sqrt{N}}\left\langle \frac{\left(a^{*}(\varphi)\right)^{N}}{\sqrt{N!}}\Omega,W(\sqrt{N}\varphi)\mathcal{U}^{*}(t)a_{y}^{*}\mathcal{U}(t)\Omega\right\rangle.
		\end{align*}
		Recall the definition of $E_{t}^{1}(J)$ and $E_{t}^{2}(J)$ in Propositions \ref{prop:Et1} and \ref{prop:Et2}. For any compact one-particle Hermitian operator $J$ on $L^{2}(\mathbb{R}^{3})$, we have
		\begin{align*}
		\Tr J({\gamma}_{N,t}^{(1)}-\left|{\varphi}_{t}\right\rangle \left\langle {\varphi}_{t}\right|) & =\int\mathrm{d} x\mathrm{d} yJ(x;y)\left(\gamma_{N,t}^{(1)}(y;x)-\varphi_{t}(y)\overline{\varphi_{t}\left(x\right)}\right)\\
		& =\frac{d_{N}}{N}\left\langle \frac{\left(a^{*}(\varphi)\right)^{N}}{\sqrt{N!}}\Omega,W(\sqrt{N}\varphi)\mathcal{U}^{*}(t)d\Gamma(J)\mathcal{U}(t)\Omega\right\rangle \\
		& \quad+\frac{d_{N}}{\sqrt{N}}\left\langle \frac{\left(a^{*}(\varphi)\right)^{N}}{\sqrt{N!}}\Omega,W(\sqrt{N}\varphi)\mathcal{U}^{*}(t)\phi(J\varphi_{t})\mathcal{U}(t)\Omega\right\rangle \\
		& =E_{t}^{1}(J)+E_{t}^{2}(J).
		\end{align*}
		Thus, from Propositions \ref{prop:Et1} and \ref{prop:Et2}, we find that
		\[
		\left|\Tr J({\gamma}_{N,t}^{(1)}-\left|{\varphi}_{t}\right\rangle \left\langle {\varphi}_{t}\right|)\right|\leq C\frac{\left\Vert J\right\Vert }{N}e^{Kt{^{3/2}}}.
		\]
		Since the space of compact operators is the dual to that of the trace class operators, and since ${\gamma}_{N,t}^{(1)}$ and $\left|{\varphi}_{t}\right\rangle \left\langle {\varphi}_{t}\right|$
		are Hermitian,
		\[
		\Tr\left|{\gamma}_{N,t}^{(1)}-\left|{\varphi}_{t}\right\rangle \left\langle {\varphi}_{t}\right|\right|\leq\frac{C}{N}e^{Kt{^{3/2}}}
		\]
		which concludes the proof of Theorem \ref{thm:Main_Theorem}.
	\end{proof}
	
	\section{Comparison of Dynamics and Proof of Propositions \ref{prop:Et1} and \ref{prop:Et2}} \label{sec:lemmas}
	
	In this section, we prove Propositions \ref{prop:Et1} and \ref{prop:Et2}. { Recall that $V = V_2 + V_{\infty}$ with $V_2 \in L^2 (\mathbb{R}^3)$ and $V_{\infty} \in L^{\infty} (\mathbb{R}^3)$. Throughout the section, the constants $C$ and $K$ may depend on $\| \varphi_0 \|_{H^{1}(\mathbb{R}^{3})}$, $\| V_2 \|_{L^2}$, and $\| V_{\infty}\|_{L^{\infty}}$.}
	
	\subsection{Comparison of dynamics} \label{subsec:comparison}
	
	As briefly mentioned in Section \ref{sec:intro}, the key technical estimate is the upper bound on the fluctuation of the expected number of particles under the evolution $\mathcal{U}(t;s)$, which is the following lemma.
	
	\begin{lem}\label{lem:NjU}
		Suppose that the assumptions in Theorem \ref{thm:Main_Theorem} hold. Let $\mathcal{U}\left(t;s\right)$ be the unitary evolution defined in \eqref{eq:def_mathcalU}. Then for any $\psi\in\mathcal{F}$ and $j\in\mathbb{N}$, there exist constants $C \equiv C(j)$ and $K \equiv K(j)$ such that
		\[
		\left\langle \mathcal{U}\left(t;s\right)\psi,\mathcal{N}^{j}\mathcal{U}\left(t;s\right)\psi\right\rangle \leq Ce^{K{t^{3/2}}}\left\langle \psi,\left(\mathcal{N}+1\right)^{2j+2}\psi\right\rangle .
		\]
	\end{lem}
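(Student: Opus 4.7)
The plan is to run a Grönwall-type argument on $X_j(t):=\langle\mathcal{U}(t;s)\psi,(\mathcal{N}+1)^j\mathcal{U}(t;s)\psi\rangle$, which controls the quantity of interest since $\mathcal{N}^j\le(\mathcal{N}+1)^j$. Differentiating using \eqref{eq:def_mathcalU},
\[
\tfrac{d}{dt}X_j(t)=\langle\mathcal{U}(t;s)\psi,\,i[\mathcal{L}_2+\mathcal{L}_3+\mathcal{L}_4,(\mathcal{N}+1)^j]\,\mathcal{U}(t;s)\psi\rangle,
\]
and the quartic $\mathcal{L}_4$ together with the three particle-number-preserving terms in $\mathcal{L}_2$ (the kinetic energy, $\int(V*|\varphi_t|^2)a_x^*a_x$, and $\int V(x-y)\overline{\varphi_t(x)}\varphi_t(y)a_y^*a_x$) commute with $(\mathcal{N}+1)^j$ and drop out. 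Only the pair piece $K+K^*$ of $\mathcal{L}_2$, with $K=\tfrac12\int V(x-y)\varphi_t(x)\varphi_t(y)a_x^*a_y^*$, and the cubic $\mathcal{L}_3$, contribute. For any $A$ shifting particle number by $k$, the identity $[(\mathcal{N}+1)^j,A]=A\bigl((\mathcal{N}+1+k)^j-(\mathcal{N}+1)^j\bigr)$ together with $(\mathcal{N}+1+k)^j-(\mathcal{N}+1)^j\le C(\mathcal{N}+1)^{j-1}$ reduces the commutators to operator products with a polynomial of degree $j-1$ in $\mathcal{N}$.

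The next step is to bound matrix elements. Cauchy-Schwarz in the $(x,y)$ integrations yields
\[
\|K\phi\|\le C\,\|V\,\varphi_t\otimes\varphi_t\|_{L^2(\mathbb{R}^6)}\,\|(\mathcal{N}+1)\phi\|,
\]
and an analogous bound for $\mathcal{L}_3$ carrying an additional factor $N^{-1/2}$ and one more power of $(\mathcal{N}+1)^{1/2}$. Splitting $V=V_2+V_\infty$ and using Young's inequality with the mass conservation $\|\varphi_t\|_{L^2}=1$,
\[
\|V\,\varphi_t\otimes\varphi_t\|_{L^2(\mathbb{R}^6)}\le C\bigl(1+\|\varphi_t\|_{L^\infty}\bigr),
\]
where $\|\varphi_t\|_{L^\infty}$ is controlled only in the $L^2$-in-time sense via the Strichartz estimate of Lemma \ref{lem:decaying}. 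Combining these bounds with the commutator-difference identity and AM-GM produces a chain of differential inequalities linking $\tfrac{d}{dt}X_j$ to higher moments $X_k$ with $k>j$.

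Because this inequality does not close at level $j$, I would iterate, controlling each higher moment by a yet higher one. After finitely many steps the successive time-ordered integrals telescope into an exponential of $\int(1+\|\varphi_\tau\|_{L^\infty})\,d\tau$, and the procedure is terminated at level $2j+2$ by replacing $X_{2j+2}(\tau)$ by its initial value $\langle\psi,(\mathcal{N}+1)^{2j+2}\psi\rangle$. Hölder's inequality in time together with Lemma \ref{lem:decaying} then gives
\[
\int_s^t\bigl(1+\|\varphi_\tau\|_{L^\infty}\bigr)d\tau\le(t-s)+(t-s)^{1/2}\|\varphi\|_{L^2((s,t),L^\infty)}\le C(1+t)^{3/2},
\]
which converts the exponential prefactor into $Ce^{Kt^{3/2}}$ and yields the claimed bound.

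The hardest step is the control of the cubic operator $\mathcal{L}_3$: it carries three creation/annihilation operators and the singular potential $V$, and the resulting estimates inevitably demand an additional power of $\mathcal{N}$, which is the very reason the iteration above is forced. Unlike the Coulomb case treated in \cite{Chen2011a}, Hardy's inequality $V^2\le D(1-\Delta)$ is not available to absorb $V$ into the kinetic energy; the only alternative is to pay the price in $\|\varphi_t\|_{L^\infty}$ and then recover control via its $L^2$-in-time Strichartz bound. This is precisely what produces the sub-exponential growth factor $e^{Kt^{3/2}}$ in place of a simple $e^{Kt}$.
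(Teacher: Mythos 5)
Your setup is right --- only the pair-creation part of $\mathcal{L}_2$ and the cubic $\mathcal{L}_3$ survive the commutator with $(\mathcal{N}+1)^j$, the potential is controlled through $\sup_x\|V(x-\cdot)\varphi_t\|\leq C(1+\|\varphi_t\|_{L^\infty})$, and the Strichartz bound of Lemma \ref{lem:decaying} is exactly what converts $\int_s^t(1+\|\varphi_\tau\|_{L^\infty})\,d\tau$ into the $e^{Kt^{3/2}}$ growth. But the core of your argument, the claim that the non-closing chain of differential inequalities ``terminates at level $2j+2$ by replacing $X_{2j+2}(\tau)$ by its initial value,'' is a genuine gap. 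The commutator with $\mathcal{L}_3$ produces, at each level, a higher moment at the \emph{evolved} time $\tau$, not at the initial time; you have no a priori bound on $X_{2j+2}(\tau)$ in terms of $X_{2j+2}(s)$ --- that is precisely an instance of the statement being proved --- so there is no mechanism that stops the iteration. If you expand the iteration honestly you get an infinite series $\sum_n N^{-n/2}(\text{iterated time integrals})\,X_{j+n}(\cdot)$ involving all moments of $\psi$, which need not even be finite for a general $\psi\in\mathcal{F}$, and whose convergence is not controlled by the $N^{-1/2}$ prefactors alone.

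The paper closes the estimate by a three-step scheme (following Rodnianski--Schlein) that you are missing. First, it truncates the cubic term by inserting $\chi(\mathcal{N}\leq M)$ into $\mathcal{L}_3$, which caps the extra half-power of $\mathcal{N}$ by $M^{1/2}$ and yields a \emph{closed} Gronwall inequality for the truncated dynamics $\mathcal{U}^{(M)}$ with constant proportional to $1+\sqrt{M/N}$ (Lemma \ref{lem:truncation}). Second, it invokes weak a priori bounds on the full dynamics $\mathcal{U}$ that are uniform in time but lose powers of $N$, e.g.\ $\langle\psi,\mathcal{U}\mathcal{N}^{2\ell+1}\mathcal{U}^*\psi\rangle\leq C\langle\psi,(\mathcal{N}+N)^{2\ell+1}(\mathcal{N}+1)\psi\rangle$ (Lemma \ref{lem:3.6 i Rodnianski}); these exist because the Weyl-conjugated evolution changes $\mathcal{N}$ by at most $O(N)$ per unit of its own fluctuation. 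Third, it compares $\mathcal{U}$ with $\mathcal{U}^{(M)}$: the difference involves $\chi(\mathcal{N}>M)\leq(\mathcal{N}/M)^{2j}$, whose gain of $M^{-2j}$ compensates the $N$-powers lost in Step~2 (Lemma \ref{lem:comparison}). Choosing $M=N$ balances the two and is also where the exponent $2j+2$ on the right-hand side comes from --- it is not the endpoint of your iteration but the price paid in the weak bounds. Without the truncation-and-comparison structure (or an equivalent device) your Gronwall argument cannot be completed.
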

	
	To prove Lemma \ref{lem:NjU}, we closely follow the proof of Proposition 3.3 in \cite{Rodnianski2009}. In the proof, we will frequently use an estimate on $\|V(x-\cdot)\varphi_t\|_{L^2}$. The proof of such an estimate is almost immediate with the condition $V \leq D(1-\Delta)$ by using Hardy's inequality. In this paper, however, due to the singularity of $V$, we use Strichartz's estimate as follows:
	\begin{align*}
	\int\mathrm{d} y\,V^{2}(x-y)|\varphi_{t}(y)|^{2}
	\quad&\leq {2}\int\mathrm{d} y\,V_2^{2}(x-y)|\varphi_{t}(y)|^{2} + {2}\int\mathrm{d} y\,V_\infty^{2}(x-y)|\varphi_{t}(y)|^{2}\\
	\quad&\leq {2}\|\varphi_t\|_{L^\infty}^2\int\mathrm{d} y\,V_2^{2}(x-y) + {2} \|V_\infty\|_{L^\infty}^2\int\mathrm{d} y\,|\varphi_{t}(y)|^{2}\\
	\quad&\leq {2}\|\varphi_t\|_{L^\infty}^2\|V_2\|_{L^2}^2 + {2} \|V_\infty\|_{L^\infty}^2\|\varphi_t\|_{L^2}^2\\
	\quad&\leq {2}\|\varphi_t\|_{L^\infty}^2\|V_2\|_{L^2}^2 + {2} \|V_\infty\|_{L^\infty}^2\|\varphi_t\|_{L^2}^2
	\end{align*}
	hence,
	\[
	\sup_x \|V(x-\cdot)\varphi_t\|_{ L^2}\leq C(\|\varphi_{t}\|_{\infty}+1)
	\]
	for some constant $C>0$. Then, using Lemma \ref{lem:decaying} we find that
	\begin{align} \label{eq:VphiLt2Lxinfty}
	\int_0^t\mathrm{d}s\,\sup_x \|V(x-\cdot)\varphi_s\|_{ L^2} &\leq \int_0^t\mathrm{d}s\, C(\|\varphi_{t}\|_{\infty}+1)
	\leq C(\sqrt{t}\|\varphi_t\|_{L^{2}((0,T),L^\infty)}+t)
	\leq C(1+{t^{3/2}}).
	\end{align}
	We now begin the proof of Lemma \ref{lem:NjU}. First, we introduce a truncated time-dependent generator with fixed $M>0$ as follows:
	\begin{align*}
	\mathcal{L}_{N}^{(M)}(t) & =\int\mathrm{d}xa_{x}^{*}(-\Delta_{x})a_{x}+\int\mathrm{d}x\left(V*|\varphi_{t}|^{2}\right)(x)a_{x}^{*}a_{x}+\int\mathrm{d}x\mathrm{d}y V(x-y)\overline{\varphi_{t}(x)}\varphi_{t}(y)a_{y}^{*}a_{x}\\
	& \quad+\frac{1}{2}\int\mathrm{d}x\mathrm{d}yV(x-y)\left(\varphi_{t}(x)\varphi_{t}(y)a_{x}^{*}a_{y}^{*}+\overline{\varphi_{t}(x)}\overline{\varphi_{t}(y)}a_{x}a_{y}\right)\\
	& \quad+\frac{1}{\sqrt{N}}\int\mathrm{d}x\mathrm{d}yV(x-y)a_{x}^{*}\left(\overline{\varphi_{t}(y)}a_{y}\chi(\mathcal{N}\leq M)+\varphi_{t}(y)\chi(\mathcal{N}\leq M)a_{y}^{*}\right)a_{x}\\
	& \quad+\frac{1}{2N}\int\mathrm{d}x\mathrm{d}yV(x-y)a_{x}^{*}a_{y}^{*}a_{y}a_{x}.
	\end{align*}
	We remark that $M$ will be chosen to be $M=N$ later in the proof of Lemma \ref{lem:NjU}. Define a unitary operator $\mathcal{U}^{(M)}$ by
	\begin{equation}\label{eq:def_mathcalUM}
	\mathrm{i}\partial_{t}\mathcal{U}^{(M)}\left(t;s\right)=\mathcal{L}^{(M)}_N(t)\mathcal{U}^{(M)}(t;s)\quad\text{and}\quad\mathcal{U}^{(M)}\left(s;s\right)=1.
	\end{equation}
	We use a three-step strategy.
	
	\vskip15pt
	
	\noindent\emph{Step 1. Truncation with respect to $\mathcal{N}$ with $M>0$.}
	\begin{lem} \label{lem:truncation}
		Suppose that the assumptions in Theorem \ref{thm:Main_Theorem} hold and {let} $\mathcal{U}^{(M)}$ be {the} unitary operator defined in \eqref{eq:def_mathcalUM}. Then, there exist constants $C$ and $K$ such that, for all $N\in\mathbb{N}$ and $M>0$, $\psi\in\mathcal{F}$, and $t,s,\in\mathbb{R}$,
		\[
		\left\langle \mathcal{U}^{(M)}(t;s)\psi,(\mathcal{N}+1)^{j}\mathcal{U}^{(M)}(t;s)\psi\right\rangle \leq\left\langle \psi,(\mathcal{N}+1)^{j}\psi\right\rangle C \exp\left(4^{j}K|t-s|{^{3/2}}(1+\sqrt{M/N})\right).
		\]
	\end{lem}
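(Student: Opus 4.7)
The plan is to set $\psi_t := \mathcal{U}^{(M)}(t;s)\psi$ and $F_j(t) := \langle \psi_t, (\mathcal{N}+1)^j \psi_t\rangle$, and to control $F_j$ by Gronwall applied to the identity $\mathrm{i}\partial_t F_j(t) = \langle \psi_t, [(\mathcal{N}+1)^j, \mathcal{L}_N^{(M)}(t)]\psi_t\rangle$. Among the six summands in $\mathcal{L}_N^{(M)}(t)$, every term that can be written as $d\Gamma(\cdot)$ (the kinetic piece, the $(V*|\varphi_t|^2)$ piece, and the exchange piece) as well as the quartic piece $\frac{1}{2N}\int V(x-y)a_x^*a_y^*a_ya_x$ conserves particle number and hence commutes with $(\mathcal{N}+1)^j$; only the pair-creation/annihilation part of $\mathcal{L}_2$ and the truncated cubic $\mathcal{L}_3^{(M)}$ will contribute to $\partial_t F_j$.

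For the pair term, the pull-through identity $(\mathcal{N}+1)^j a_x^* a_y^* = a_x^* a_y^* (\mathcal{N}+3)^j$ gives
\[
[(\mathcal{N}+1)^j, a_x^* a_y^*] = a_x^* a_y^*\bigl((\mathcal{N}+3)^j - (\mathcal{N}+1)^j\bigr),
\]
and a binomial expansion yields $(\mathcal{N}+3)^j - (\mathcal{N}+1)^j \leq C\, 4^j (\mathcal{N}+1)^{j-1}$ in the operator sense. Pairing this against $\psi_t$ and applying Cauchy--Schwarz in the $x,y$ integration, together with \eqref{eq:bd-a} applied to $a_x (\mathcal{N}+1)^{(j-1)/2}\psi_t$ and $a_y(\mathcal{N}+1)^{(j-1)/2}\psi_t$, reduces everything to an integral involving $\sup_x\|V(x-\cdot)\varphi_t\|_{L^2}$. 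The bound $\sup_x\|V(x-\cdot)\varphi_t\|_{L^2}\leq C(1+\|\varphi_t\|_\infty)$, derived just before \eqref{eq:VphiLt2Lxinfty}, then yields a contribution to $|\partial_t F_j(t)|$ of size $C\, 4^j(1+\|\varphi_t\|_\infty)F_j(t)$.

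For the truncated cubic $\mathcal{L}_3^{(M)}$, the novelty is the spectral cutoff: the middle operator $a_y \chi(\mathcal{N}\leq M)$ (or $\chi(\mathcal{N}\leq M) a_y^*$) satisfies $\|a(f)\chi(\mathcal{N}\leq M)\phi\|,\,\|a^*(f)\chi(\mathcal{N}\leq M)\phi\|\leq \|f\|\sqrt{M+1}\|\phi\|$, so in the Cauchy--Schwarz argument one may replace a factor of $\|(\mathcal{N}+1)^{1/2}\phi\|$ by $\sqrt{M+1}\|\phi\|$. Running the analogous estimate with the commutator $[(\mathcal{N}+1)^j, a_x^* \cdots a_x]$, using once more $\sup_x\|V(x-\cdot)\varphi_t\|_{L^2}\leq C(1+\|\varphi_t\|_\infty)$, gives a contribution of at most
\[
\frac{C\, 4^j \sqrt{M+1}}{\sqrt{N}}(1+\|\varphi_t\|_\infty)\,F_j(t),
\]
so the $1/\sqrt{N}$ in $\mathcal{L}_3^{(M)}$ pairs with $\sqrt{M+1}$ from the cutoff to produce exactly the factor $\sqrt{M/N}$ that appears in the statement.

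Combining, $|\partial_t F_j(t)|\leq C\, 4^j(1+\sqrt{M/N})(1+\|\varphi_t\|_\infty)F_j(t)$. Integrating from $s$ to $t$ and using the Strichartz estimate \eqref{eq:VphiLt2Lxinfty} to bound $\int_s^t(1+\|\varphi_\tau\|_\infty)\mathrm{d}\tau\leq C(1+|t-s|^{3/2})$, Gronwall's inequality delivers the stated exponential bound. The main obstacle, by contrast with \cite{Chen2011a}, is that with $V$ more singular than Coulomb one cannot invoke Hardy's inequality $V^2\leq D(1-\Delta)$, so each pointwise-in-$x$ integral against $V$ has to be absorbed through the $L^2_t L^\infty_x$ norm of $\varphi$; this is precisely what forces the exponent $t^{3/2}$ (rather than $t$) in the conclusion.
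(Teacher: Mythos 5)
Your proposal is correct and follows essentially the same route as the paper's proof: differentiate $\langle\mathcal{U}^{(M)}(t;s)\psi,(\mathcal{N}+1)^{j}\mathcal{U}^{(M)}(t;s)\psi\rangle$, observe that only the pair-creation/annihilation part of $\mathcal{L}_2$ and the truncated cubic term fail to commute with $(\mathcal{N}+1)^{j}$, bound both contributions via Cauchy--Schwarz together with $\sup_{x}\|V(x-\cdot)\varphi_{t}\|\leq C(1+\|\varphi_{t}\|_{\infty})$ (the cutoff supplying the $\sqrt{M}$ that pairs with the $N^{-1/2}$ prefactor), and close with Gronwall and the Strichartz bound \eqref{eq:VphiLt2Lxinfty}. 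No changes needed.
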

	\begin{proof}
		Following the proof of Lemma 3.5 in \cite{Rodnianski2009}, we get
		\begin{align}
		&\frac{\mathrm{d}}{\mathrm{d}t}\left\langle \mathcal{U}^{(M)}(t;0)\psi,(\mathcal{N}+1)^{j}\mathcal{U}^{(M)}(t;0)\psi\right\rangle \notag\\
		&\quad= 2\sum_{k=0}^{j-1}{j \choose k}(-1)^{k}\operatorname{Im}\int\mathrm{d}x\mathrm{d}y\,V(x-y)\varphi_{t}(x)\varphi_{t}(y)\notag\\
		&\quad\qquad\times\left\langle \mathcal{U}^{(M)}(t;0)\psi,\left(\mathcal{N}^{k/2}a_{x}^{*}a_{y}^{*}(\mathcal{N}+2)^{k/2}+(\mathcal{N}+1)^{k/2}a_{x}^{*}a_{y}^{*}(\mathcal{N}+3)^{k/2}\right)\mathcal{U}^{(M)}(t;0)\psi\right\rangle \notag\\
		&\quad\qquad+\frac{2}{\sqrt{N}}\sum_{k=0}^{j-1}{j \choose k}\operatorname{Im}\int\mathrm{d}x\notag\\
		&\quad\qquad\quad\qquad\times\left\langle \mathcal{U}^{(M)}(t;0)\psi,a_{x}^{*}a(V(x-\cdot))\chi(\mathcal{N}\leq M)(\mathcal{N}+1)^{k/2}a_{x}\mathcal{N}^{k/2}\mathcal{U}^{(M)}(t;0)\psi\right\rangle. \label{eq:ddtUM}
		\end{align}
		To control the contribution from the first term in the right-hand side of \eqref{eq:ddtUM}, we use the bounds of the form
		\begin{align*}
		&
		\left|\int\mathrm{d}x\mathrm{d}yV(x-y)\varphi_{t}(x)\varphi_{t}(y)\langle\mathcal{U}^{(M)}(t;0)\psi,(\mathcal{N}+1)^{\frac{k}{2}}a_{x}^{*}a_{y}^{*}(\mathcal{N}+3)^{\frac{k}{2}}\mathcal{U}^{(M)}(t;0)\psi\rangle\right|\\
		&
		\quad\leq\int\mathrm{d}x|\varphi_{t}(x)|\|a_{x}(\mathcal{N}+1)^{\frac{k}{2}}\mathcal{U}^{(M)}(t;0)\psi\|\|a^{*}(V(x-\cdot)\varphi_{t})(\mathcal{N}+3)^{\frac{k}{2}}\mathcal{U}^{(M)}(t;0)\psi\|\\
		&
		\quad\leq K\sup_{x}\|V(x-\cdot)\varphi_{t}\|\|(\mathcal{N}+3)^{\frac{k+1}{2}}\mathcal{U}^{(M)}(t;0)\psi\|^{2}\\
		&
		\quad\leq K\sup_{x}\|V(x-\cdot)\varphi_{t}\|\|(\mathcal{N}+1)^{\frac{k+1}{2}}\mathcal{U}^{(M)}(t;0)\psi\|^{2}.
		\end{align*}
		On the other hand, to control contribution arising from the second integral in the right-hand side of \eqref{eq:ddtUM}, we use that
		\begin{align*}
		&
		\left|\int\mathrm{d}x\mathrm{d}yV(x-y)\varphi_{t}(y)\langle\mathcal{U}^{(M)}(t;0)\psi,a_{x}^{*}a_{y}\chi(\mathcal{N}\leq M)(\mathcal{N}+1)^{\frac{k}{2}}a_{x}\mathcal{N}^{\frac{k}{2}}\mathcal{U}^{(M)}(t;0)\psi\rangle\right|\\
		&
		\quad\leq\int\mathrm{d}x\left\Vert a_{x}(\mathcal{N}+1)^{\frac{k}{2}}\mathcal{U}^{(M)}(t;0)\psi\right\Vert \left\Vert a(V(x-\cdot)\varphi_{t})\chi(\mathcal{N}\leq M)\right\Vert \left\Vert a_{x}\mathcal{N}^{\frac{k}{2}}\mathcal{U}^{(M)}(t;0)\psi\right\Vert\\
		&
		\quad\leq KM^{1/2}\sup_{x}\|V(x-\cdot)\varphi_{t}\|\|(\mathcal{N}+1)^{\frac{k+1}{2}}\mathcal{U}^{(M)}(t;0)\psi\|^{2}.
		\end{align*}
		This implies
		\begin{align*}
		& \frac{\mathrm{d}}{\mathrm{d}t}\left\langle \mathcal{U}^{(M)}(t;0)\psi,(\mathcal{N}+1)^{j}\mathcal{U}^{(M)}(t;0)\psi\right\rangle \\
		& \quad\leq K \sup_{x}\|V(x-\cdot)\varphi_{t}\| \left(1+\sqrt{M/N}\right)\sum_{k=0}^{j}{j \choose k}\left\langle \mathcal{U}^{(M)}(t;0)\psi,(\mathcal{N}+3)^{j}\mathcal{U}^{(M)}(t;0)\psi\right\rangle \\
		& \quad\leq4^{j}K \sup_{x}\|V(x-\cdot)\varphi_{t}\| \left(1+\sqrt{M/N}\right)\left\langle \mathcal{U}^{(M)}(t;0)\psi,(\mathcal{N}+1)^{j}\mathcal{U}^{(M)}(t;0)\psi\right\rangle .
		\end{align*}
		Applying the Gronwall Lemma with \eqref{eq:VphiLt2Lxinfty}, we get the desired result.
	\end{proof}

	\noindent\emph{Step 2: Weak bounds on the $\mathcal{U}$ dynamics.}
	\begin{lem}[Lemma 3.6 in \cite{Rodnianski2009}] \label{lem:3.6 i Rodnianski}
		For arbitrary $t,s\in\mathbb{R}$ and $\psi\in\mathcal{F}$, we have {
			\[
			\left\langle \psi,\mathcal{U}(t;s) \mathcal{N}\mathcal{U}(t;s)^{*}\psi\right\rangle \leq6\left\langle \psi,(\mathcal{N}+N+1)\psi\right\rangle .
			\] }
		Moreover, for every $\ell\in\mathbb{N}$, there exists a constant $C(\ell)$ such that
		\[
		\left\langle \psi,\mathcal{U}(t;s)\mathcal{N}^{2\ell}\mathcal{U}(t;s)^{*}\psi\right\rangle \leq C(\ell)\left\langle \psi,(\mathcal{N}+N)^{2\ell}\psi\right\rangle ,
		\]
		\[
		\left\langle \psi,\mathcal{U}(t;s)\mathcal{N}^{2\ell+1}\mathcal{U}(t;s)^{*}\psi\right\rangle \leq C(\ell)\left\langle \psi,(\mathcal{N}+N)^{2\ell+1}(\mathcal{N}+1)\psi\right\rangle
		\]
		for all $t,s\in\mathbb{R}$ and $\psi\in\mathcal{F}$.
	\end{lem}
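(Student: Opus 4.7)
My plan is to combine the explicit factorization $\mathcal{U}(t;s) = e^{-\mathrm{i}\mathcal{L}_0(t;s)/2}\,W^*(\sqrt{N}\varphi_t)\,e^{-\mathrm{i}\mathcal{H}_N(t-s)}\,W(\sqrt{N}\varphi_s)$ with three structural facts: the scalar phase $e^{\mathrm{i}\mathcal{L}_0/2}$ drops out of every matrix element of the form $\langle \psi,\mathcal{U}\,\mathcal{N}^k\,\mathcal{U}^*\psi\rangle$; the full Fock-space evolution satisfies $[\mathcal{H}_N,\mathcal{N}]=0$ and so leaves powers of $\mathcal{N}$ invariant under conjugation by $e^{\pm\mathrm{i}\mathcal{H}_N(t-s)}$; and the Weyl operators implement the affine shift $W^*(f)\mathcal{N}W(f) = \mathcal{N} + \phi(f) + \|f\|^2$, equivalently $W(f)\mathcal{N}W^*(f) = \mathcal{N} - \phi(f) + \|f\|^2$. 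The strategy is to move the Weyl operators through $\mathcal{N}^k$ via this shift and bound the resulting cross terms involving $\phi(\varphi_s)$ and $\phi(\varphi_t)$ using Cauchy--Schwarz together with the basic estimate $\|\phi(f)\eta\|\leq 2\|f\|\,\|(\mathcal{N}+1)^{1/2}\eta\|$ from \eqref{eq:bd-a}.

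For the first moment, a direct computation yields
\[
\mathcal{U}(t;s)\,\mathcal{N}\,\mathcal{U}(t;s)^* = W^*(\sqrt{N}\varphi_t)\bigl[\mathcal{N} + N - \sqrt{N}\,e^{-\mathrm{i}\mathcal{H}_N(t-s)}\phi(\varphi_s)\,e^{\mathrm{i}\mathcal{H}_N(t-s)}\bigr]W(\sqrt{N}\varphi_t),
\]
where $[\mathcal{H}_N,\mathcal{N}]=0$ has been used to pull $\mathcal{N}+N$ through the many-body evolution. Setting $\xi = W(\sqrt{N}\varphi_t)\psi$, the diagonal piece contributes $\langle\xi,(\mathcal{N}+N)\xi\rangle$, and the cross term is bounded by $2\sqrt{N}\,\|\xi\|\,\|(\mathcal{N}+1)^{1/2}\xi\| \leq N\|\xi\|^2 + \langle\xi,(\mathcal{N}+1)\xi\rangle$, again using that $e^{\mathrm{i}\mathcal{H}_N(t-s)}$ preserves $\|(\mathcal{N}+1)^{1/2}\cdot\|$. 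Undoing the shift $\xi = W(\sqrt{N}\varphi_t)\psi$ via the same identity bounds $\langle\xi,(\mathcal{N}+1)\xi\rangle \leq C\langle\psi,(\mathcal{N}+N+1)\psi\rangle$, and careful tracking of constants yields the explicit prefactor $6$ in the statement.

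For the higher moments one applies the same Weyl-shift approach to $\mathcal{N}^k$, writing
\[
W(\sqrt{N}\varphi_s)\,\mathcal{N}^k\,W^*(\sqrt{N}\varphi_s) = \bigl(\mathcal{N} - \sqrt{N}\phi(\varphi_s) + N\bigr)^k
\]
and expanding via a non-commutative binomial formula; the commutators $[\mathcal{N},\phi(\varphi_s)]$ are of strictly lower order and harmless. Each resulting monomial is a product of factors from $\{\mathcal{N}+N,\;\sqrt{N}\phi(\varphi_s)\}$ still conjugated by $e^{\pm\mathrm{i}\mathcal{H}_N(t-s)}$, and is estimated by the iterated bound $\|\phi(f)^j(\mathcal{N}+1)^m\eta\|\leq C\|f\|^j\|(\mathcal{N}+j+1)^{m+j/2}\eta\|$, itself obtained from \eqref{eq:bd-a} and the canonical commutation relations by induction on $j$. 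The parity distinction in the statement arises because an even total number of $\phi$ factors can be fully paired by Cauchy--Schwarz to yield a clean $(\mathcal{N}+N)^{2\ell}$ bound, whereas an odd number leaves one unpaired $\phi$ whose estimate via $\sqrt{N}|\phi|\leq\tfrac{1}{2}(N+\phi^2)$ contributes the extra $(\mathcal{N}+1)$ factor on the right. The main technical difficulty is precisely this combinatorial bookkeeping: one must verify that each $\sqrt{N}$ accompanying a $\phi(\varphi_s)$ is absorbed exactly into a factor $(\mathcal{N}+N)^{1/2}$, without producing any spurious power of $N$ in the final estimate.
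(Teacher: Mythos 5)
Your argument is correct and is essentially the same as the paper's: the paper gives no independent proof here but simply defers to Lemma 3.6 of Rodnianski--Schlein, whose proof is exactly the combination you describe (the scalar phase cancels, $[\mathcal{H}_N,\mathcal{N}]=0$, the Weyl shift $W(f)\mathcal{N}W^*(f)=\mathcal{N}-\phi(f)+\|f\|^2$, and the bounds $\|\phi(f)\eta\|\leq 2\|f\|\|(\mathcal{N}+1)^{1/2}\eta\|$). Your first-moment computation with the constant $6$ checks out, and the higher-moment sketch, while schematic on the combinatorics, identifies the correct mechanism including the origin of the extra $(\mathcal{N}+1)$ in the odd case.
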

	
	\begin{proof}
		We may follow the proof of Lemma 3.6 in \cite{Rodnianski2009} without any change.
	\end{proof}

	\noindent\emph{Step 3: Comparison between the $\mathcal{U}$ and $\mathcal{U}^{(M)}$ dynamics.}
	\begin{lem} \label{lem:comparison}
		Suppose that the assumptions in Theorem \ref{thm:Main_Theorem} hold. Then, for every $j\in\mathbb{N}$, there exist constants $C \equiv C(j)$ and $K \equiv K(j)$ such that
		\begin{align*}
		& \left|\left\langle \mathcal{U}(t;s)\psi,\mathcal{N}^{j}\left(\mathcal{U}(t;s)-\mathcal{U}^{(M)}(t;s)\right)\psi\right\rangle \right|\\
		& \quad\leq C(j)(N/M)^j \|(\mathcal{N}+1)^{j+1}\psi\|^2\exp\left(K(j)|t-s|{^{3/2}}(1+\sqrt{M/N})\right)
		\end{align*}
		and
		\begin{align*}
		& \left|\left\langle \mathcal{U}^{(M)}(t;s)\psi,\mathcal{N}^{j}\left(\mathcal{U}(t;s)-\mathcal{U}^{(M)}(t;s)\right)\psi\right\rangle \right|\\
		& \quad\leq C(j)(1/M)^j\|(\mathcal{N}+1)^{j+1}\psi\|^2 \exp\left(K(j)|t-s|{^{3/2}}(1+\sqrt{M/N})\right){.}
		\end{align*}
	\end{lem}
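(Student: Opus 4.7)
The plan is to prove the two estimates via a Duhamel-type identity for the difference of the two unitary evolutions, combined with the a priori bounds of Lemmas \ref{lem:truncation} and \ref{lem:3.6 i Rodnianski}. Computing $\frac{d}{d\tau}[\mathcal{U}(t;\tau)\mathcal{U}^{(M)}(\tau;s)]$ and using that the generators of $\mathcal{U}$ and $\mathcal{U}^{(M)}$ share their $\mathcal{L}_2$ and $\mathcal{L}_4$ components yields
\[
\mathcal{U}(t;s) - \mathcal{U}^{(M)}(t;s) = -\mathrm{i}\int_s^t \mathrm{d}\tau\,\mathcal{U}(t;\tau)\bigl(\mathcal{L}_3 - \mathcal{L}_3^{(M)}\bigr)\mathcal{U}^{(M)}(\tau;s),
\]
together with the analogous identity in which the roles of $\mathcal{U}$ and $\mathcal{U}^{(M)}$ are exchanged (needed for the second inequality). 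The difference $\mathcal{L}_3 - \mathcal{L}_3^{(M)}$ is precisely the cubic operator with $\chi(\mathcal{N}\leq M)$ replaced by $\chi(\mathcal{N}>M)$, so the parameter $M$ enters only through this indicator.

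The central technical step is the bilinear estimate
\[
\bigl|\langle \eta_1,(\mathcal{L}_3 - \mathcal{L}_3^{(M)})\eta_2\rangle\bigr| \leq \frac{C(j)}{\sqrt{N}\,M^{j}}\,\sup_{x}\|V(x-\cdot)\varphi_\tau\|\,\|\mathcal{N}^{1/2}\eta_1\|\,\|(\mathcal{N}+1)^{j+1}\eta_2\|,
\]
which I would derive by first applying Cauchy--Schwarz in the $y$-variable to isolate $\sup_{x}\|V(x-\cdot)\varphi_\tau\|$ from $\int \mathrm{d}y\,V(x-y)\varphi_\tau(y)\,a_{y}^{(*)}$, then Cauchy--Schwarz in $x$ to pair the outer $a_x$ on one side with $\|\mathcal{N}^{1/2}\eta_1\|$, and finally invoking the operator inequality $\chi(\mathcal{N}>M)\leq (\mathcal{N}/M)^{j}$ together with the commutation rule $a_{y}f(\mathcal{N})=f(\mathcal{N}+1)a_{y}$ to trade the indicator for the factor $M^{-j}$ at the cost of raising the weight on $\eta_2$ to $(\mathcal{N}+1)^{j+1}$.

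Feeding this bound into the first Duhamel formula with $\eta_1 = \mathcal{U}(t;\tau)^{*}\mathcal{N}^{j}\mathcal{U}(t;s)\psi$ and $\eta_2 = \mathcal{U}^{(M)}(\tau;s)\psi$ gives the first estimate: Lemma \ref{lem:3.6 i Rodnianski} controls $\|\mathcal{N}^{1/2}\eta_1\|$ by $C(j)N^{j+1/2}\|(\mathcal{N}+1)^{j+1}\psi\|$, producing the $N^{j}$ in $(N/M)^{j}$, while Lemma \ref{lem:truncation} bounds $\|(\mathcal{N}+1)^{j+1}\eta_2\|$ by $C\|(\mathcal{N}+1)^{j+1}\psi\|\exp\bigl(K|\tau-s|^{3/2}(1+\sqrt{M/N})\bigr)$. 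The time integral is absorbed using $\int_s^t \mathrm{d}\tau\,\sup_x\|V(x-\cdot)\varphi_\tau\|\leq C(1+|t-s|^{3/2})$ from \eqref{eq:VphiLt2Lxinfty}, and this polynomial prefactor is folded into the exponential by enlarging $K$. The second estimate follows identically after using the swapped Duhamel identity and taking $\eta_1 = \mathcal{U}^{(M)}(t;\tau)^{*}\mathcal{N}^{j}\mathcal{U}^{(M)}(t;s)\psi$: Lemma \ref{lem:truncation} now controls $\|\mathcal{N}^{1/2}\eta_1\|$ without any $N$-dependent factor, and together with the explicit $N^{-1/2}$ in $\mathcal{L}_3$ this yields $(1/M)^{j}$ (using $N^{-1/2}\leq 1$).

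The delicate point I expect to spend most time on is step (iii) of the bilinear estimate: threading $\chi(\mathcal{N}>M)$ past the creation and annihilation operators on either side, carefully tracking the $\pm 1$ shifts of $\mathcal{N}$ that the commutations produce, so that the final weight $(\mathcal{N}+1)^{j+1}$ on $\eta_2$ turns out to be exactly what is needed to absorb both the $\mathcal{N}^{1/2}$ coming from the innermost $a_x$ and the $\mathcal{N}^{j}$ traded for $M^{-j}$ from the indicator. Everything else is a straightforward concatenation of the two a priori lemmas and the Strichartz-based estimate \eqref{eq:VphiLt2Lxinfty}.
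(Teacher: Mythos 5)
Your treatment of the first inequality is essentially the paper's argument: the same Duhamel identity, the same Cauchy--Schwarz splitting producing $\sup_x\|V(x-\cdot)\varphi_\tau\|$, the bound $\chi(\mathcal{N}>M)\leq(\mathcal{N}/M)^{2j}$ to trade the indicator for $M^{-j}$, Lemma \ref{lem:3.6 i Rodnianski} supplying the $N^{j+1/2}$ on the $\mathcal{U}$-side vector, Lemma \ref{lem:truncation} on the $\mathcal{U}^{(M)}$-side vector, and \eqref{eq:VphiLt2Lxinfty} for the time integral. That part is correct.

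The second inequality is where there is a genuine gap. With your swapped Duhamel identity the right-hand vector in the bilinear form becomes $\eta_2=\mathcal{U}(\tau;s)\psi$, i.e.\ it evolves under the \emph{full} dynamics $\mathcal{U}$, and your bilinear estimate requires $\|(\mathcal{N}+1)^{j+1}\eta_2\|$. The only available control on $\mathcal{N}$-moments of $\mathcal{U}(\tau;s)\psi$ at this stage is Lemma \ref{lem:3.6 i Rodnianski}, which costs $\|(\mathcal{N}+1)^{j+1}\mathcal{U}(\tau;s)\psi\|\leq C(j)N^{j+1}\|(\mathcal{N}+1)^{j+3/2}\psi\|$; an $N$-independent bound here is precisely the content of Lemma \ref{lem:NjU}, which is what this lemma is being used to prove, so assuming one would be circular. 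Inserting the Lemma \ref{lem:3.6 i Rodnianski} bound gives $N^{-1/2}M^{-j}\cdot N^{j+1}=N^{j+1/2}M^{-j}$, which for the relevant choice $M=N$ even diverges like $N^{1/2}$ --- far from the claimed $M^{-j}$. Redistributing the indicator onto the $\eta_1$ side does not help: the operators $a_x,a_y$ acting on $\eta_2$ still force at least one factor $(\mathcal{N}+1)$ on $\mathcal{U}(\tau;s)\psi$, costing a power of $N$ that the single $N^{-1/2}$ from $\mathcal{L}_3$ cannot absorb. The fix is not to swap: keep the \emph{same} Duhamel identity as in the first part, so that
\begin{equation*}
\left\langle \mathcal{U}^{(M)}(t;0)\psi,\mathcal{N}^{j}\bigl(\mathcal{U}(t;0)-\mathcal{U}^{(M)}(t;0)\bigr)\psi\right\rangle
=-\mathrm{i}\int_{0}^{t}\mathrm{d}s\,\bigl\langle \mathcal{U}(t;s)^{*}\mathcal{N}^{j}\mathcal{U}^{(M)}(t;0)\psi,\bigl(\mathcal{L}_{N}(s)-\mathcal{L}_{N}^{(M)}(s)\bigr)\mathcal{U}^{(M)}(s;0)\psi\bigr\rangle .
\end{equation*}
Now $\eta_2=\mathcal{U}^{(M)}(s;0)\psi$ is again handled by Lemma \ref{lem:truncation} with no $N$-loss, and on the left only the \emph{first} bound of Lemma \ref{lem:3.6 i Rodnianski} is needed (a single power of $\mathcal{N}$ under $\mathcal{U}$), costing exactly $N^{1/2}$, which cancels the $N^{-1/2}$ in $\mathcal{L}_3$; the remaining $\mathcal{N}^{j}$ sits on $\mathcal{U}^{(M)}(t;0)\psi$ and is absorbed by Lemma \ref{lem:truncation} rather than by the $N^{j}$-costly Lemma \ref{lem:3.6 i Rodnianski}. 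This is precisely where the improvement from $(N/M)^{j}$ to $(1/M)^{j}$ comes from.
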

	\begin{proof}
		To simplify the notation we consider the case $s=0$ and $t>0$ only; other cases can be treated in a similar manner. To prove the first inequality of the lemma, we expand the difference of the two evolutions as follows: {
			\begin{align*}
			& \left\langle \mathcal{U}(t;0)\psi,\mathcal{N}^{j}\left(\mathcal{U}(t;0)-\mathcal{U}^{(M)}(t;0)\right)\psi\right\rangle = \left\langle \mathcal{U}(t;0)\psi,\mathcal{N}^{j}\mathcal{U}(t;0)\left(1-\mathcal{U}(t;0)^{*}\mathcal{U}^{(M)}(t;0)\right)\psi\right\rangle \\
			& = -\mathrm{i}\int_{0}^{t}\mathrm{d}s\,\left\langle \mathcal{U}(t;s)\psi,\mathcal{N}^{j}\mathcal{U}(t;s)\left(\mathcal{L}_{N}(s)-\mathcal{L}_{N}^{(M)}(s)\right)\mathcal{U}^{(M)}(s;0)\psi\right\rangle \\
			& = -\frac{\mathrm{i}}{\sqrt{N}}\int_{0}^{t}\mathrm{d}s\,\int\mathrm{d}x\mathrm{d}y\,V(x-y)\\
			& \qquad\times\langle\mathcal{U}(t;0)\psi,\mathcal{N}^{j}\mathcal{U}(t;s)a^{*}_x\left(\overline{\varphi_{s}(y)}a_{y}\chi(\mathcal{N}>M)+\varphi_{s}(y)\chi(\mathcal{N}>M)a_{y}^{*}\right)a_{x}\mathcal{U}^{(M)}(s;0)\psi\rangle\\
			& = -\frac{\mathrm{i}}{\sqrt{N}}\int_{0}^{t}\mathrm{d}s\,\int\mathrm{d}x\,\langle a_{x}\mathcal{U}(t;s)^{*}\mathcal{N}^{j}\mathcal{U}(t;0)\psi,a(V(x-\cdot)\varphi_{s})\chi(\mathcal{N}>M)a_{x}\mathcal{U}^{(M)}(s;0)\psi\rangle\\
			& \quad\quad-\frac{\mathrm{i}}{\sqrt{N}}\int_{0}^{t}\mathrm{d}s\,\int\mathrm{d}x\,\langle a_{x}\mathcal{U}(t;s)^{*}\mathcal{N}^{j}\mathcal{U}(t;0)\psi,\chi(\mathcal{N}>M)a^*(V(x-\cdot)\varphi_{s})a_{x}\mathcal{U}^{(M)}(s;0)\psi\rangle
			\end{align*} }
		Hence,
		\begin{align}
		& \left|\left\langle \mathcal{U}(t;0)\psi,\mathcal{N}^{j}\left(\mathcal{U}(t;0)-\mathcal{U}^{(M)}(t;0)\right)\psi\right\rangle \right|\notag\\
		& \leq\frac{C}{\sqrt{N}}\int_{0}^{t}\mathrm{d}s\,\int\mathrm{d}x\,\|a_{x}\mathcal{U}(t;s)^{*}\mathcal{N}^{j}\mathcal{U}(t;0)\psi\|\cdot\|a(V(x-\cdot)\varphi_{s})a_{x}\chi(\mathcal{N}>M+1)\mathcal{U}^{(M)}(s;0)\psi\|\notag\\
		& \quad+\frac{C}{\sqrt{N}}\int_{0}^{t}\mathrm{d}s\,\int\mathrm{d}x\,\|a_{x}\mathcal{U}(t;s)^{*}\mathcal{N}^{j}\mathcal{U}(t;0)\psi\|\cdot\|a^*(V(x-\cdot)\varphi_{s})a_{x}\chi(\mathcal{N}>M)\mathcal{U}^{(M)}(s;0)\psi\|\notag\\
		& \leq\frac{C}{\sqrt{N}}\int_{0}^{t}\mathrm{d}s\,\sup_{x}\|V(x-\cdot)\varphi_{s}\|\int\mathrm{d}x\,\|a_{x}\mathcal{U}(t;s)^{*}\mathcal{N}^{j}\mathcal{U}(t;0)\psi\|\cdot\|a_{x}(\mathcal{N}+1)^{1/2}\chi(\mathcal{N}>M)\mathcal{U}^{(M)}(s;0)\psi\|\notag\\
		& \leq { \frac{C}{\sqrt{N}}} \int_{0}^{t}\mathrm{d}s\,\sup_{x}\|V(x-\cdot)\varphi_{s}\|\cdot\|(\mathcal{N}+1)^{1/2}\mathcal{U}(t;s)^{*}\mathcal{N}^{j}\mathcal{U}(t;0)\psi\|\cdot\|(\mathcal{N}+1)\chi(\mathcal{N}>M)\mathcal{U}^{(M)}(s;0)\psi\|. \label{eq:U-UM}
		\end{align}
		{
			From Lemma \ref{lem:3.6 i Rodnianski},
			\begin{align}
			\|\mathcal{N}^{1/2}\mathcal{U}(t;s)^{*}\mathcal{N}^{j}\mathcal{U}(t;0)\psi\|^2 &\leq 6 \langle \mathcal{N}^{j}\mathcal{U}(t;0)\psi, (\mathcal{N}+N+1)\mathcal{N}^{j}\mathcal{U}(t;0)\psi \rangle \\
			&\leq C(j) \langle \psi, (\mathcal{N}+N)^{2j+1} (\mathcal{N}+1)\psi \rangle \leq C(j) N^{2j+1} \langle \psi, (\mathcal{N}+1)^{2j+2} \psi \rangle. \notag
			\end{align}
			Since $\chi(\mathcal{N}>M)\leq(\mathcal{N}/M)^{2j}$, we find that
			\begin{align*}
			& \left|\left\langle \mathcal{U}(t;0)\psi,\mathcal{N}^{j}\left(\mathcal{U}(t;0)-\mathcal{U}^{(M)}(t;0)\right)\psi\right\rangle \right|\\
			& \quad\leq C(j)N^{j}\|(\mathcal{N}+1)^{j+1}\psi\|\int_{0}^{t}\mathrm{d}s\,\sup_{x}\|V(x-\cdot)\varphi_{s}\|\left\langle \mathcal{U}^{(M)}(s;0)\psi,(\mathcal{N}+1)^{2}\chi(\mathcal{N}>M)\mathcal{U}^{(M)}(s;0)\psi\right\rangle^{1/2} \\
			& \quad\leq C(j)N^{j}\|(\mathcal{N}+1)^{j+1}\psi\|\int_{0}^{t}\mathrm{d}s\,\sup_{x}\|V(x-\cdot)\varphi_{s}\|\left\langle \mathcal{U}^{(M)}(s;0)\psi,\frac{(\mathcal{N}+1)^{2j+2}}{M^{2j}}\mathcal{U}^{(M)}(s;0)\psi\right\rangle^{1/2}
			\end{align*}
		}
		We thus conclude that
		\begin{align}
		& \left|\left\langle \mathcal{U}(t;0)\psi,\mathcal{N}^{j}\left(\mathcal{U}(t;0)-\mathcal{U}^{(M)}(t;0)\right)\psi\right\rangle \right| \notag\\
		& \quad\leq C(j)(N/M)^{j}\|(\mathcal{N}+1)^{j+1}\psi\|^{2}\int_{0}^{t}\mathrm{d}s\,\sup_{x}\|V(x-\cdot)\varphi_{s}\|\exp\left(K(j)|t-s|{^{3/2}}(1+\sqrt{M/N})\right)\\
		& \quad\leq C(j)(N/M)^{j}\|(\mathcal{N}+1)^{j+1}\psi\|^{2}\exp\left(K(j)|t|{^{3/2}}(1+\sqrt{M/N})\right). \notag
		\end{align}
		The proof of the second part of the lemma is similar and we omit it.
	\end{proof}
	
	We now prove Lemma \ref{lem:NjU} by combining the three steps above.
	\begin{proof}[Proof of Lemma \ref{lem:NjU}]
		From Lemmas \ref{lem:truncation}, \ref{lem:3.6 i Rodnianski}, and \ref{lem:comparison} with the choice $M=N$,
		\begin{align*}
		&\left\langle {\mathcal{U}}\left(t;s\right)\psi,\mathcal{N}^{j}{\mathcal{U}}\left(t;s\right)\psi\right\rangle \\
		&=\left\langle \mathcal{U}\left(t;s\right)\psi,\mathcal{N}^{j}(\mathcal{U}-\mathcal{U}^{(M)})\left(t;s\right)\psi\right\rangle + \left\langle (\mathcal{U}-\mathcal{U}^{(M)})\left(t;s\right)\psi,\mathcal{N}^{j}\mathcal{U}^{(M)}\left(t;s\right)\psi\right\rangle \\
		&\qquad+\left\langle \mathcal{U}^{(M)}\left(t;s\right)\psi,\mathcal{N}^{j}\mathcal{U}^{(M)}\left(t;s\right)\psi\right\rangle \\
		&\leq Ce^{K|t-s|{^{3/2}}}\left\langle \psi,\left(\mathcal{N}+1\right)^{2j+2}\psi\right\rangle.
		\end{align*}
	\end{proof}
	
	Recall the definition of $\tilde{\mathcal{U}}\left(t;s\right)$ in \eqref{eq:def_mathcaltildeU}. In the next lemma, we prove an estimate similar to Lemma \ref{lem:NjU} for the evolution with respect to $\tilde{\mathcal{U}}$.
	
	\begin{lem}\label{lem:tildeNj}
		Suppose that the assumptions in Theorem \ref{thm:Main_Theorem} hold. Then, for any $\psi\in\mathcal{F}$ and $j\in\mathbb{N}$, there exist constants $C \equiv C(j)$ and $K \equiv K(j)$ such that
		\[
		\left\langle \tilde{\mathcal{U}}\left(t;s\right)\psi,\mathcal{N}^{j}\tilde{\mathcal{U}}\left(t;s\right)\psi\right\rangle \leq Ce^{K|t-s|}\left\langle \psi,\left(\mathcal{N}+1\right)^{j}\psi\right\rangle.
		\]
	\end{lem}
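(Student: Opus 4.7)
The plan is to imitate the differentiate-and-Gronwall portion of the proof of Lemma \ref{lem:NjU}, but since $\tilde{\mathcal{L}} = \mathcal{L}_2 + \mathcal{L}_4$ omits the $\mathcal{L}_3$ term, I can dispense entirely with the truncation-and-comparison scheme. Setting $\psi_t := \tilde{\mathcal{U}}(t;s)\psi$, I would differentiate $\langle \psi_t, \mathcal{N}^j\psi_t\rangle$ in $t$ and observe that $\mathcal{L}_4$, together with the kinetic, mean-field, and exchange components of $\mathcal{L}_2$, all preserve particle number and hence commute with $\mathcal{N}^j$. The only surviving contribution comes from the pair term
\[
B_t := \tfrac{1}{2}\int V(x-y)\varphi_t(x)\varphi_t(y)\,a_x^* a_y^*\,\mathrm{d}x\,\mathrm{d}y
\]
and its adjoint; using $\mathcal{N}^j a_x^* a_y^* = a_x^* a_y^*(\mathcal{N}+2)^j$ one finds $[B_t, \mathcal{N}^j] = B_t P(\mathcal{N})$ with $P$ a polynomial of degree $j-1$.

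The crux is a uniform-in-$t$ bound on the Hilbert--Schmidt norm of the symmetric kernel $K_t(x,y) := V(x-y)\varphi_t(x)\varphi_t(y)$. Splitting $V = V_2 + V_\infty$, the $V_\infty$-piece of $\|K_t\|_{L^2(\mathbb{R}^6)}^2$ is trivially bounded by $\|V_\infty\|_{L^\infty}^2 \|\varphi_t\|_{L^2}^4$ via mass conservation, while for the $V_2$-piece I would rewrite the integral as $\int |\varphi_t|^2 (V_2^2 * |\varphi_t|^2)$ and apply Young's inequality (noting $V_2^2 \in L^1$ with $\|V_2^2\|_{L^1} = \|V_2\|_{L^2}^2$) together with the Sobolev embedding $H^1(\mathbb{R}^3)\hookrightarrow L^4(\mathbb{R}^3)$ and Lemma \ref{lem:H^1 bound} to bound it by $C\|V_2\|_{L^2}^2\|\varphi_t\|_{H^1}^4 \leq C$, uniformly in $t$. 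This is the point that distinguishes the argument from Lemma \ref{lem:NjU}: because I estimate the full bilinear kernel rather than $\sup_x\|V(x-\cdot)\varphi_t\|_{L^2}$, I avoid invoking the Strichartz estimate of Lemma \ref{lem:decaying}, and the time-dependence in the exponent improves from $|t-s|^{3/2}$ to $|t-s|$.

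With the kernel bound in hand, a sector-wise Cauchy--Schwarz applied to $(B_t^*\psi)^{(n-2)} = \tfrac{1}{2}\sqrt{n(n-1)}\int K_t(x,y)\,\psi^{(n)}(x,y,\cdot)\,\mathrm{d}x\,\mathrm{d}y$ yields $\|B_t^*\psi\| \leq \tfrac{1}{2}\|K_t\|_{L^2(\mathbb{R}^6)}\|\mathcal{N}\psi\|$. Combining this with a standard shift-of-index argument on the commutator gives, for $0\leq k\leq j-1$,
\[
|\langle \psi_t, B_t \mathcal{N}^k\psi_t\rangle| \leq C_k\|K_t\|_{L^2(\mathbb{R}^6)}\langle \psi_t, (\mathcal{N}+1)^{k+1}\psi_t\rangle,
\]
and similarly for $B_t^*$. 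Summing over $k\leq j-1$ with the polynomial coefficients of $P$ and using $(\mathcal{N}+1)^j \leq 2^j(\mathcal{N}^j + 1)$, I obtain the differential inequality
\[
\tfrac{d}{dt}\langle \psi_t, \mathcal{N}^j\psi_t\rangle \leq C_j\bigl(\langle\psi_t, \mathcal{N}^j\psi_t\rangle + \|\psi\|^2\bigr),
\]
and Gronwall finishes the proof, giving $\langle \psi_t, \mathcal{N}^j\psi_t\rangle \leq Ce^{K|t-s|}\langle \psi, (\mathcal{N}+1)^j\psi\rangle$. The main obstacle is the uniform Hilbert--Schmidt bound in the second paragraph; once that is in place, the rest is a clean one-step Gronwall argument, noticeably simpler than the three-step truncation scheme required for $\mathcal{U}$ in Lemma \ref{lem:NjU}.
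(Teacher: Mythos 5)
Your proposal is correct and follows essentially the same route as the paper's proof: differentiate, note that only the pair-creation/annihilation part of $\mathcal{L}_2$ fails to commute with powers of $\mathcal{N}$, bound the Hilbert--Schmidt norm of $V(x-y)\varphi_t(x)\varphi_t(y)$ uniformly in $t$ by splitting $V=V_2+V_\infty$ and using Young's inequality with Lemma \ref{lem:H^1 bound}, then apply Gronwall. The paper's estimate \eqref{eq:bound_of_HLS_type} is exactly your kernel bound, and your observation that this bypasses the Strichartz estimate and yields the linear exponent $K|t-s|$ matches the paper.
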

	\begin{proof}
		Let $\tilde{\psi}=\tilde{\mathcal{U}}(t;s)\psi$. We have
		\begin{align*}
		\frac{\mathrm{d}}{\mathrm{d} t}\left\langle \tilde{\psi},(\mathcal{N}+1)^{j}\tilde{\psi}\right\rangle = & \left\langle \tilde{\psi},[\mathrm{i}(\mathcal{L}_{2} +\mathcal{L}_{4}),(\mathcal{N}+1)^{j}]\tilde{\psi}\right\rangle \notag\\
		= & {-} \operatorname{Im}\int\mathrm{d} x\mathrm{d} y\,V(x-y)\varphi_{t}(x)\varphi_{t}(y)\left\langle \tilde{\psi},[a_{x}^{*}a_{y}^{*},(\mathcal{N}+1)^{j}]\tilde{\psi}\right\rangle \notag\\
		= & 2\operatorname{Im}\int\mathrm{d} x\mathrm{d} y\,V(x-y)\varphi_{t}(x)\varphi_{t}(y)\left\langle \tilde{\psi},a_{x}^{*}a_{y}^{*}((\mathcal{N}+1)^{j}-(\mathcal{N}+3)^{j})\tilde{\psi}\right\rangle \notag\\
		= & 2 \operatorname{Im}\int\mathrm{d} x\mathrm{d} y\,V(x-y)\varphi_{t}(x)\varphi_{t}(y)\notag\\
		& \times\left\langle (\mathcal{N}+3)^{j/2-1}a_{x}a_{y}\tilde{\psi},(\mathcal{N}+3)^{1-j/2}((\mathcal{N}+1)^{j}-(\mathcal{N}+3)^{j})\tilde{\psi}\right\rangle .
		\end{align*}
		Hence,
		\begin{align*}
		\frac{\mathrm{d}}{\mathrm{d} t}\left\langle \tilde{\psi},(\mathcal{N}+1)^{j}\tilde{\psi}\right\rangle \leq C & \int\mathrm{d} x\mathrm{d} y\,|V(x-y)||\varphi_{t}(x)||\varphi_{t}(y)|\|(\mathcal{N}+3)^{j/2-1}a_{x}a_{y}\tilde{\psi}\|\notag\\
		& \qquad \times\|(\mathcal{N}+3)^{1-j/2}((\mathcal{N}+1)^{j}-(\mathcal{N}+3)^{j})\tilde{\psi}\|\notag\\
		\leq C &\|(\mathcal{N}+3)^{1-j/2}((\mathcal{N}+1)^{j}-(\mathcal{N}+3)^{j})\tilde{\psi}\|\left(\int\mathrm{d} x\mathrm{d} y\,|V(x-y)|^2|\varphi_{t}(x)|^{2}|\varphi_{t}(y)|^{2}\right)^{1/2}\notag\\
		& \times\left(\int\mathrm{d} x\mathrm{d} y\|(\mathcal{N}+3)^{j/2-1}a_{x}a_{y}\tilde{\psi}\|^{2}\right)^{1/2}.
		\end{align*}
		As in the proof of Lemma \ref{lem:decaying}, we decompose $V$ into $V_2 + V_{\infty}$ and estimate them separately. We then have
		\begin{align*}
		&\int\mathrm{d} x\mathrm{d} y\,|V(x-y)|^2|\varphi_{t}(x)|^{2}|\varphi_{t}(y)|^{2}\\
		&\qquad\leq {2} \int\mathrm{d} x\mathrm{d} y\,|V_2(x-y)|^2|\varphi_{t}(x)|^{2}|\varphi_{t}(y)|^{2} 
		+ {2}\int\mathrm{d} x\mathrm{d} y\,|V_\infty(x-y)|^2|\varphi_{t}(x)|^{2}|\varphi_{t}(y)|^{2},
		\end{align*}
		with the bounds
		\begin{equation}
		\begin{aligned}
		\int\mathrm{d} x\mathrm{d} y\,|V_2(x-y)|^2|\varphi_{t}(x)|^{2}|\varphi_{t}(y)|^{2} &\leq C \||V_2|^2\|_{L^1}\||\varphi_t|^2\|_{L^2}^2\\
		&\leq C\|V_2\|_{L^2}^2\|\varphi_t\|_{L^4}^4 \leq C\|V_2\|_{L^2}^2\|\varphi_t\|_{L^2}\|\varphi_t\|_{H^1}^{3},
		\label{eq:bound_of_HLS_type}
		\end{aligned}
		\end{equation}
		and
		\begin{equation*}
		\int\mathrm{d} x\mathrm{d} y\,|V_\infty(x-y)|^2|\varphi_{t}(x)|^{2}|\varphi_{t}(y)|^{2} \leq
		\|V_\infty\|^2_{L^\infty}\|\varphi_t\|_{L^2}^4.
		\end{equation*}
		We also have that
		\begin{align*}
		\int\mathrm{d} x\mathrm{d} y\|(\mathcal{N}+3)^{j/2-1}a_{x}a_{y}\tilde{\psi}\|^{2} & =\int\mathrm{d} x\mathrm{d} y\|a_{x}a_{y}(\mathcal{N}+1)^{j/2-1}\tilde{\psi}\|^{2}\\
		& \leq\|(\mathcal{N}+1)^{j/2}\tilde{\psi}\|^{2}.
		\end{align*}
		Altogether, we have shown that
		\begin{align*}
		\frac{\mathrm{d}}{\mathrm{d} t}\left\langle \tilde{\mathcal{U}}(t;s)\psi,(\mathcal{N}+1)^{j}\tilde{\mathcal{U}}(t;s)\psi\right\rangle
		& \leq C\|(\mathcal{N}+1)^{j/2}\tilde{\mathcal{U}}(t;s)\psi\|^{2}\\
		& =C\left\langle \tilde{\mathcal{U}}(t;s)\psi,(\mathcal{N}+1)^{j}\tilde{\mathcal{U}}(t;s)\psi\right\rangle .
		\end{align*}
		Appyling Gronwall's lemma, we conclude that
		\begin{align*}
		\left\langle \tilde{\mathcal{U}}(t;s)\psi,(\mathcal{N}+1)^{j}\tilde{\mathcal{U}}(t;s)\psi\right\rangle
		\leq Ce^{K|t-s|} \left\langle \psi,(\mathcal{N}+1)^{j}\psi\right\rangle,
		\end{align*}
		which proves the desired lemma.
	\end{proof}
	
	The main difference between the unitary operators $\mathcal{U}$ and $\tilde{\mathcal{U}}$ stems from the generator $\mathcal{L}_3$. In the following lemma, we find an estimate on $\mathcal{L}_3$.
	
	\begin{lem} \label{lem:N_1_L3}
		Suppose that the assumptions in Theorem \ref{thm:Main_Theorem} hold. Then, for any $\psi\in\mathcal{F}$ and $j\in\mathbb{N}$, there exist a constant $C \equiv C(j)$ such that
		\[
		\left\Vert \left(\mathcal{N}+1\right)^{j/2}\mathcal{L}_{3}(t)\psi\right\Vert \leq\frac{C(\|\varphi_{t}\|_{L^{\infty}(\mathbb{R}^{3})}+1)}{\sqrt{N}}\left\Vert \left(\mathcal{N}+1\right)^{\left(j+3\right)/2}\psi\right\Vert .
		\]
	\end{lem}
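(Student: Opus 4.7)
The plan is to split $\mathcal{L}_{3}(t)$ into its two natural summands,
\[
\mathcal{L}_{3}^{(a)}(t) := \frac{1}{\sqrt{N}} \int \mathrm{d}x\, a_x^* \, a^*(V(x-\cdot)\varphi_t)\, a_x, \qquad
\mathcal{L}_{3}^{(b)}(t) := \frac{1}{\sqrt{N}} \int \mathrm{d}x\, a_x^* \, a(V(x-\cdot)\varphi_t)\, a_x,
\]
and estimate each piece separately. Since these operators shift $\mathcal{N}$ by $+1$ and $-1$ respectively, the commutation identities
\[
(\mathcal{N}+1)^{j/2}\mathcal{L}_3^{(a)}(t) = \mathcal{L}_3^{(a)}(t)(\mathcal{N}+2)^{j/2},\qquad (\mathcal{N}+1)^{j/2}\mathcal{L}_3^{(b)}(t) = \mathcal{L}_3^{(b)}(t)\mathcal{N}^{j/2}
\]
reduce matters to the sharp $j=0$ bound $\|\mathcal{L}_3^{(\sharp)}(t)\chi\| \leq (C/\sqrt{N})(\|\varphi_t\|_\infty+1)\|(\mathcal{N}+1)^{3/2}\chi\|$ for any $\chi\in\mathcal{F}$ and $\sharp\in\{a,b\}$.

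To prove this for $\mathcal{L}_3^{(a)}$, write it as $\frac{1}{\sqrt N}\int \mathrm{d}x\, a_x^* b_x$ with $b_x := a^*(V(x-\cdot)\varphi_t)\, a_x$. Expanding
\[
\|\mathcal{L}_3^{(a)}(t)\chi\|^2 = \frac{1}{N}\int \mathrm{d}x\,\mathrm{d}y\,\langle \chi, b_x^* a_x a_y^* b_y \chi\rangle
\]
and applying the canonical commutation $a_x a_y^* = \delta(x-y) + a_y^* a_x$ produces a diagonal term $\frac{1}{N}\int \mathrm{d}x\,\|b_x\chi\|^2$ and an off-diagonal term $\frac{1}{N}\int \mathrm{d}x\,\mathrm{d}y\,\langle a_y b_x\chi, a_x b_y\chi\rangle$. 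For the diagonal piece, apply the annihilation operator bound \eqref{eq:bd-a} to get $\|b_x\chi\|\leq \|V(x-\cdot)\varphi_t\|\,\|(\mathcal{N}+1)^{1/2}a_x\chi\|$ and integrate using the identity $\int\mathrm{d}x\,\|(\mathcal{N}+1)^{1/2}a_x\chi\|^2 = \langle\chi,\mathcal{N}^2\chi\rangle$, which follows from $(\mathcal{N}+1)a_x = a_x \mathcal{N}$. For the off-diagonal piece, apply Cauchy-Schwarz in $(x,y)$ and use the commutator $[a_y,a^*(V(x-\cdot)\varphi_t)] = V(x-y)\varphi_t(y)$ to expand $a_y b_x\chi$ into two summands; the resulting two double integrals are handled by $\int\mathrm{d}x\,\|a_x\chi\|^2 = \|\mathcal{N}^{1/2}\chi\|^2$ and $\int\mathrm{d}x\,\mathrm{d}y\,\|(\mathcal{N}+1)^{1/2}a_y a_x\chi\|^2 \leq \|\mathcal{N}^{3/2}\chi\|^2$. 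All contributions are controlled by $\sup_x\|V(x-\cdot)\varphi_t\|_{L^2}^2\,\|(\mathcal{N}+1)^{3/2}\chi\|^2$, and the already-established pointwise bound $\sup_x \|V(x-\cdot)\varphi_t\|_{L^2} \leq C(\|\varphi_t\|_\infty+1)$ (derived from the splitting $V=V_2+V_\infty$ just before \eqref{eq:VphiLt2Lxinfty}) finishes the estimate. The piece $\mathcal{L}_3^{(b)}$ is treated in exactly the same way, but with the trivial commutator $[a_y, a(V(x-\cdot)\varphi_t)] = 0$ making the off-diagonal step slightly simpler.

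The main bookkeeping challenge is ensuring that only three powers of $(\mathcal{N}+1)^{1/2}$ (and not four) are expended: this requires keeping careful track of which annihilation operator --- whether the one produced by the $a_x a_y^*$ commutator or the intrinsic $a_x$ factor in $b_x$ --- absorbs which power of $(\mathcal{N}+1)^{1/2}$. There is no subtle cancellation; the proof reduces to a careful bookkeeping of canonical commutation relations, Cauchy-Schwarz, and the elementary creation/annihilation operator bounds from \eqref{eq:bd-a}. The singularity of $V$ enters only through the $L^2$ quantity $\sup_x \|V(x-\cdot)\varphi_t\|_{L^2}$, whose integrability in time is exactly what motivated Lemma \ref{lem:decaying}.
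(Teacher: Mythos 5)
Your proof is correct. The decomposition of $\mathcal{L}_3$ into its creation-type and annihilation-type halves and the reduction of the singular potential to the single quantity $\sup_x\|V(x-\cdot)\varphi_t\|_{L^2}\le C(\|\varphi_t\|_{L^\infty}+1)$ are exactly what the paper does; the difference lies in how the operator estimate itself is executed. The paper handles general $j$ in one pass by a duality argument: it tests $(\mathcal{N}+1)^{j/2}A_3(t)\psi$ against an arbitrary $\xi\in\mathcal{F}$, splits the weight $V(x-y)\varphi_t(y)$ onto the $\xi$-side of a single Cauchy--Schwarz in $(x,y)$, and uses $\int\mathrm{d}x\,\|a_x(\mathcal{N}+1)^{-1/2}\xi\|^2\le\|\xi\|^2$ together with $\int\mathrm{d}x\,\mathrm{d}y\,\|a_ya_x\mathcal{N}^{(j+1)/2}\psi\|^2\le\|\mathcal{N}^{(j+3)/2}\psi\|^2$ --- no commutators and no diagonal/off-diagonal splitting are needed. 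You instead pull $(\mathcal{N}+1)^{j/2}$ through to reduce to $j=0$ and then expand $\|\mathcal{L}_3^{(\sharp)}\chi\|^2$ by normal ordering, which requires the $\delta(x-y)$ diagonal term, the commutator $[a_y,a^*(V(x-\cdot)\varphi_t)]=V(x-y)\varphi_t(y)$, and the bookkeeping of which factor absorbs which half-power of $\mathcal{N}$. Both routes are sound and yield the same constant structure (your worry about spending a fourth power of $(\mathcal{N}+1)^{1/2}$ is resolved correctly); the duality argument is shorter and scales to general $j$ without the pull-through step, while your computation is more self-contained in that it never introduces an auxiliary test vector.
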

	
	\begin{proof}
		We basically follow the proof of Lemma 5.3 in \cite{Lee2013} {and apply} the Strichartz's estimate in Lemma \ref{lem:decaying}. Let
		\[
		A_{3}(t)=\int\mathrm{d} x\mathrm{d} y\,V(x-y)\overline{\varphi_t (y)}a_{x}^{*}a_{y}a_{x}.
		\]
		Then, by definition,
		\begin{equation}
		(\mathcal{N}+1)^{j/2}\mathcal{L}_{3}(t)=\frac{ 1}{\sqrt{N}}\left((\mathcal{N}+1)^{j/2}A_{3}(t)+(\mathcal{N}+1)^{j/2}A_{3}^{*}(t)\right).\label{eq:L3}
		\end{equation}
		We estimate $(\mathcal{N}+1)^{j/2}A_{3}(t)$ and $(\mathcal{N}+1)^{j/2}A_{3}^{*}(t)$ separately. The first term, $(\mathcal{N}+1)^{j/2}A_{3}(t)$, satisfies for any $\xi\in\mathcal{F}$ that
		\begin{align}
		& \left|\langle\xi,(\mathcal{N}+1)^{j/2}A_{3}(t)\psi\rangle\right|\notag\\
		& \quad=\left|\int\mathrm{d} x\mathrm{d} y\,V(x-y)\overline{\varphi_{t}(y)}\langle\xi,(\mathcal{N}+1)^{j/2}a_{x}^{*}a_{y}a_{x}\psi\rangle\right|\notag\\
		& \quad=\left|\int\mathrm{d} x\mathrm{d} y\,V(x-y)\overline{\varphi_{t}(y)}\langle(\mathcal{N}+1)^{-1/2}\xi,(\mathcal{N}+1)^{(j+1)/2}a_{x}^{*}a_{y}a_{x}\psi\rangle\right|\notag\\
		& \quad\leq\left(\int\mathrm{d} x\mathrm{d} y\,|V(x-y)|^2|\varphi_{t}(y)|^2\|a_{x}(\mathcal{N}+1)^{-1/2}\xi\|^{2}\right)^{1/2} \left(\int\mathrm{d} x\mathrm{d} y\|a_{y}a_{x}\mathcal{N}^{(j+1)/2}\psi\|^{2}\right)^{1/2}\notag.
		\end{align}
		Then, following the argument used in the proof of Lemma \ref{lem:NjU}, we find that
		\begin{align*}
		\left(\int\mathrm{d} x\mathrm{d} y\,|V(x-y)|^2|\varphi_{t}(y)|^2\|a_{x}(\mathcal{N}+1)^{-1/2}\xi\|^{2}\right)^{1/2} & \leq  \sup_{x}\|V(x-\cdot)\varphi_{t}\|
		\|\xi\| \\
		& \leq C(\|\varphi_{t}\|_{L^{\infty}(\mathbb{R}^{3})}+1)\|\xi\|
		\end{align*}
		Hence,
		\[
		\left|\langle\xi,(\mathcal{N}+1)^{j/2}A_{3}(t)\psi\rangle\right|\leq C(\|\varphi_{t}\|_{L^{\infty}(\mathbb{R}^{3})}+1)\|\xi\|\|\mathcal{N}^{(j+3)/2}\psi\|.
		\]
		Since $\xi$ was arbitrary, we obtain that
		\begin{equation}
		\|(\mathcal{N}+1)^{j/2}A_{3}(t)\psi\|\leq C(\|\varphi_{t}\|_{L^{\infty}(\mathbb{R}^{3})}+1)\|\mathcal{N}^{(j+3)/2}\psi\|.\label{eq:A3}
		\end{equation}
		Similarly, we can find that
		\begin{equation}
		\|(\mathcal{N}+1)^{j/2}A_{3}^{*}(t)\psi\|\leq C(\|\varphi_{t}\|_{L^{\infty}(\mathbb{R}^{3})}+1)\|(\mathcal{N}+2)^{(j+3)/2}\psi\|.\label{eq:conjA3}
		\end{equation}
		Hence, from \eqref{eq:L3}, \eqref{eq:A3}, and \eqref{eq:conjA3} we get
		\[
		\|(\mathcal{N}+1)^{j/2}\mathcal{L}_{3}(t)\psi\|\leq\frac{C(\|\varphi_{t}\|_{L^{\infty}(\mathbb{R}^{3})}+1)}{\sqrt{N}}\|(\mathcal{N}+1)^{(j+3)/2}\psi\|,
		\]
		which was to be proved.
	\end{proof}
	
	Finally, we prove the following lemma on the difference between $\mathcal{U}$ and $\tilde{\mathcal{U}}$.
	
	\begin{lem}\label{lem:NjUphiUtildeUphitildeU}
		Suppose that the assumptions in Theorem \ref{thm:Main_Theorem} hold. Then, for all $j\in\mathbb{N}$, there exist constants $C\equiv C(j)$ and $K \equiv K(t)$
		such that, for any $f\in L^{2}\left(\mathbb{R}^{3}\right)$,
		\[
		\left\Vert \left(\mathcal{N}+1\right)^{j/2}\left(\mathcal{U}^{*}(t)\phi(f)\mathcal{U}(t)-\tilde{\mathcal{U}}^{*}(t)\phi{(f)}\tilde{\mathcal{U}}(t)\right)\Omega\right\Vert \leq \frac{C \| f \| e^{Kt{^{3/2}}}}{\sqrt{N}}.
		\]
	\end{lem}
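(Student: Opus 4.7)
The plan is to exploit the fact that the generators of $\mathcal{U}$ and $\tilde{\mathcal{U}}$ differ only by $\mathcal{L}_3$ and run a Duhamel expansion. Differentiating $s\mapsto \mathcal{U}^{*}(s)\tilde{\mathcal{U}}(s)$ and using $\mathrm{i}\partial_s \mathcal{U}(s) = (\mathcal{L}_2+\mathcal{L}_3+\mathcal{L}_4)(s)\mathcal{U}(s)$ and $\mathrm{i}\partial_s \tilde{\mathcal{U}}(s) = (\mathcal{L}_2+\mathcal{L}_4)(s)\tilde{\mathcal{U}}(s)$ yields
\[
\tilde{\mathcal{U}}(t) - \mathcal{U}(t) \;=\; \mathrm{i}\int_0^t \mathcal{U}(t;s)\,\mathcal{L}_3(s)\,\tilde{\mathcal{U}}(s)\,\mathrm{d}s,
\]
together with its adjoint. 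Substituting these into the algebraic identity
\[
\mathcal{U}^{*}(t)\phi(f)\mathcal{U}(t) - \tilde{\mathcal{U}}^{*}(t)\phi(f)\tilde{\mathcal{U}}(t)
= \bigl(\mathcal{U}^{*}(t) - \tilde{\mathcal{U}}^{*}(t)\bigr)\phi(f)\mathcal{U}(t) + \tilde{\mathcal{U}}^{*}(t)\phi(f)\bigl(\mathcal{U}(t) - \tilde{\mathcal{U}}(t)\bigr)
\]
produces two time integrals, each with a single $\mathcal{L}_3(s)$ factor responsible for the $1/\sqrt{N}$ gain.

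The core of the proof is then to chain the previously established lemmas on each integrand. For the first integrand $\tilde{\mathcal{U}}^{*}(s)\mathcal{L}_3(s)\mathcal{U}^{*}(t;s)\phi(f)\mathcal{U}(t)\Omega$, after applying $(\mathcal{N}+1)^{j/2}$ I would proceed left-to-right: push the weight through $\tilde{\mathcal{U}}^{*}(s)$ by Lemma \ref{lem:tildeNj} (cost $Ce^{Ks/2}$); apply Lemma \ref{lem:N_1_L3} to $\mathcal{L}_3(s)$ (gaining $1/\sqrt{N}$, producing the factor $\|\varphi_s\|_{L^\infty}+1$, and bumping the weight to $(\mathcal{N}+1)^{(j+3)/2}$); push through $\mathcal{U}^{*}(t;s)$ by Lemma \ref{lem:NjU} (cost $Ce^{K(t-s)^{3/2}/2}$ and another bump); handle $\phi(f)$ via the standard $\|\phi(f)\xi\|\leq 2\|f\|\|(\mathcal{N}+1)^{1/2}\xi\|$; and push through $\mathcal{U}(t)$ by Lemma \ref{lem:NjU} one last time to land on $\Omega$, where $\|(\mathcal{N}+1)^m\Omega\|=1$ eliminates all accumulated powers of $\mathcal{N}$. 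The second integrand is estimated in exactly the same manner. Integrating in $s\in[0,t]$ and invoking the Strichartz bound in the form $\int_0^t(\|\varphi_s\|_{L^\infty}+1)\,\mathrm{d}s\leq C(1+t^{3/2})$ from Lemma \ref{lem:decaying} absorbs the singular factor $\|\varphi_s\|_{L^\infty}$ into a harmless $e^{Kt^{3/2}}$, and yields the claimed bound.

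The main obstacle I anticipate is simply careful bookkeeping of the $\mathcal{N}$-weights. Each invocation of Lemma \ref{lem:NjU} converts $\mathcal{N}^{j}$ into $(\mathcal{N}+1)^{2j+2}$ on the right, and Lemma \ref{lem:N_1_L3} raises the weight by three half-powers; so one must verify that the final accumulated exponent of $(\mathcal{N}+1)$ stays finite, which it does because the chain terminates at $\Omega$. A secondary, cosmetic issue is combining the three exponential factors $e^{Ks/2}$, $e^{K(t-s)^{3/2}/2}$, $e^{Kt^{3/2}/2}$ arising along the chain into a single $e^{Kt^{3/2}}$ (with a redefined $K$) valid uniformly in $s\in[0,t]$, so that the $s$-integration produces the stated form of the bound.
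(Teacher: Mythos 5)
Your proposal is correct and follows essentially the same route as the paper: the paper also splits the difference into $\mathcal{R}_1(f)=(\mathcal{U}^*(t)-\tilde{\mathcal{U}}^*(t))\phi(f)\tilde{\mathcal{U}}(t)$ and $\mathcal{R}_2(f)=\mathcal{U}^*(t)\phi(f)(\mathcal{U}(t)-\tilde{\mathcal{U}}(t))$ (your telescoping puts the two evolutions in the opposite slots, which is immaterial), expands the difference of unitaries by Duhamel with the single $\mathcal{L}_3$ insertion, and chains Lemmas \ref{lem:NjU}, \ref{lem:tildeNj}, \ref{lem:N_1_L3}, the bound \eqref{eq:bd-a} for $\phi(f)$, and the Strichartz integral \eqref{eq:VphiLt2Lxinfty} exactly as you describe.
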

	\begin{proof}
		We follow the proof of Lemma 5.4 in \cite{Lee2013}. Let
		\[
		\mathcal{R}_{1}(f):=\left(\mathcal{U}^{*}(t)-\tilde{\mathcal{U}}^{*}(t)\right)\phi(f)\tilde{\mathcal{U}}(t)
		\]
		and
		\[
		\mathcal{R}_{2}(f):={\mathcal{U}^{*}}(t)\phi(f)\left(\mathcal{U}(t)-\tilde{\mathcal{U}}(t)\right)
		\]
		so that
		\begin{equation} \label{eq:R_1}
		\mathcal{U}^{*}(t)\phi(f)\mathcal{U}(t)-\tilde{\mathcal{U}}^{*}(t)\phi(f)\tilde{\mathcal{U}}(t)=\mathcal{R}_{1}(f)+\mathcal{R}_{2}(f).
		\end{equation}
		We begin by estimating the first term in the right-hand side of \eqref{eq:R_1}.  From Lemma \ref{lem:NjU},
		\begin{align*}
		\left\Vert (\mathcal{N}+1)^{j/2}\mathcal{R}_{1}(f)\Omega\right\Vert  & =\left\Vert \int_{0}^{t}\mathrm{d} s(\mathcal{N}+1)^{j/2}\mathcal{U}^{*}(s;0)\mathcal{L}_{3}(s)\tilde{\mathcal{U}}^{*}(t;s)\phi(f)\tilde{\mathcal{U}}(t)\Omega\right\Vert \\
		& \leq\int_{0}^{t}\mathrm{d} s\left\Vert (\mathcal{N}+1)^{j/2}\mathcal{U}^{*}(s;0)\mathcal{L}_{3}(s)\tilde{\mathcal{U}}^{*}(t;s)\phi(f)\tilde{\mathcal{U}}(t)\Omega\right\Vert \\
		& \leq Ce^{Kt{^{3/2}}}\int_{0}^{t}\mathrm{d} s\left\Vert (\mathcal{N}+1)^{j+1}\mathcal{L}_{3}(s)\tilde{\mathcal{U}}^{*}(t;s)\phi(f)\tilde{\mathcal{U}}(t)\Omega\right\Vert.
		\end{align*}
		From Lemma \ref{lem:N_1_L3} and the inequality \eqref{eq:VphiLt2Lxinfty},
		\begin{align*}
		\left\Vert (\mathcal{N}+1)^{j/2}\mathcal{R}_{1}(f)\Omega\right\Vert & \leq\frac{Ce^{Kt{^{3/2}}}}{\sqrt{N}}\int_{0}^{t}\mathrm{d} s(\|\varphi_ s\|_{L^{\infty}(\mathbb{R}^{3})}+1)\left\Vert (\mathcal{N}+1)^{j+(5/2)}\tilde{\mathcal{U}}^{*}(t;s)\phi(f)\tilde{\mathcal{U}}(t)\Omega\right\Vert \\
		& \leq\frac{Ce^{Kt{^{3/2}}}}{\sqrt{N}}\int_{0}^{t}\mathrm{d} s(\|\varphi_ s\|_{L^{\infty}(\mathbb{R}^{3})}+1)\left\Vert (\mathcal{N}+1)^{j+(5/2)}\phi(f)\tilde{\mathcal{U}}(t)\Omega\right\Vert \\
		& \leq\frac{Ce^{Kt{^{3/2}}}}{\sqrt{N}}\left(\int_{0}^{t}\mathrm{d} s\left\Vert (\mathcal{N}+1)^{j+(5/2)}\phi(f)\tilde{\mathcal{U}}(t)\Omega\right\Vert ^{2}\right)^{1/2},
		\end{align*}
		and, since the integrand in the right-hand side does not depend on $s$, we get
		\begin{align*}
		\left\Vert (\mathcal{N}+1)^{j/2}\mathcal{R}_{1}(f)\Omega\right\Vert \leq\frac{Ce^{Kt{^{3/2}}}}{\sqrt{N}}\left\Vert (\mathcal{N}+1)^{j+5/2}\phi(f)\tilde{\mathcal{U}}(t)\Omega\right\Vert.
		\end{align*}
		Thus, from Lemma \ref{lem:tildeNj}, we obtain for $\mathcal{R}_{1}(f)$ that
		\begin{align*}
		& \left\Vert (\mathcal{N}+1)^{j/2}\mathcal{R}_{1}(f)\Omega\right\Vert \\
		& \quad\leq\frac{Ce^{Kt{^{3/2}}}}{\sqrt{N}}\left(\|a(f)(\mathcal{N}+1)^{j+(5/2)}\tilde{\mathcal{U}}(t)\Omega\|+\|a^{*}(f)(\mathcal{N}+1)^{j+(5/2)}\tilde{\mathcal{U}}(t)\Omega\|\right)\\
		& \quad\leq\frac{C\|f\|e^{Kt{^{3/2}}}}{\sqrt{N}}\|(\mathcal{N}+1)^{j+{3}}\tilde{\mathcal{U}}(t)\Omega\| \leq\frac{C\|f\|e^{Kt{^{3/2}}}}{\sqrt{N}}\|(\mathcal{N}+1)^{j+{3}}\Omega\|\leq\frac{C\|f\|e^{Kt{^{3/2}}}}{\sqrt{N}}.
		\end{align*}
		The study of $\mathcal{R}_{2}(f)$ is similar and we can again obtain that
		\[
		\|(\mathcal{N}+1)^{j/2}\mathcal{R}_{2}(f)\Omega\|\leq\frac{C\|f\|e^{Kt{^{3/2}}}}{\sqrt{N}}.
		\]
		This completes the proof of the desired lemma.
	\end{proof}
	
	\subsection{Proof of Propositions \ref{prop:Et1} and \ref{prop:Et2}} \label{sec:Pf-of-Props}
	
	In this section, we prove Propositions \ref{prop:Et1} and \ref{prop:Et2} by applying the lemmas proved in Subsection \ref{subsec:comparison}.
	
	\begin{proof}[Proof of Proposition \ref{prop:Et1}]
		Recall that
		\[
		E_{t}^{1}(J)=\frac{d_{N}}{N}\left\langle W^{*}(\sqrt{N}\varphi)\frac{(a^{*}(\varphi))^{N}}{\sqrt{N!}}\Omega,\mathcal{U}^{*}(t)d\Gamma(J)\mathcal{U}(t)\Omega\right\rangle
		\]
		We begin by
		\begin{align}
		\left|E_{t}^{1}(J)\right|	&=
		\left|\frac{d_{N}}{N}\left\langle W^{*}(\sqrt{N}\varphi)\frac{(a^{*}(\varphi))^{N}}{\sqrt{N!}}\Omega,\mathcal{U}^{*}(t)d\Gamma(J)\mathcal{U}(t)\Omega\right\rangle \right| \label{eq:E_t^1 1}\\
		&\leq\frac{d_{N}}{N}\left\Vert (\mathcal{N}+1)^{-\frac{1}{2}}W^{*}(\sqrt{N}\varphi)\frac{(a^{*}(\varphi))^{N}}{\sqrt{N!}}\Omega\right\Vert \left\Vert (\mathcal{N}+1)^{\frac{1}{2}}\mathcal{U}^{*}(t)d\Gamma(J)\mathcal{U}(t)\Omega\right\Vert \notag
		\end{align}
		From Lemma \ref{lem:coherent_all},
		\begin{equation} \label{eq:E_t^1 2}
		\left\Vert (\mathcal{N}+1)^{-\frac{1}{2}}W^{*}(\sqrt{N}\varphi)\frac{(a^{*}(\varphi))^{N}}{\sqrt{N!}}\Omega\right\Vert 	\leq\frac{C}{d_{N}}.
		\end{equation}
		By successively applying Lemma \ref{lem:NjU} (and also using the inequality \eqref{eq:J-bd}), we also get
		\begin{align}
		\left\Vert (\mathcal{N}+1)^{\frac{1}{2}}\mathcal{U}^{*}(t)d\Gamma(J)\mathcal{U}(t)\Omega\right\Vert 	&\leq Ce^{Kt{^{3/2}}}\left\Vert (\mathcal{N}+1)^{2}d\Gamma(J)\mathcal{U}(t)\Omega\right\Vert \leq C\left\Vert J\right\Vert e^{Kt{^{3/2}}}\left\Vert (\mathcal{N}+1)^{3}\mathcal{U}(t)\Omega\right\Vert \notag \\
		&\leq C\left\Vert J\right\Vert e^{Kt{^{3/2}}}\left\Vert (\mathcal{N}+1)^{7}\Omega\right\Vert. \label{eq:E_t^1 3}
		\end{align}
		Thus, from \eqref{eq:E_t^1 1}, \eqref{eq:E_t^1 2}, and \eqref{eq:E_t^1 3},
		\[
		\left|E_{t}^{1}(J)\right|\leq\frac{C\|J\|e^{Kt{^{3/2}}}}{N},
		\]
		which proves the desired result.
	\end{proof}
	
	For the proof of Proposition \ref{prop:Et2}, we take almost verbatim copy of the proof of Lemma 4.2 in \cite{Lee2013}. To make the paper self-contained, we write it in detail below.
	
	\begin{proof}[Proof of Proposition \ref{prop:Et2}]
		
		Recall the definitions of $\mathcal{R}_1$ and $\mathcal{R}_2$ in the proof of Lemma \ref{lem:NjUphiUtildeUphitildeU}. Let $\mathcal{R} = \mathcal{R}_1 + \mathcal{R}_2$ so that
		\[
		\mathcal{R}(f)=\mathcal{U}^{*}(t)\phi(f)\mathcal{U}(t)-\tilde{\mathcal{U}}^{*}(t)\phi(f)\tilde{\mathcal{U}}(t).
		\]
		From the parity conservation \eqref{eq:Parity_Consevation},
		\[
		P_{2k}\tilde{\mathcal{U}}^{*}(t)\phi(J\varphi_{t})\tilde{\mathcal{U}}(t)\Omega=0
		\]
		for all $k=0,1,{{\dots}}$. (See Lemma 8.2 in \cite{Lee2013} for more detail.) Thus,
		\begin{align}
		\left|E_{t}^{2}(J)\right| & =\frac{d_{N}}{\sqrt{N}}\left\langle \frac{(a^{*}(\varphi))^{N}}{\sqrt{N!}}\Omega,W^{*}(\sqrt{N}\varphi)\mathcal{\tilde{U}}^{*}(t)\phi(J\varphi_{t})\mathcal{\tilde{U}}(t)\Omega\right\rangle \notag \\
		& \qquad+\frac{d_{N}}{\sqrt{N}}\left\langle \frac{(a^{*}(\varphi))^{N}}{\sqrt{N!}}\Omega,W^{*}(\sqrt{N}\varphi)\mathcal{R}(J\varphi_{t})\Omega\right\rangle \notag \\
		& \leq\frac{d_{N}}{\sqrt{N}}\left\Vert \sum_{k=1}^{\infty}(\mathcal{N}+1)^{-\frac{5}{2}}P_{2k-1}W^{*}(\sqrt{N}\varphi)\frac{(a^{*}(\varphi))^{N}}{\sqrt{N!}}\Omega\right\Vert \left\Vert (\mathcal{N}+1)^{\frac{5}{2}}\mathcal{\tilde{U}}^{*}(t)\phi(J\varphi_{t})\tilde{\mathcal{U}}(t)\Omega\right\Vert \notag \\
		& \qquad+\frac{d_{N}}{\sqrt{N}}\left\Vert (\mathcal{N}+1)^{-\frac{1}{2}}W^{*}(\sqrt{N}\varphi)\frac{(a^{*}(\varphi))^{N}}{\sqrt{N!}}\Omega\right\Vert \left\Vert (\mathcal{N}+1)^{\frac{1}{2}}\mathcal{R}(J\varphi_{t})\Omega\right\Vert \label{eq:e_t^2 1}
		\end{align}
		Let $K=\frac{1}{2}N^{1/3}$ so that Lemmas \ref{lem:coherent_all} and \ref{lem:coherent_even_odd} show that
		\begin{align*}
		& \left\Vert \sum_{k=1}^{\infty}(\mathcal{N}+1)^{-\frac{5}{2}}P_{2k-1}W^{*}(\sqrt{N}\varphi)\frac{(a^{*}(\varphi))^{N}}{\sqrt{N!}}\Omega\right\Vert ^{2}\\
		& \qquad\leq\sum_{k=1}^{K}\left\Vert (\mathcal{N}+1)^{-\frac{5}{2}}P_{2k-1}W^{*}(\sqrt{N}\varphi)\frac{(a^{*}(\varphi))^{N}}{\sqrt{N!}}\Omega\right\Vert ^{2}\\
		& \qquad\qquad+\frac{1}{K^{4}}\sum_{k=K}^{\infty}\left\Vert(\mathcal{N}+1)^{-1/2} P_{2k-1}W^{*}(\sqrt{N}\varphi)\frac{(a^{*}(\varphi))^{N}}{\sqrt{N!}}\Omega\right\Vert ^{2}\\
		& \qquad\leq\left(\sum_{k=1}^{K}\frac{C}{k^{2}d_{N}^{2}N}\right)+\frac{C}{N^{4/3}}\left\Vert(\mathcal{N}+1)^{-1/2} W^{*}(\sqrt{N}\varphi)\frac{(a^{*}(\varphi))^{N}}{\sqrt{N!}}\Omega\right\Vert \leq\frac{C}{d_{N}^{2}N}.
		\end{align*}
		Using Lemma \ref{lem:tildeNj},
		\begin{alignat*}{1}
		& \left\Vert (\mathcal{N}+1)^{\frac{5}{2}}\mathcal{\tilde{U}}^{*}(t)\phi(J\varphi_{t})\tilde{\mathcal{U}}(t)\Omega\right\Vert \leq Ce^{Kt{^{3/2}}}\left\Vert (\mathcal{N}+1)^{\frac{5}{2}}\phi(J\varphi_{t})\tilde{\mathcal{U}}(t)\Omega\right\Vert \\
		& \quad\leq C\|J\varphi_{t}\|e^{Kt{^{3/2}}}\left\Vert (\mathcal{N}+1)^{3}\mathcal{\tilde{U}}(t)\Omega\right\Vert \leq C\|J\|e^{Kt{^{3/2}}}\left\Vert (\mathcal{N}+1)^{3}\Omega\right\Vert \leq C\|J\|e^{Kt{^{3/2}}}.
		\end{alignat*}
		For the second term in the right-hand side {of \eqref{eq:e_t^2 1}}, we use Lemmas \ref{lem:coherent_all} and \ref{lem:NjUphiUtildeUphitildeU}, where we put $J\varphi_{t}$ in the place of $f$ for the latter. Altogether, we have
		\begin{equation*}
		\left\Vert (\mathcal{N}+1)^{j/2}\mathcal{R}(f)\Omega\right\Vert \leq \frac{C\|f\|e^{Kt{^{3/2}}}}{N},
		\end{equation*}
		which is the desired conclusion.
	\end{proof}

	\section*{Acknowledgements}
	
	We would like to thank Benjamin Schlein for helpful discussion. We are also grateful to the anonymous referee for carefully reading our manuscript and providing helpful comments. Li Chen is supported by Deutsche Forschungsgemeinschaft (DFG) Grant CH 955/4-1. Ji Oon Lee and Jinyeop Lee are supported in part by the Basic Science Research Program through the National Research Foundation of Korea grant 2011-0013474.

	\bibliographystyle{myplain}
	\bibliography{refs}
	
\end{document}